\documentclass[a4paper,11pt]{article}
\pdfoutput=1 % if your are submitting a pdflatex (i.e. if you have
\usepackage{amsmath} 
\allowdisplaybreaks
\usepackage{jheppub} % for details on the use of the package, please
\usepackage{empheq}
\usepackage{bm}
\usepackage{slashed}
\usepackage[T1]{fontenc} % if needed
\usepackage[all]{xy}
\usepackage{rotating}
\usepackage{mathtools}
\usepackage{bbm}% mathbb
\usepackage{dcolumn}
\usepackage{multirow}
\usepackage{mathrsfs}% mathscr
\usepackage{arydshln}% dashed line
\usepackage{verbatim}% comment
\usepackage{graphics}
\usepackage{longtable}
\usepackage{tcolorbox}

\def\cT{\mathcal{T}}

\def\diag{\mathop{\rm diag}\nolimits}

\usepackage{slashed}

\def\a{\alpha}
\def\b{\beta}

\def\CD{{\cal D}}

\def\CH{{\cal H}}
\def\CI{{\cal I}}
\def\CJ{{\cal J}}
\def\CK{{\cal K}}
\def\CL{{\cal L}}

\def\CN{{\cal N}}
\def\CO{{\cal O}}

\def\CT{{\cal T}}

\def\CW{{\cal W}}

\def\BZ{{\mathbb Z}}

\def\beq#1\eeq{\begin{align}#1\end{align}}

\let\g=\gamma \let\d=\delta

\newcommand{\be}{\begin{equation}}
	\newcommand{\ee}{\end{equation}}
\newcommand{\ba}{\begin{align}}
	\newcommand{\ea}{\end{align}}%Verbatim cannot recognize this command. 

\newcommand{\bi}{\begin{itemize}}
	\newcommand{\ei}{\end{itemize}}

\makeatletter
\newcommand*{\rom}[1]{\expandafter\romannumeral #1}
\makeatother%For Roman Numeral

\usepackage{amsthm}
\theoremstyle{plain} % Italic body text
\newtheorem{theorem}{Theorem}[section]
\newtheorem{lemma}[theorem]{Lemma} % Lemmas share numbering with Theorems
\usepackage{rotating}
{\preprint{}}

\title{\boldmath Refined 3D index}

	\abstract{We introduce a refined version of the 3D index for 3-manifolds, building on the construction of the 3D $\mathcal{N}=2$ gauge theory $T[M]$ by Dimofte--Gaiotto--Gukov and Gang--Yonekura. The refined index is a superconformal index of $T[M]$ equipped with additional gradings that capture enhanced flavor symmetries of the effective theory. Our construction is based on a Dehn surgery presentation of $M$ in terms of an ideally triangulated link complement $N$. We derive an explicit infinite-sum formula for the refined index and provide nontrivial checks in representative examples, supporting its invariance under changes of triangulation, Dehn surgery presentation, and other auxiliary data. As a strictly stronger invariant, the refined index enables finer distinctions among 3-manifolds and among distinct IR phases of the associated gauge theories. We also introduce a computational tool, \textsc{Refined Index Calculator}, for its explicit evaluation.
    }
	
	\author[a]{Dongmin Gang,}
    \author[a]{Kibok Jeong,}
	\author[a]{Taeyoon Kim,}
    \author[a]{Soochang Lee}
	\affiliation[a]{
		Department of Physics and Astronomy $\&$ Center for Theoretical Physics,
		\\
		Seoul National University, 1 Gwanak-ro, Seoul 08826, Korea}
	\emailAdd{arima275@snu.ac.kr}
 	\emailAdd{boki0322@snu.ac.kr}
    \emailAdd{donpotleon@snu.ac.kr}
    \emailAdd{physicsmp1217@snu.ac.kr}
	
\begin{document} 
		\maketitle
		\flushbottom	
		
	\section{Introduction}
Quantum field theory and string/M-theory provide physical frameworks in which many mathematical objects are naturally realized. Building on this physical realization, a broad and systematic interplay between modern mathematics and quantum field theory/string theory has developed. The Jones polynomial is one such example, first discovered by mathematicians and later realized physically in $SU(2)$ Chern–Simons theory \cite{jones1985polynomial,kauffman1987state,zbMATH04092352}. Using this realization, the Jones polynomial was generalized to the colored Jones polynomial and the Witten–Reshetikhin–Turaev (WRT) invariants \cite{reshetikhin1991invariants}. The connection between the Jones polynomial and Chern–Simons theory gave rise to the subfield of quantum topology, which remains an active area of research.

The 3D index \cite{Dimofte:2011py,Gang:2018wek,Celoria:2025lqm} is another example of the relatively recent interplay between quantum field theory, string/M-theory, and low-dimensional topology. It was originally introduced as the superconformal index \cite{Kinney:2005ej,Kim:2009wb,Imamura:2011su} of the 3D $\CN=2$ gauge theory $T[M]$ associated with a 3-manifold $M$ \cite{Terashima:2011qi,Dimofte:2011ju}. Equivalently, through the 3D--3D correspondence, it can be identified with the partition function of $SL(2,\mathbb{C})$ Chern--Simons theory on $M$ at quantized Chern--Simons level $k=0$. The gauge theory $T[M]$ is expected to arise from the physics of two coincident M5-branes wrapped on the 3-manifold.  Its IR behavior depends only on the choice of the 3-manifold, although its UV gauge theory description depends on additional auxiliary data, such as a Dehn surgery presentation and choices of ideal triangulations. This provides a systematic way of understanding 3D $\CN=2$ dualities from topological operations on 3-manifolds, such as Pachner 2--3 moves and Kirby moves. Since the superconformal index is an RG-invariant quantity and depends only on the IR physics of the gauge theory, the 3D index is expected to be a 3-manifold invariant. Since its introduction, the 3D index has also been found to arise in a variety of other contexts, such as normal surface counting \cite{GHHR16} and the resurgence of complex Chern--Simons theory \cite{Garoufalidis:2020nut}.

In this work, we propose a refined version of the 3D index. As emphasized in \cite{Gang:2018wek}, the 3D gauge theories \(T[M]\) can exhibit additional accidental symmetries that are not directly visible in the M5-brane construction. The original 3D index is the superconformal index of \(T[M]\), computed using only the symmetries manifest in the M5-brane description. Our refined 3D index incorporates additional quantum numbers associated with these accidental symmetries. There are two main sources of such additional symmetries. The first arises from an insufficient number of admissible superpotential deformations in the effective 3D theory associated with a link complement \(N\), which is introduced in the Dehn surgery representation of \(M\) in \eqref{Dehn filling rep of M}. For a given gauge group and Chern--Simons levels of the effective theory, the possible superpotential deformations compatible with the \(U(1)_R\) symmetry in \eqref{R_geo}, inherited from the wrapped M5-brane system, can be analyzed using field-theoretic methods. Even after turning on all such allowed deformations, the resulting theory may still possess an additional symmetry. The second source originates from a refinement of the Dehn filling operation with non-integer slope. In this construction, the Dehn filling corresponds to coupling the \(\CN=2\) gauge theory associated with a link complement to  \(\CN=4\) SCFTs determined by the Dehn filling slopes. The \(\CN=4\) supersymmetry is broken to \(\CN=2\) by the coupling; however, the Cartan subalgebra of the \(SO(4)\) R-symmetry is preserved. One linear combination is identified with the \(U(1)\) R-symmetry, while the other, denoted by \(U(1)_A\), gives rise to an additional refinement. These two types of refinements are intertwined under changes of the Dehn surgery representation.
\paragraph{Why refinements?}
\begin{enumerate}
  \item \textbf{Stronger invariants.}  
The refined index can distinguish 3-manifolds that are indistinguishable by the unrefined index, and thus provides a strictly more informative invariant. For example, the 3D index for some non-hyperbolic 3-manifolds becomes extremely simple. In particular, for Seifert-fibered manifolds with three exceptional fibers, the 3D index reduces to the trivial values $1$ or $2$ and therefore cannot distinguish among them. The refined 3D index, however, is not necessarily trivial for such manifolds and can distinguish some of them, although it still does not provide a complete invariant for non-hyperbolic 3-manifolds.

 \item \textbf{Regularization.}  
For some non-hyperbolic 3-manifolds, the unrefined 3D index does not converge as a formal power series in $q^{1/2}$. In some cases, the refined index resolves this non-convergence by incorporating additional quantum numbers into the counting.

\item \textbf{Better probing of the IR physics of $T[M]$.} For some non-hyperbolic 3-manifolds, such as the Seifert-fibered manifolds mentioned above, the 3D index becomes trivial. There are two possibilities. One is that the $T[M]$ theory has a mass gap and flows to a unitary TQFT in the IR. The other is that the $T[M]$ theory experiences supersymmetry enhancement to a $\CN=4$ superconformal field theory, specifically a rank-0 SCFT \cite{Gang:2018huc,Gang:2021hrd}. In this case, the 3D index computes the superconformal index at the A- or B-twisting point, which counts Coulomb- or Higgs-branch operators. The refined 3D index distinguishes between these two distinct IR phases.
\end{enumerate}
One notable feature of the refined 3D index is that the number of refinements depends on the 3-manifold. In the examples studied in this work, we find that the number of independent refinements ranges from $0$ to $3$. It is natural to expect that there exist 3-manifolds admitting a larger number of refinements.
Furthermore, our construction depends on several auxiliary choices, such as the Dehn surgery representation and the choice of ideal triangulation. In the construction of the 3D gauge theory, each such choice corresponds to a distinct IR duality frame. The number of refinements can vary depending on these choices. This reflects the fact that the manifest UV flavor symmetry can differ between duality frames. Nevertheless, among all such choices, there typically exists a distinguished one that maximizes the number of available refinements. This choice corresponds to a duality frame in which the full IR flavor symmetry, at least its Cartan subalgebra, is manifest in the UV.  For other choices, the corresponding refined index can be obtained from that of the distinguished one by turning off a subset of the refinement parameters. 
A more precise formulation of these statements is given in the main conjecture \eqref{main conjecture}.

The rest of the paper is organized as follows. In section~\ref{sec : Gauge theory TM-main}, we review—along with some clarifications and improvements—the construction of the 3D gauge theory associated with a 3-manifold, with particular emphasis on the structure of its flavor symmetry.
In section~\ref{sec : refined 3D index-main}, which constitutes the main part of the paper, we present a mathematical expression for the refined 3D index and formulate our main conjecture \eqref{main conjecture}. We also reformulate this expression as a refined counting of normal surfaces, and prove that the refined 3D index is invariant under Pachner $2\text{--}3$ moves of of the ideal triangulation. 
In section~\ref{sec : examples}, we provide explicit examples of refined 3D index computations and verify the conjecture in these cases.
In the appendices, we collect various technical details and include a brief introduction to the application \textsc{Refined Index Calculator} (with a link for download), which computes the refined 3D index.

    \section{3-Manifolds and 3D Gauge Theories} \label{sec : Gauge theory TM-main}
    In this section, we review the construction of 3D gauge theories labeled by 3-manifolds in \cite{Dimofte:2011ju,Gang:2018wek}, with emphasis on their flavor symmetries. In the next section, we construct a refined 3D index that computes the superconformal index of the theory in the presence of additional flavor fugacities.
    
    \subsection{3-Manifolds from Dehn Filling}
Let $M$ be a $3$-manifold obtained from a  $3$-manifold $N$ whose boundary consists of $(n+d)$ torus components,
\begin{align}
\partial N
&=
T^2_1 \sqcup \cdots \sqcup T^2_n
\sqcup T^2_{n+1} \sqcup \cdots \sqcup T^2_{n+d}\;,
\end{align}
by performing Dehn filling along the last $d$ boundary components:
\begin{align}
\mathcal{D}\;:\;
M
=
N_{[P_1\gamma_1 + Q_1\delta_1],\,\ldots,\,[P_d\gamma_d + Q_d\delta_d]}
\;.
\label{Dehn filling rep of M}
\end{align}
We denote a basis of
\(
H_1(\partial N;\mathbb{Z})
=
\bigoplus_{i=1}^{n+d} H_1(T_i^2;\mathbb{Z})
\)
as follows:
\begin{align}
\begin{split}
&(\alpha_i,\beta_i)_{i=1,\ldots,n}
\;:\;
\text{a primitive basis of }
H_1(T_i^2;\mathbb{Z}),
\\
&(\gamma_I,\delta_I)_{I=1,\ldots,d}
\;:\;
\text{a primitive basis of }
H_1(T_{n+I}^2;\mathbb{Z}).
\end{split}
\end{align}
For each $I=1,\ldots,d$, the primitive class
\begin{align}
[P_I\gamma_I + Q_I\delta_I]
\in H_1(T^2_{n+I};\mathbb{Z})/\mathbb{Z}_2
\end{align}
specifies the slope along which the Dehn filling is performed.

\subsection{3D $T[M;\vec{\alpha} ]$ theory}
  We present an explicit field-theoretic construction of 3D supersymmetric gauge theories labeled by a 3-manifold \(M\) and a collection of primitive boundary 1-cycles \(\vec{\alpha} = (\alpha_1, \ldots, \alpha_n)\), following \cite{Dimofte:2011ju,Gang:2018wek,Gang:2025ykf}.

  \subsubsection{$T[M;\vec{\a}]$ from 6D  theory}
  Before presenting a bottom-up 3D field-theoretic description of the theory 
 $T[M;\vec{\alpha}]$, we first give a top-down definition in terms of the compactification of a 6D theory. This 6D compactification picture provides useful insight into the IR physics of the 3D theory.
  
For a closed three-manifold $\widetilde{M}$ and knots
$K_1,\ldots,K_n \subset \widetilde{M}$, we define $\mathbf{T}[\widetilde{M}; K_1,\ldots,K_n]$ as
\begin{align}
\begin{split}
&\Bigl(
\text{6D $A_1$ $(2,0)$ theory on $\mathbb{R}^{1,2}\times \widetilde{M}$ with regular BPS}
\\
&\quad
\text{codimension-two defects along }
\mathbb{R}^{1,2}\times K_1,\,\ldots,
\mathbb{R}^{1,2}\times K_{n}
\Bigr)
\\
&\xrightarrow{\;\; \mathrm{size}(\widetilde{M})\to 0\;\;}
\text{3D theory }
\mathbf{T}[\widetilde{M}; K_1,\ldots,K_n]
\text{ on }\mathbb{R}^{1,2}\; . \label{T[M] from 6D}
\end{split}
\end{align}
The 6D theory describes the low-energy world-volume theory of two coincident M5-branes in M-theory, for which no explicit Lagrangian description is known. In the dimensional reduction, we perform a partial topological twist along the
internal manifold $\widetilde{M}$ using an $SO(3)$ subgroup of the $SO(5)$
R-symmetry of the six-dimensional theory. This twisting preserves one quarter of the supersymmetry, and the resulting
three-dimensional theory generically has $\CN=2$ supersymmetry with
$U(1)_R \cong SO(2)$ R-symmetry.
The R-symmetry originates from the $SO(2)$ subgroup of the $SO(5)$
R-symmetry of the six-dimensional theory that survives the topological
twisting. In the 3D effective field theory, this R-symmetry can mix
with other abelian flavor symmetries. We denote the $SO(2)$ R-symmetry
descending from the six-dimensional theory by $SO(2)_{\rm geo}$ and denote its generator (or charge) by
\begin{align}
R_{\rm geo}
:\; \text{generator of the } SO(2)_{\rm geo}
\subset SO(2)\times SO(3) \subset SO(5)
\label{R_geo}
\end{align}
The resulting 3D \(\CN=2\) theory also exhibits an \(SU(2)^n\) flavor symmetry associated with the regular codimension-two defects, with corresponding moment maps \(\boldsymbol{\mu}_{i=1,\ldots,n}\) transforming in the adjoint representation. As in the case of the 6D theory itself, whose intrinsic description is rather subtle but becomes more tractable upon reduction on \(S^1\) to 5D maximally supersymmetric Yang--Mills theory, the regular BPS codimension-two defects can likewise be more clearly understood after such an \(S^1\) reduction. Under this compactification, each codimension-two defect is realized by coupling a 3D \(\mathcal{N}=4\) \(T[SU(2)]\) theory \cite{Gaiotto:2008ak} to the 5D bulk. The \(T[SU(2)]\) theory has \(SU(2)_H \times SU(2)_C\) flavor symmetry: the \(SU(2)_H\) factor is gauged by the bulk fields, while the remaining \(SU(2)_C\) provides the \(SU(2)\) flavor symmetry associated with the defect. The moment maps \(\boldsymbol{\mu}\) originate from those of the \(SU(2)_C\) symmetry of the \(T[SU(2)]\) theory.

Since the resulting 3D theory has only $\CN=2$ supersymmetry, the existence of a
moment map for each $SU(2)$ flavor symmetry is not guaranteed; we will give a
criterion for their existence below. When absent, the corresponding moment map
is taken to vanish. We then define
\begin{align}
\begin{split}
T[\widetilde{M}; K_1,\ldots,K_{n}]
\;:=\;
\mathbf{T}[\widetilde{M}; K_1,\ldots,K_{n}]
\;\text{deformed by the superpotential}\;
\CW = \sum_{i=1}^{n} \mu_i^{3}\;, \label{T[M] from 6D-2}
\end{split}
\end{align}
where $\mu_i^{3}$ denotes the Cartan component of the moment map
$\boldsymbol{\mu}_i = \sum_{a=1}^{3} \mu_i^{a} t^{a}$, with $\{t^{a}\}$ the
generators of $\mathfrak{su}(2)$.
Upon this deformation, the $SU(2)_i$ flavor symmetry is broken to $U(1)_i$
whenever the moment map $\boldsymbol{\mu}_i$ exists. In \cite{Gang:2018wek}, it is argued that the theory $T[M; \vec{\alpha}]$  corresponds to
\begin{align}
\begin{split}
&T[M; \alpha_1, \ldots, \alpha_n]
= T[\widetilde{M}; K_1,\ldots,K_{n}]\;,
\\
&\text{where the two sets of topological data, $(M,\vec{\a})$ and $(\widetilde{M};\vec{K})$, are related by}
\\
&\widetilde{M}
= M_{[\a_1],\ldots,[\a_{n}]}
\quad\text{and}\quad
\widetilde{M}\setminus
\Bigl(\bigcup_{i=1}^{n} K_i\Bigr)
= M\;. \label{T[M] from 6D-3}
\end{split}
\end{align}
This relation states that $\widetilde{M}$ is obtained from $M$ by Dehn filling
along the slopes $\{[\a_i]\}$, and that removing the link
$L=\bigcup_{i=1}^{n}K_i$ from $\widetilde{M}$ recovers the original manifold
$M$.

\paragraph{Non-closable cycles and $SU(2)$ symmetries}

In \cite{Gang:2018wek}, it was argued that the $SU(2)_i$ moment map does not
exist—and hence the $SU(2)_i$ symmetry is not broken to $U(1)_i$—when the
corresponding boundary one-cycle $[\alpha_i]$ in $M$ is {\it non-closable}:
\begin{align}
\textrm{$\alpha_i$ is non-closable} \;\Rightarrow\; \textrm{$\boldsymbol{\mu}_i$ is absent} \;\Rightarrow\; \textrm{$SU(2)_i$ is not broken to $U(1)_i$}\;.
\label{non-closability and SU(2)}
\end{align}
A primitive boundary one-cycle $\alpha$ is called \emph{non-closable} if the
3D index of the three-manifold obtained by Dehn filling along $\alpha$ vanishes:
\begin{align}
\CI_{M_{[\alpha]}} = 0 \, . \label{def:non-closable}
\end{align}
Here $M_{[\alpha]}$ denotes the three-manifold obtained from $M$ by Dehn
filling along the slope $[\alpha]$. The 3D index $\CI$ will be introduced in
Section~\ref{sec : unrefined 3D index}, where it computes the superconformal index of the theory $T[M;\vec{\alpha}]$.
This condition is expected to be equivalent to
\begin{align}
\CI_{M_{[\alpha]}} = 0
\;\Longleftrightarrow\;
\chi_{\rm irred}[M_{[\alpha]}] = \emptyset \; .
\end{align}
In the 3D--3D correspondence, the set $\chi_{\rm irred}[M_{[\alpha]}]$ is related to the set
of Bethe vacua of the theory $T[M_{[\alpha]}]$.
The emptiness of this set implies spontaneous supersymmetry breaking, which in
turn leads to the vanishing of the superconformal index.
Thus, a non-closable cycle is one that cannot be closed without breaking
supersymmetry. Here $\chi_{\rm irred}[M]$ denotes the moduli space of irreducible
$SL(2,\mathbb{C})$ flat connections on $M$,
\begin{align}
\chi_{\rm irred}[M]
:=
\Bigl\{
R \in \mathrm{Hom}\!\bigl(\pi_1(M), SL(2,\mathbb{C})\bigr)
\ \Big|\ 
R \text{ is adjoint-irreducible}
\Bigr\}
\big/ \sim \; ,
\end{align}
where $\sim$ denotes conjugation by $SL(2,\mathbb{C})$.
A representation $R$ is called \emph{adjoint-irreducible} if the centralizer
of $\mathrm{Im}(R)$ in $SL(2,\mathbb{C})$ is equal to the center,
\begin{align}
\mathrm{Cent}\bigl(\mathrm{Im}(R)\bigr)
=
\{\pm I\} \subset SL(2,\mathbb{C})\; .
\end{align}
The simplest examples with empty $\chi_{\rm irred}$ are the lens spaces
$L(P,Q)$, whose fundamental group is $\mathbb{Z}_P$ or $\mathbb{Z}$.
Further examples include the unknot complement and the Hopf link complement.

Accordingly, from the 6D compactification perspective, the theory $T[M;\vec{\alpha}]$ is expected to have the following flavor symmetry:
\begin{align}
\begin{split}
&\prod_{i=1}^n  \left( 
\begin{cases}
SU(2)_i, & \text{if } \alpha_i \text{ is non-closable}, \\
U(1)_i, & \text{otherwise}
\end{cases} \right)
\\
&=
\left( \prod_{\alpha_i \,\text{: non-closable}} SU(2)_i \right)
\times
\left( \prod_{\alpha_i \,\text{: closable}} U(1)_i \right)\;. \label{flavor symmetry of T[M] from 6D}
\end{split}
\end{align}
However, the actual IR symmetry of the 3D theory may differ, as some symmetries decouple while others emerge accidentally in the IR.

    \subsubsection{UV gauge theory  for $T[N;\vec{\a}, \vec{\g}]$}
    As a first step in constructing the field theory $T[M;\vec{\alpha}]$ using the Dehn surgery representation in \eqref{Dehn filling rep of M}, we construct $T[N;\vec{\alpha},\vec{\beta}]$ based on an ideal triangulation of $N$. The theory $T[N;\vec{\alpha},\vec{\beta}]$ admits the same 6D compactification picture as $T[M;\vec{\alpha}]$, as reviewed in the previous subsection, with $M$ replaced by $N$, which has $(n+d)$ torus boundary components instead of $n$, and with the boundary primitive 1-cycles $\vec{\alpha}$ extended to $(\vec{\alpha},\vec{\gamma})$.
  
\paragraph{An ideal triangulation $\mathcal{T}$ and gluing equations for $N$}
Consider an ideal triangulation of $N$ consisting of $r$ ideal tetrahedra \cite{thurston2022geometry},
\begin{align}
\mathcal{T}\;:\;
N
=
\left(\bigcup_{a=1}^{r} \Delta_a\right)\big/\sim \; .
\label{Ideal triangulation of N}
\end{align}
For each ideal tetrahedron $\Delta_a$, we associate a boundary phase space
\begin{align}
P(\partial \Delta_a)
=
\{ Z_a,\, Z_a',\, Z_a'' \;\mid\; Z_a + Z_a' + Z_a'' = \mathbf{i}\pi \},
\end{align}
parametrized by the logarithmic edge parameters $\{Z_a, Z_a', Z_a''\}$. Here we define
\begin{align}
\mathbf{i} = \sqrt{-1}\;.
\end{align}
This phase space is endowed with the symplectic structure
\begin{align}
\{ Z_a, Z_a'' \}_{\rm P.B.}
=
\{ Z_a', Z_a \}_{\rm P.B.}
=
\{ Z_a'', Z_a' \}_{\rm P.B.}
=
1 \; .
\end{align}
The gluing equations consist of the quantities
$C_a$, $hol(\alpha_i)$, $hol(\beta_i)$, $hol(\gamma_I)$, and
$hol(\delta_I)$, each of which is given by a linear combination of the edge
parameters with integer coefficients.
Here $C_a$ denotes the internal edge variable  associated with the
$a$-th internal edge, given by the sum of the edge parameters meeting at that
edge. The condition $C_a = 2\pi  \mathbf{i}$ corresponds to the requirement that the
total dihedral angle around each internal edge is $2\pi$.
The quantities $hol(\alpha_i)$ and $hol(\beta_i)$ denote the logarithmic
holonomies associated with the boundary $1$-cycles
$\alpha_i,\beta_i \in H_1(T^2_i;\mathbb{Z})$ on the $i$-th boundary torus.
Similarly, $hol(\gamma_I)$ and $hol(\delta_I)$ denote the logarithmic
holonomies associated with the basis cycles
$\gamma_I,\delta_I \in H_1(T^2_{n+I};\mathbb{Z})$ on the boundary tori that are
eventually Dehn filled.
All these quantities are given by integer linear combinations of the edge
parameters $\{Z_a,Z_a',Z_a''\}$ of the ideal tetrahedra.

These gluing equations are known to enjoy the following symplectic structure \cite{neumann1985volumes}:
\begin{align}
\begin{split}
&\{ hol(\alpha_i), hol(\beta_j) \}_{\rm P.B.}
= 2\,\delta_{ij}\,\langle \alpha_i, \beta_j \rangle \;,
\\
&\{ hol(\gamma_I), hol(\delta_J) \}_{\rm P.B.}
= 2\,\delta_{IJ}\,\langle \gamma_I, \delta_J \rangle \;,
\\
&\text{all other Poisson brackets} = 0 \; .
\end{split}
\end{align}
Here $\langle\,\cdot\,,\,\cdot\,\rangle$ denotes the oriented intersection number of
$1$-cycles on the torus.

    \paragraph{Hard and easy internal edges}  Let $D$ be a linear combination of edge parameters of the form
\begin{align}
D = \sum_{a=1}^r \bigl( f_a Z_a + g_a Z_a' + h_a Z_a'' \bigr),
\qquad
f_a, g_a, h_a \in \mathbb{Z}_{\ge 0}\;.
\end{align}
We call $D$ an \emph{internal edge} if
\begin{align}
\begin{split}
&D =
2\pi \mathbf{i}
\quad
\Bigl(
\mathrm{mod}\;
\bigl\langle
C_a - 2\pi \mathbf{i},\;
Z_i + Z_i' + Z_i'' - \mathbf{i}\pi
\;\big|\;
a,i=1,\ldots,r
\bigr\rangle
\Bigr),
\\
&\{D, C_a\}_{\rm P.B.} =\{D, hol(\alpha_i)\}_{\rm P.B.}
=
\{D, hol(\beta_i)\}_{\rm P.B.} 
\\
&= \{D, hol(\gamma_I)\}_{\rm P.B.}
=
\{D, hol(\delta_I)\}_{\rm P.B.} 
=0.
\end{split}
\end{align}
An internal edge $D$ is called \emph{easy} if it satisfies
\begin{align}
\text{for each } a=1,\ldots,r,\quad
\text{at most one of } (f_a, g_a, h_a) \text{ is nonzero}.
\end{align}
Let $\{E_I\}_{I=1}^{\sharp_E}$ denote a maximal set of easy internal edges, and
let $\{H_I\}_{I=1}^{\sharp_H := r-n-d-\sharp_E}$ denote a set of hard internal edges
that are linearly independent of the easy internal edges.

    \paragraph{Refined Neumann-Zagier matrices $g_{\rm NZ} $} The matrix is defined using the following relation:
   \begin{align}
\left(
\begin{array}{c}
\mathbf{X} \\
\mathbf{H} -2\pi \mathbf{i} \\
\mathbf{E} -2\pi \mathbf{i} \\
\hline
\mathbf{P} \\
\mathbf{\Gamma}_{H} 
\\ 
\mathbf{\Gamma}_{E} 
\end{array}
\right) = g_{\rm NZ} \cdot \left(
\begin{array}{c}
Z_1 \\
\vdots \\
Z_{r} \\ \hline
Z''_1 \\
\vdots \\
Z''_r \\
\end{array}
\right) + \mathbf{i} \pi \left(
\begin{array}{c}
\boldsymbol{\nu}_x \\ \hline
\boldsymbol{\nu}_p
\end{array}
\right) \label{NZ matrices}
\end{align}
Here $\mathbf{X} = (X_1,\ldots, X_{n+d})^T,  \mathbf{P} = (P_1,\ldots, P_{n+d})^T$ with 
\begin{align}
X_i = hol(\alpha_i) , \;\;  X_{n+I} =hol(\g_I), \;\;P_i = \frac{hol(\b_i)}{2} , \;\;  P_{n+I} = \frac{hol(\d_I)}{2} \;.
\end{align}
The basis are  chosen such that
\begin{align}
\begin{split}
&\;\langle \alpha_i,\beta_i\rangle \;=\; \langle \gamma_I,\delta_I\rangle \;=\; 1\;. \label{basis (a,b,g,d)}
\end{split}
\end{align}
The vectors
$\mathbf{H}=(H_1,\ldots,H_{\sharp_H})^T$ and
$\mathbf{E}=(E_1,\ldots,E_{\sharp_E})^T$
denote the hard and easy internal edges, respectively.
Their conjugate variables are
$\boldsymbol{\Gamma}_H=(\Gamma_{H_1},\ldots,\Gamma_{H_{\sharp_H}})^T$ and
$\boldsymbol{\Gamma}_E=(\Gamma_{E_1},\ldots,\Gamma_{E_{\sharp_E}})^T$, satisfying
the Poisson bracket relations
\begin{align}
\begin{split}
&\{ H_I, \Gamma_{H_J} \}_{\rm P.B.} = \delta_{IJ}, \qquad
\{ E_I, \Gamma_{E_J} \}_{\rm P.B.} = \delta_{IJ},
\\
&\text{all other Poisson brackets} = 0 \; .
\end{split}
\end{align}
The matrix $g_{\rm NZ}$ is symplectic, $g_{\rm NZ} \in Sp(2r,\mathbb{Q})$, with half-integer entries. The conventional Neumann--Zagier matrices can be constructed in the same way, except that $(\mathbf{H}, \mathbf{E})$ are chosen to be linearly independent internal edges without regard to the easy/hard classification. Our refined NZ matrix will be used to define the refined 3D index, while the conventional one defines the (unrefined) 3D index.

   \paragraph{Gauge theory $T[N;\vec{\alpha},\vec{\gamma}]$}
Using the Neumann--Zagier (NZ) matrix, the theory is defined as \cite{Dimofte:2011ju}
\begin{align}
T[N;\vec{\alpha},\vec{\gamma}]
:=
\left(
g_{\rm NZ} \cdot \bigl[(T_\Delta)^{\otimes r}\bigr]
\;\text{deformed by the superpotential }\;
\CW = \sum_{I=1}^{\sharp_E} \CO_{E_I}
\right)\; . \label{T[N] theory}
\end{align}
Here $T_\Delta$ denotes the free chiral theory with background Chern--Simons
level $-\frac{1}{2}$ for the $U(1)$ flavor symmetry. In terms of superfields,
its Lagrangian density is given by
\begin{align}
\CL_{T_\Delta}(V)
= \int d^4\theta \left(
- \frac{1}{8\pi}\Sigma V + \Phi^\dagger e^{V}\Phi
\right)\,,
\end{align}
where $\Sigma$ is the field strength of the vector multiplet $V$
associated with the $U(1)$ flavor symmetry. $(T_\Delta)^{\otimes r}$ denotes $r$ copies of the $T_\Delta$ theory, whose
Lagrangian density is given by
\begin{align}
\CL_{(T_\Delta)^{\otimes r}}(V_1,\ldots,V_r)
= \sum_{i=1}^{r} \CL_{T_\Delta}(V_i)\,,
\end{align}
where $\{V_i\}_{i=1}^r$ denote the vector multiplets associated with the
$U(1)^r$ flavor symmetry. Using the $U(1)^r$ flavor symmetry, one can consider the action of
$Sp(2r,\mathbb{Q})$ on the theory $(T_\Delta)^{\otimes r}$.
We denote by $g_{\rm NZ}\cdot (T_\Delta)^{\otimes r}$ the theory obtained
by acting with $g_{\rm NZ}\in Sp(2r,\mathbb{Q})$. To be more explicit, one decomposes $g_{\rm NZ}$ into a product of ``T-type'' ($g_K^{t}$),
``S-type'' ($g_J^{s}$), and ``GL-type'' ($g_U^{gl}$) matrices:
\begin{align}
g_K^{t} &:= \begin{pmatrix} I & 0 \\ K & I \end{pmatrix}, \qquad
g_J^{s} := \begin{pmatrix} I - J & -J \\ J & I - J \end{pmatrix}, \qquad
g_U^{gl} := \begin{pmatrix} U & 0 \\ 0 & (U^{-1})^{t} \end{pmatrix}.
\end{align}
Here $K$ is a symmetric $r\times r$ matrix with integer entries,
and $J$ is a diagonal matrix with entries $0$ or $1$.
The matrix $U$ is a general element of $GL(r,\mathbb{Q})$.
Note that $g_K^t$ and $g_J^s$ are elements of $Sp(2r,\mathbb{Z})$,
while $g_U^{ gl}$ is, in general, an element of $Sp(2r,\mathbb{Q})$.
The field-theoretic action of these basic types is given by
\begin{align}
\begin{split}
&\CL_{g^t_K \cdot T}  (\vec{V}):=
\CL_T(\vec{V})
+ \frac{1}{4\pi}\int d^4\theta \sum_{i,j=1}^r \Sigma_i K_{ij} V_j \,,
\\
&\CL_{g^s_J \cdot T} (\vec{V}) :=
\CL_T\!\left((I-J)\cdot \vec{V}+J\cdot \vec{W}\right)
+ \frac{1}{2\pi}\int d^4\theta \sum_{i=1}^r \Sigma_i J_{ii} W_i \,,
\\
&\CL_{g^{gl}_U \cdot T}(\vec{V}) :=
\CL_T(U^{-1}\cdot \vec{V}) \,,
\end{split}
\end{align}
where $\vec{W}=(W_1,\ldots,W_r)$ has non-zero components only for indices
with $J_{ii}=1$. These vector multiplets are dynamical fields and are
integrated out in the path integral. Thus, the $g^s_J$ action amounts to
gauging the $i$-th $U(1)$ symmetry whenever $J_{ii}=1$, and the vector
multiplet associated with the resulting topological $U(1)$ symmetry is
denoted by $V_i$. The theory $g_{\rm NZ}\cdot\bigl[(T_\Delta)^{\otimes r}\bigr]$ admits (gauge-invaraint) chiral
primary operators $\CO_{E_I}$ associated with each easy internal edge \cite{Dimofte:2011ju,Gang:2025ykf}.
The theory $T[N;\vec{\alpha},\vec{\gamma}]$ is obtained from
$g_{\rm NZ}\cdot\bigl[(T_\Delta)^{\otimes r}\bigr]$ by a superpotential
deformation using these operators.

The gauge theory $T[N;\vec{\alpha},\vec{\gamma}]$ depends on the choice of an ideal triangulation
$\mathcal{T}$ and on the choice of ``quad'' structures (i.e.\ the assignment
of edge parameters to each ideal tetrahedron in $\mathcal{T}$).
For simplicity, we suppress these dependences and do not explicitly indicate
them in the notation $T[N;\vec{\alpha},\vec{\gamma}]$. Different choices generally lead to different UV gauge theories, but they are
known to flow to the same IR fixed point \cite{Dimofte:2011ju}. 

\paragraph{Flavor symmetry of $T[N;\vec{\alpha},\vec{\gamma}]$}
The theory $g_{\rm NZ} \cdot \bigl[(T_\Delta)^{\otimes r}\bigr]$ has a
$U(1)^r$ flavor symmetry, since the $Sp(2r,\mathbb{Q})$ action preserves the number of $U(1)$ flavor symmetries. The Cartan generators can be labeled by the position variables ($\mathbf{X}$, $\mathbf{H}$, and $\mathbf{E}$) appearing in the symplectic transformation in \eqref{NZ matrices}; we denote them by
\begin{align}
\text{Cartan generators of } U(1)^r \;:\;
T_{x_1},\ldots,T_{x_{n+d}},
\;
T_{h_1},\ldots,T_{h_{\sharp_H}},
\;
T_{e_1},\ldots,T_{e_{\sharp_E}} \; .
\end{align}
The superpotential term $\sum \CO_{E_I}$ carries nontrivial charges only under
$T_{e_I}$ $(I=1,\ldots,\sharp_E)$, and consequently the corresponding
$U(1)^{\sharp_E} \subset U(1)^r$ flavor symmetry is broken.
As a result, the theory $T[N;\vec{\alpha},\vec{\gamma}]$ has a residual
UV flavor symmetry
\begin{align}
\begin{split}
U(1)^{n+d+\sharp_H}
&= U(1)^{r-\sharp_E} 
\\
&= U(1)_{x_1} \times \cdots \times U(1)_{x_{n+d}}
\times U(1)_{h_1} \times \cdots \times U(1)_{h_{\sharp_H}}\;,
\end{split}
\label{flavor of T[N]}
\end{align}
whose Cartan subalgebra $\mathfrak{F}[N]$ decomposes as
\begin{align}
\begin{split}
&\mathfrak{F}[N]  = \mathfrak{F}^{\rm tori}[N] \oplus \mathfrak{F}^{\rm hard}[N]\;,
\\
&\mathfrak{F}^{\rm tori}[N]
=\mathrm{Span}\{T_{x_1},\ldots,T_{x_{n+d}}\}\;,\\
&\mathfrak{F}^{\rm hard}[N]
=\mathrm{Span}\{T_{h_1},\ldots,T_{h_{\sharp_H}}\}\;.
\end{split}
\label{Cartans of T[N]}
\end{align}
In the 6D compactification picture, the Cartan algebra
$\mathfrak{F}^{\rm tori}[N]$ in \eqref{Cartans of T[N]} can be identified with
the Cartan subalgebra of the flavor symmetry in
\eqref{flavor symmetry of T[M] from 6D}. On the other hand, from the
six-dimensional reduction viewpoint, $\mathfrak{F}^{\rm hard}[N]$ in
\eqref{Cartans of T[N]} can be regarded as the Cartan subalgebra of an
accidental symmetry that emerges when the size of the internal manifold
shrinks to zero. These accidental symmetries provide one source of the refinement.

    \subsubsection{UV gauge theory for $T[M; \vec{\a}] $ } \label{sec : Gauge theory TM}
   We now give a bottom-up field-theoretic description of $T[M;\vec{\alpha}]$ by performing the field-theoretic counterpart of the Dehn filling operation on $T[N;\vec{\alpha},\vec{\gamma}]$. For this construction to work, we require that
\begin{align}
\gamma_I \ \text{is non-closable}\;, \quad I=1,\ldots, d\;,
\label{basis (a,b,g,d)-2}
\end{align}
in addition to the condition in \eqref{basis (a,b,g,d)} on the basis
$(\gamma_I,\delta_I)$.
In this case, for each such $I$, the theory exhibits an $SU(2)$ flavor symmetry, as expected from \eqref{non-closability and SU(2)}, unless it decouples in the IR. The Dehn filling operation is then implemented through operations involving these $SU(2)$ symmetries. The $SU(2)$ symmetries may be manifest in the bottom-up field theory, or only its $U(1)$ Cartan subgroup may be manifest and enhanced to $SU(2)$ in the IR. For some choices of $N$, there may be no non-closable boundary cycles. In such cases, one must instead use a different Dehn surgery representation \eqref{Dehn filling rep of M} of $M$, for which $N$ admits non-closable cycles.

For the field-theoretic construction, it is useful to introduce the notion of a {\it marginal} non-closable cycle. We define it as follows:
\begin{align}
\begin{split}
&\text{A non-closable cycle } \gamma_I \in H_1(\partial N;\mathbb{Z}) \text{ is called {\it marginal} if}
\\
&\text{a) }\;
\operatorname{coeff}_{q^1}\!\left[
\CI_{N}^{(\vec{\alpha},\vec{\gamma}; \vec{\beta}, \vec{\delta});{\rm fug}}(\vec{u})
\right]\Big|_{(\mathrm{adj}\; SU(2)_I)}
\ge 0 \; ,
\\
&\text{b) }\;
\CI_{N}^{(\vec{\alpha},\vec{\gamma}; \vec{\beta}, \vec{\delta});{\rm fug}}(\vec{u})
\text{ depends non-trivially on } u_{n+I}\;,
\\
&\quad\;\; \text{i.e. } \CI_{N}^{(\vec{\a},\vec{\g};\vec{\beta},\vec{\d})}  ( \vec{m}= \vec{0}, \vec{e}) \textrm{ is non-zero for some $\vec{e}$ with $e_{n+I} \neq 0$.}\label{Def : marginal NC}
\end{split}
\end{align}
Here
$\CI_{N}^{(\vec{\alpha},\vec{\gamma}; \vec{\beta}, \vec{\delta});{\rm fug}}$
denotes the 3D index, to be introduced in Section~\ref{sec : unrefined 3D index}, in the fugacity basis at $\vec{m}=\mathbf{0}$:
\begin{align}
\CI_{N}^{(\vec{\alpha},\vec{\gamma}; \vec{\beta}, \vec{\delta});{\rm fug}}(\vec{u})
:=
\sum_{e_1,\ldots,e_{n+d}\in \mathbb{Z}/2}
\CI^{(\vec{\alpha},\vec{\gamma};\vec{\beta}, \vec{\delta})}_{N}
(\vec{m}=\mathbf{0},\vec{e})
\prod_{i=1}^{n+d}u_i^{e_i}\;,
\end{align}
where the corresponding index in the charge basis is given in \eqref{unrefined 3D index 2}. This quantity computes the generalized superconformal index of the theory $T[N;\vec{\alpha},\vec{\gamma}]$, in the same way that \eqref{urefined index as SCI} computes the index of $T[M;\vec{\alpha}]$.

The notation $|_{(\mathrm{adj}\; SU(2)_I)}$ denotes the projection onto the adjoint representation of $SU(2)_I$---the symmetry associated with the non-closable cycle $\gamma_I$---while projecting onto the trivial representation for all remaining flavor symmetries. Explicitly,
\begin{align}
\begin{split}
f(u_1,\ldots,u_{n+d})\Big|_{(\mathrm{adj}\; SU(2)_I)}
&:=
\left(\prod_{i=1}^{n}\oint\frac{du_i}{2\pi \mathbf{i}\,u_i}\right)
\left(\frac{1}{2^{d}}
\prod_{J=1}^{d}
\oint
\frac{du_{n+J}}{2\pi \mathbf{i}\,u_{n+J}}
(2-u_{n+J}^{2}-u_{n+J}^{-2})\right)
\\
&\qquad\times
(1+u_{n+I}^{2}+u_{n+I}^{-2})\,
f\!\left(u_1^2,\ldots,u_{n+d}^2\right).
\end{split}
\end{align}
When $\gamma_I$ is marginal, we expect that there exists a marginal ($R=2$) chiral primary operator $\mathfrak{M}_I$ in $T[N;\vec{\alpha},\vec{\gamma}]$, transforming in the adjoint representation of the associated $SU(2)_I$:
\begin{align}
\gamma_I \text{ marginal non-closable}
\;\Longrightarrow\;
\mathfrak{M}_I
\;\;(\text{marginal chiral primary in the adjoint of } SU(2)_I)\;.
\end{align}
The $SU(2)_I$ flavor current multiplet contributes to the index at order $q$ as
$-q\,\chi_{\rm adj}(u_{n+I})
=
-q(1+u_{n+I}+u_{n+I}^{-1})$.
Condition a) implies that this  contribution is not present in the index. There are then two possible ways to reconcile this. One possibility is that the $SU(2)_I$ flavor symmetry decouples in the IR and therefore does not survive as a faithful symmetry. In that case, the superconformal index would be independent of the corresponding fugacity $u_{n+I}$. Condition b) rules out this possibility. The only remaining possibility is that there exist marginal operators in the adjoint representation whose contribution cancels that of the flavor current multiplet. We identify these operators with $\mathfrak{M}_I$. 

In practice, marginal non-closable cycles are very rare, and most $3$-manifolds $N$ do not admit such cycles. We will present an example in Section~\ref{sec : example - other hyperbolic Ms}.

    \paragraph{Dehn filling operation on $T[N;\vec{\a},
\vec{\gamma}]$}
We define the Dehn-filled theory \(T[M;\vec{\alpha}]\) by incorporating the Dehn filling operation \cite{Pei:2015jsa,Alday:2017yxk,Gukov:2017kmk,Gang:2018wek,Assel:2022row} on the theory \(T[N]\):
\begin{align}
\begin{split}
&T[M;\vec{\alpha}]
\sim
\frac{
T[N;\vec{\alpha},\vec{\gamma}]
\;\otimes\;
\prod_{I=1}^{d}\mathbb{D}(k^{(I)}_2,\ldots,k^{(I)}_{\ell_I})
}{
\prod_{I=1}^{d} SU(2)^{(I)}_{k^{(I)}_1}
} + \left( \CW = \sum_{\gamma_I : {\rm marginal}} \textrm{Tr}(\mathfrak{M}_I \cdot \boldsymbol{\mu}^{(I)}_H) \right),
\qquad 
\\
&\text{where}
\\[4pt]
&\mathbb{D}(k_2,\ldots,k_\ell)
\;\sim\;
\frac{
T[SU(2)]^{\otimes(\ell-1)}
}{
\prod_{i=2}^{\ell} SU(2)^{(i)}_{k_i}
}\; . \label{T[M] theory}
\end{split}
\end{align}
In the first line, $SU(2)^{(I)}$ $(I=1,\ldots,d)$ denotes the diagonal $SU(2)$
subgroup obtained from the $I$-th $SU(2)$ flavor symmetry, whose Cartan is $T_{n+I}$, of
$T[N;\vec{\alpha},\vec{\gamma}]$ and the $SU(2)$ flavor symmetry of
$\mathbb{D}(k^{(I)}_2,\ldots,k^{(I)}_{\ell_I})$.
In the second line, $SU(2)^{(i)}$ $(i=2,\ldots,\ell-1)$ denotes the diagonal
$SU(2)$ subgroup of the $SU(2)_C$ symmetry of the $(i-1)$-th $T[SU(2)]$ factor
and the $SU(2)_H$ symmetry of the $i$-th $T[SU(2)]$ factor, while
$SU(2)^{(\ell)}$ corresponds to the $SU(2)_C$ symmetry of the
$(\ell-1)$-th $T[SU(2)]$ factor.
The resulting theory $\mathbb{D}(k_2,\ldots,k_\ell)$ is a $3$D $\mathcal{N}=4$
theory with an $SU(2)$ flavor symmetry inherited from the $SU(2)_H$ symmetry
of the first $T[SU(2)]$ factor. Its moment map is denoted by $\boldsymbol{\mu}^{(I)}_H$.
\begin{figure}[h] 
    \centering
    \includegraphics[width=0.85\linewidth]{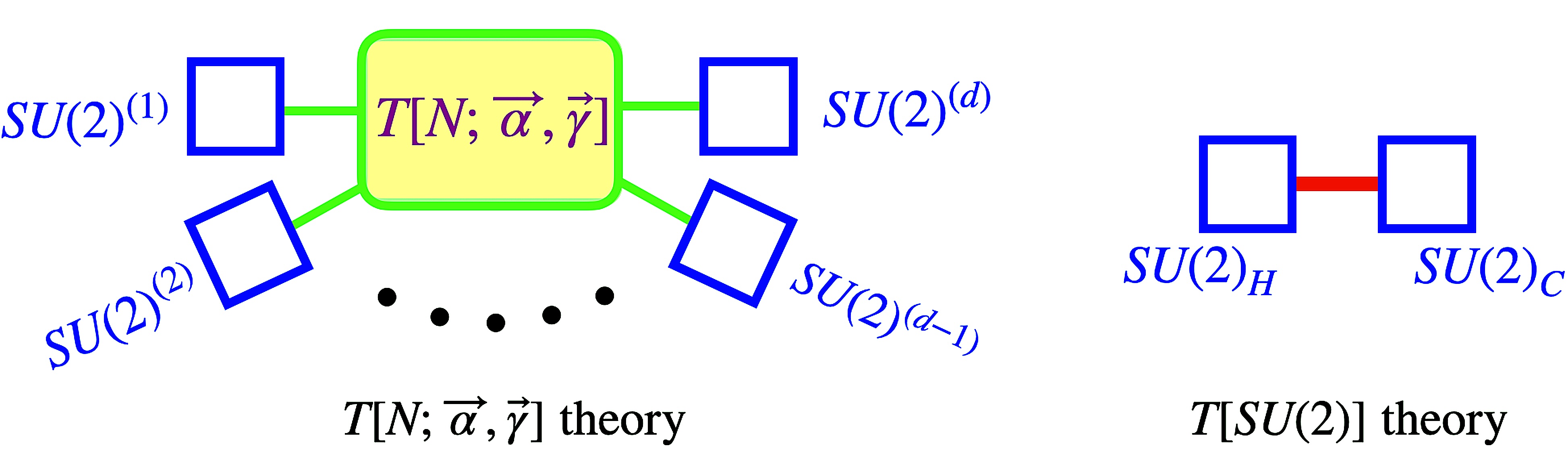}
    \\
    \includegraphics[width=0.78\linewidth]{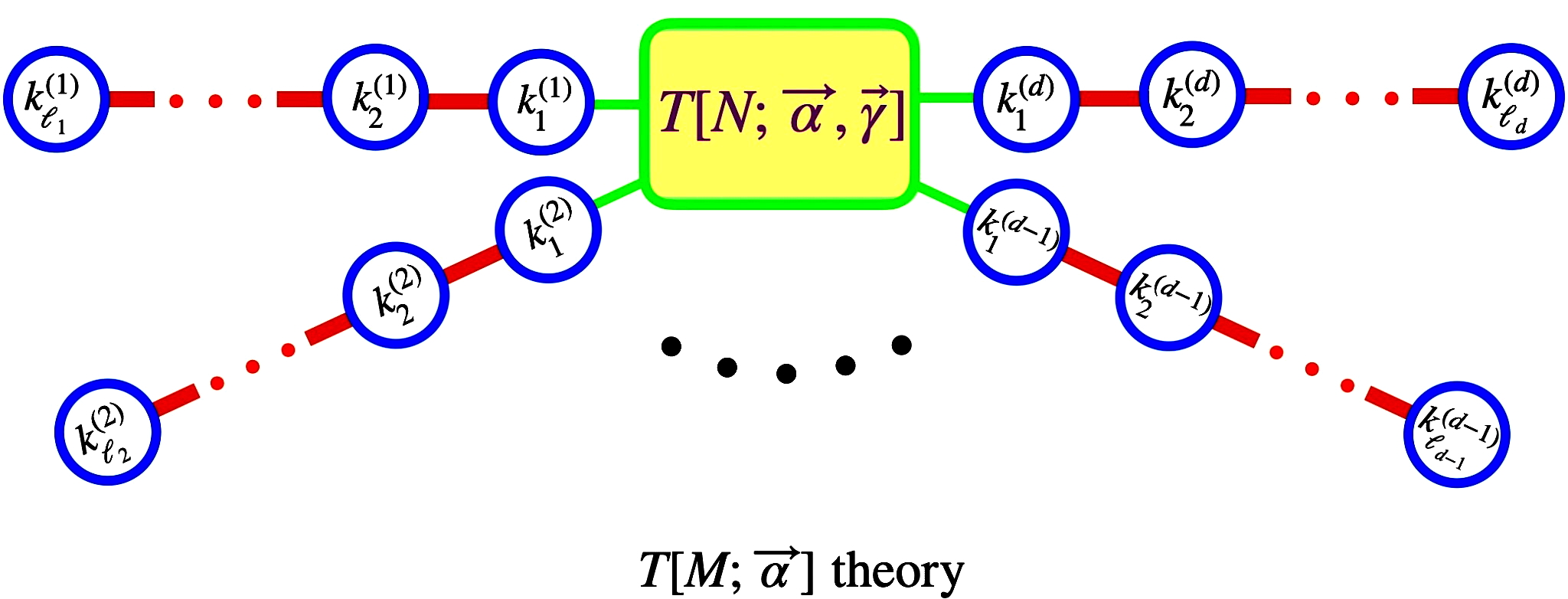}
\caption{
Generalized quiver diagram for the $T[M;\vec{\alpha}]$ theory.
The theory $T[N;\vec{\alpha},\vec{\gamma}]$, represented by a green rounded rectangle, 
has $d$ copies of $SU(2)$ flavor symmetry, shown as blue squares.
The $T[SU(2)]$ theory, represented by red link, has 
$SU(2)_H \times SU(2)_C$ flavor symmetry.
Circles labeled by an integer $k$ denote the gauged $SU(2)$ symmetries, each with Chern-Simons level $k$.
}
\label{fig : T[M]}
\end{figure}
The symbol ``$\sim$'' denotes
\begin{align}
\sim \;:\;&\text{an IR equivalence up to tensoring with a decoupled TQFT} \nonumber\\
&\text{and/or gauging a discrete $1$-form symmetry.}
\end{align}
The decoupled TQFT is irrelevant for our purposes, as it does not contribute to the superconformal index. To obtain $T[M;\vec{\alpha}]$, we must gauge all non-anomalous $\mathbb{Z}_2$ $1$-form symmetries arising from the centers of the $SU(2)$ gauge groups.

The Chern--Simons levels
$\vec{k}^{(I)}=(k^{(I)}_1,\ldots,k^{(I)}_{\ell_I})$
are determined by the continued fraction expansion as follows:
\begin{align}
\begin{split}
&\text{when $Q_I \neq 0$:}\qquad
\frac{P_I}{Q_I}
=
k^{(I)}_1
-\dfrac{1}{
k^{(I)}_2
-\dfrac{1}{
k^{(I)}_3
-\cdots
-\dfrac{1}{k^{(I)}_{\ell_I}}
}}\;,
\\[6pt]
&\text{when $P_I=\pm1$ and $Q_I=0$:}\quad
\ell_I=2,
\qquad
k^{(I)}_1=k^{(I)}_2=0\;. \label{vec-k from P/Q}
\end{split}
\end{align}
When $Q_I=\pm 1$, i.e.\ when the slope $P_I/Q_I$ is an integer, one can choose
$k^{(I)}_1 = P_I/Q_I$ and $\ell_I = 1$.
In this case, the theory $\mathbb{D}(k^{(I)}_2,\ldots,k^{(I)}_{\ell_I})$ is trivial, 
and the Dehn filling operation simply corresponds to gauging $SU(2)^{(I)}$ 
with an additional Chern-Simons term at level $P_I/Q_I$. 
When $Q_I \neq \pm 1$, on the other hand, the theory 
$\mathbb{D}(k^{(I)}_2,\ldots,k^{(I)}_{\ell_I})$ is a nontrivial 
three-dimensional $\mathcal{N}=4$ SCFT with $SU(2)$ flavor symmetry and $SO(4)_R$ R-symmetry.
\begin{align}
\mathbb{D}(k_2, \ldots, k_\ell) \sim 
\begin{cases}
\text{trivial theory}, & P/Q \in \mathbb{Z}, \\
\mathcal{N}=4\ \text{SCFT}, & \text{otherwise}.
\end{cases}
\end{align}
In the theory $T[M;\vec{\alpha}]$, this $\mathcal{N}=4$ supersymmetry is generically 
broken to $\mathcal{N}=2$, since $T[N;\vec{\alpha},\vec{\gamma}]$ itself typically 
preserves only $\mathcal{N}=2$ supersymmetry. Under this partial supersymmetry breaking, 
the $SO(4)_R$ symmetry is reduced to $U(1)_R \times U(1)_A$, where $U(1)_R$ is identified 
with the R-symmetry of the $\mathcal{N}=2$ theory $T[M;\vec{\alpha}]$, while $U(1)_A$ is 
interpreted as a flavor symmetry from the $\mathcal{N}=2$ perspective.

This $U(1)_A$ symmetry remains unbroken when coupled to the theory $T[N;\vec{\alpha},\vec{\gamma}]$, 
unless $\gamma_I$ is marginal, in which case it is broken by a superpotential deformation. 
When preserved, it provides an additional refined flavor symmetry of $T[M]$, 
leading to a refinement of the 3D index.

\paragraph{Auxiliary Choices and IR Dualities}
The UV theory $T[M;\vec{\alpha}]$  depends on several auxiliary choices:
\begin{align}
\begin{split}
&\bullet\ \mathcal{D}\;:\; \text{the Dehn surgery representation in \eqref{Dehn filling rep of M}},\\
&\bullet\ \mathcal{T}\;:\; \text{an ideal triangulation of $N$ in \eqref{Ideal triangulation of N}},\\
&\bullet\ \text{quad structure}\;:\; \text{cyclic labeling of $Z, Z', Z''$ for each tetrahedron},\\
&\bullet\ \vec{\gamma}\;:\; \text{a choice of non-closable cycles in \eqref{basis (a,b,g,d)-2}}.
\end{split} \label{auxiliary choices}
\end{align}
The IR physics is expected to be independent of these choices, and we therefore suppress them in the notation. Different choices of auxiliary data correspond to different IR duality frames of \(T[M;\vec{\alpha}]\) \cite{Dimofte:2011ju,Gang:2018wek}:
\begin{align}
\begin{split}
&\text{Different choices of } \CD,\ \CT,\ \vec{\gamma},\ \text{and quad structure}
\\
&\Longrightarrow \text{different IR duality frames of } T[M;\vec{\alpha}] \; .
\end{split}
\end{align}
For example, the Pachner \(2\text{--}3\) move corresponds to the IR duality between $N_f=1$ SQED and the XYZ model, while a change of quad structure corresponds to the duality between a free chiral theory and a \(U(1)_{-1/2}\) gauge theory coupled to a chiral multiplet \cite{Dimofte:2011ju}. The IR dualities associated with changes of $\CD$ and $\vec{\gamma}$ remain to be clarified.
Taken together, these dual descriptions provide complementary probes of the infrared dynamics.

\paragraph{UV flavor symmetry of $T[M;\vec{\alpha}]$ and its Cartan subalgebra $\mathfrak{F}[M]= \mathfrak{F}^{\rm tori}[M] \oplus \mathfrak{F}^{\rm ref}[M]$}
The subalgebra $\mathfrak{F}^{\rm tori}[M]$ is given by
\begin{align}
\mathfrak{F}^{\rm tori}[M] = \mathrm{Span}\{T_{x_1}, \ldots, T_{x_n}\}\,,
\label{F-tori[M]}
\end{align}
and can be identified, in the 6D construction \eqref{T[M] from 6D}--\eqref{T[M] from 6D-3}, with the Cartan subalgebra of the flavor symmetry in \eqref{flavor symmetry of T[M] from 6D}.  In addition to the symmetries arising from the geometric construction, the 3D effective theory may exhibit extra flavor symmetries. We denote their Cartan subalgebra by $\mathfrak{F}^{\rm ref}[M]$:
\begin{align}
\mathfrak{F}^{\rm ref}[M] \;:\; \text{Cartan subalgebra of additional flavor symmetries}\,.
\end{align}
These additional symmetries give rise to refinements of the 3D index.
There are two possible sources for such symmetries in the theory 
$T[M;\vec{\alpha}]$ defined in \eqref{T[M] theory} and \eqref{T[N] theory}. The first originates from the $U(1)_A$ symmetry of 
$\mathbb{D}(k^{(I)}_2,\ldots,k^{(I)}_{\ell_I})$ in the case $Q_I \neq \pm 1$ 
and $\gamma_I$ is not marginal. We denote its Cartan generator by $T_{a_I}$:
\begin{align}
   T_{a_I} \;:\; \text{Cartan generator of } U(1)_A^{(I)} \, .
\end{align}
The second arises from a subgroup $F^{\rm hard}_{\rm d}$ of $F^{\rm hard}$ in $T[N; \vec{\alpha},\vec{\gamma}]$, whose Cartan subalgebra $\mathfrak{F}^{\rm hard}[N]$ is defined in \eqref{Cartans of T[N]}. This subgroup is chosen to commute with the gauged $SU(2)^d$ symmetry and therefore survives the gauging:
\begin{align}
F^{\rm hard}_{\rm d} \subset F^{\rm hard} := \prod_{i=1}^{\sharp_H} U(1)_{h_i},
\qquad
[F^{\rm hard}_{\rm d}, SU(2)^d]=0\;. \label{Fd-hard and F-hard}
\end{align}
As a result, the Cartan subalgebra of the additional flavor symmetry,
$\mathfrak{F}^{\rm ref}[M]$, takes the form
\begin{align}
\begin{split}
&\mathfrak{F}^{\rm ref}[M]
=
\mathfrak{F}^{\rm dehn}
\oplus
\mathfrak{F}^{\rm hard}_{\rm d}\;, \\[4pt]
&\mathfrak{F}^{\rm dehn}[M]
=
\mathrm{Span}\!\left\{
T_{a_I}\;\middle|\; I=1,\ldots,d,\; (Q_I \neq \pm 1) \textrm{ and }(\textrm{$\gamma_I$ is not marginal})\right\}, \\[4pt]
&\mathfrak{F}^{\rm hard}_{\rm d}[M]
=
\left\{
T \in \mathfrak{F}^{\rm hard}[N] \;\middle|\; [T, SU(2)^d]=0
\right\}.
\label{F-ref[M]}
\end{split}
\end{align}
Here the vector space $\mathfrak{F}^{\rm hard}[N]$ is defined in \eqref{Cartans of T[N]}. Sometimes, some of the $SU(2)^d$ symmetry in $T[N;\vec{\alpha}, \vec{\gamma}]$ emerges only in the IR, so it is nontrivial to identify $\mathfrak{F}^{\rm hard}_{\rm d}$.

\paragraph{IR Flavor Symmetry of \( T[M;\vec{\alpha}] \)}
The Cartan subalgebra $\mathfrak{F}[M]$, in particular $\mathfrak{F}^{\rm ref}[M]$, of the UV flavor symmetry of $T[M;\vec{\alpha}]$ depends on the auxiliary choices,  $\CD,\CT$  and $\vec{\gamma}$ in \eqref{auxiliary choices}. Nevertheless, the faithful IR flavor symmetry is expected to be independent of these choices. How are these seemingly different UV flavor symmetries reconciled in the IR?  Two mechanisms account for this reconciliation. First, some of the UV flavor symmetries do not act faithfully in the IR and hence decouple along the RG flow. Let $\mathfrak{F}^{\rm ref}_{\rm dec}[M] \subset \mathfrak{F}^{\rm ref}[M]$ be the Cartan subalgebra corresponding to the decoupled flavor symmetry. Then the Cartan subalgebra of the remaining (faithful) IR flavor symmetry is given by
\begin{align}
\mathfrak{F}^{\rm ref}_{\rm IR}[M] = \mathfrak{F}^{\rm ref}[M]/\mathfrak{F}^{\rm ref}_{\rm dec}[M] \, . \label{F-ref[M]-IR}
\end{align}
Second, additional symmetries, not manifest in the UV description, can emerge in the IR, leading to symmetry enhancement. In most cases, such enhancement does not change the Cartan subalgebra—for example, a $U(1)$ symmetry may be enhanced to $SU(2)$ without increasing the rank. However, in more rare situations, the rank of the flavor symmetry itself can increase in the IR. In general, such cases are difficult to analyze directly. In practice, we instead consider different IR duality frames and compute $\mathfrak{F}^{\rm ref}_{\rm IR}[M]$ in each frame, taking the largest resulting vector space as the Cartan subalgebra of the IR flavor symmetry. Let $(\CD_0, \CT_0,  \vec{\gamma}_0)$ be a choice that yields the largest resulting vector space. Then, for a general choice of $(\CD, \CT, \vec{\gamma})$, there exists an injective map
\begin{align}
\mathfrak{F}^{\rm ref}_{\rm IR}[M; (\CD, \CT, \vec{\gamma})] \hookrightarrow 
\mathfrak{F}^{\rm ref}_{\rm IR}[M; (\CD_0, \CT_0, \vec{\gamma}_0)] \, , \label{injective map F}
\end{align}
which indicates how the flavor symmetries in different duality frames are related to each other. Since rank enhancement occurs only exceptionally, the injective map is expected to be bijective in generic cases.

\paragraph{Example $M= (S^3\backslash \mathbf{4}_1)_{[P \mu + \lambda]}$} 
To find a UV gauge theory description of $T[M]$, we choose
\begin{align}
N = S^3\backslash \mathbf{4}_1, \quad \{ \gamma, \delta \} = \{\mu, \lambda\} \, .
\end{align}
Here $\mathbf{4}_1$ denotes the figure-eight knot, the simplest knot with 4 crossings. $\mu$ and $\lambda$ denote the boundary 1-cycles, namely the meridian and longitude, respectively. The knot complement $N = S^3\backslash \mathbf{4}_1$ admits a triangulation by two ideal tetrahedra, and the corresponding gluing equations, in an appropriate choice of quad structure, are
\begin{align}
\begin{split}
&C_1 = 2Z_1'+Z_1''+2Z_2+Z_2'\;, \quad C_2 = 2Z_1+Z_1''+Z_2'+2Z_2''\;,
\\
&hol(\mu) = Z_1 - Z_2\;, \quad hol(\lambda) = 2(Z_1-Z_1')\;. \label{gluing eqs for 4_1}
\end{split}
\end{align}
There is no easy internal edge, and there is one hard internal edge, namely
\begin{align}
H=C_1 \;.
\end{align}
Then the Cartan subalgebras in \eqref{Cartans of T[N]} are
\begin{align}
\begin{split}
&\mathfrak{F}^{\rm tori}[N] = \mathrm{Span}\{T_x\}\;, \\
&\mathfrak{F}^{\rm hard}[N] = \mathrm{Span}\{T_h\}\;.
\end{split}
\end{align}
The corresponding NZ matrices are
\begin{align}
g_{\rm NZ} = \begin{pmatrix} 1 & -1 & 0 & 0 \\ -2 & 1 & -1 & -1 \\ 2 & 0 & 1 & 0 \\ 0 & 1 & 0 & 0\end{pmatrix}, \qquad
\boldsymbol{\nu}_x =  \begin{pmatrix} 0 \\ 1 \end{pmatrix}, \qquad
\boldsymbol{\nu}_p =\begin{pmatrix} -1 \\ 0 \end{pmatrix}. \label{NZ matrices of 4_1}
\end{align}
Using the decomposition $g_{\rm NZ} = g^s_J g^t_K g^{\rm gl}_U$, with
\begin{align}
U = \begin{pmatrix}  1 & -1 \\ 0 & 1\end{pmatrix}, \qquad
K = \begin{pmatrix}  2 & 2 \\ 2 & 1\end{pmatrix}, \qquad
J = \begin{pmatrix}  0 & 0 \\ 0 & 1\end{pmatrix},
\end{align}
the UV gauge theory $T[S^3\backslash \mathbf{4}_1; \mu ]= [g_{\rm NZ}\cdot (T_\Delta)^{\otimes 2}]$ is (after shifting the dynamical vector field as $V \rightarrow V-V_1/2$)
\begin{align}
\begin{split}
&\mathcal{L}_{T[S^3\backslash \mathbf{4}_1; \mu]} (V_1, V_2)
\\
&= \frac{1}{4\pi }\int d^4 \theta \left(-\frac{1}{2} \Sigma_2 V_1 + \Sigma (2V_2 +3 V_1)\right)
+ \int d^4 \theta \left(\Phi_1^\dagger e^{V+\frac{V_1}{2}}\Phi_1+\Phi_2^\dagger e^{V-\frac{V_1}{2}}\Phi_2 \right).
\end{split}
\end{align}
Here $V$ (with field strength $\Sigma$) is the dynamical vector multiplet, $V_1$ is the background vector multiplet coupled to $U(1)_x$ and $V_2$ is the background vector multiplet coupled to $U(1)_h$.
The theory can therefore be summarized as
\begin{align}
\begin{split}
&T[S^3\backslash \mathbf{4}_1; \mu]
\\
&= U(1)_0 \ \text{coupled to two chiral multiplets, $\Phi_1$ and $\Phi_2$, each of charge $+1$.} \label{T[4-1]}
\end{split}
\end{align}
This theory has a manifest $U(1)\times SU(2)$ flavor symmetry.  The $U(1)$ factor is the topological symmetry associated with the $U(1)$ gauge symmetry, while the $SU(2)$ rotates the two chiral multiplets. The two Cartan generators, $T_x$ and $T_h$, are embedded into the $U(1)\times SU(2)$ as follows
\begin{align}
T_h  = T_{\rm top}, \quad T_x = \frac{3}2 T_{\rm top} + T_3,
\end{align}
where $T_{\rm top}$ and $T_3$ are Cartan generators of $U(1)$ and $SU(2)$ respectively.  Since $\mu$ is a non-closable cycle, we expect the $U(1)_x$ symmetry to be enhanced to $SU(2)$ in the IR, as argued in \eqref{non-closability and SU(2)}. Indeed, the gauge theory exhibits an enhanced $SU(3)$ symmetry in the IR \cite{Gang:2018wek,Benini:2018bhk,Gaiotto:2018yjh}, in which $U(1)_h \times U(1)_x$ is embedded as
\begin{align}
T_x = \textrm{diag}(0,-1,1) \in \mathfrak{su}(3), \qquad
T_h = \frac{1}3 \textrm{diag}(-1,-1,2) \in \mathfrak{su}(3).
\end{align}
The $SU(2)_x$ symmetry is embedded into the $SU(3)$ flavor symmetry as follows
\begin{align}
(\textrm{adjoint of }SU(2)_x) = (\mathbf{3} \textrm{ of }SU(3)) \;. \label{SU(2) in SU(3)}
\end{align}
Integer Dehn filling corresponds to gauging this $SU(2)_x$ flavor symmetry:
\begin{align} 
T[(S^3 \backslash \mathbf{4}_1)_{[P \mu + \lambda]}] \sim \frac{T[S^3\backslash \mathbf{4}_1;\mu]}{SU(2)_P}\;. \label{T[(4-1)P]}
\end{align}
Here $/G_k$ denotes gauging the flavor symmetry $G$ with additional Chern--Simons level $k$. The gauged $SU(2)_x$ symmetry does not commute with $U(1)_h$, and thus we have
\begin{align}
\mathfrak{F}^{\rm hard}_{\rm d}[M] = \{ \mathbf{0} \}\;.
\end{align}
Since we are considering integer Dehn filling, it follows from \eqref{F-ref[M]} that
\begin{align}
\mathfrak{F}^{\rm dehn}[M] = \{ \mathbf{0} \}
\quad \textrm{and} \quad
\mathfrak{F}^{\rm ref}[M] = \{ \mathbf{0} \}\;.
\end{align}
So generically, we expect that $T[(S^3\backslash \mathbf{4}_1)_{[P\mu+\lambda]}]$ does not have any continuous flavor symmetry. In particular, when $|P|$ is sufficiently large, quantum effects are suppressed by $1/P$, and the naive counting is expected to be reliable. However, for small values of $P$, strong quantum effects can invalidate this naive counting. In this case, monopole operators associated with the $SU(2)$ gauge symmetry are light (i.e.\ have small conformal dimensions) and can give rise to additional conserved currents, whose conformal dimension is fixed to be $2$ in $D=3$.

For $|P|=5$, we claim that the IR theory has an emergent $U(1)$ flavor symmetry, which is manifest in a suitable duality frame. This frame can be reached using the topological identity
\begin{align}
(S^3 \backslash \mathbf{4}_1)_{[-5\mu+\lambda]} = (S^3\backslash \mathbf{5}_2)_{[5 \mu+\lambda]} \, .
\end{align}
Here $\mathbf{5}_2$ is the three-twist knot, the second simplest knot with five crossings.
\paragraph{Example $M= (S^3\backslash \mathbf{5}_2)_{[P \mu + \lambda]}$} We choose
\begin{align}
N = S^3\backslash \mathbf{5}_2, \quad \{ \gamma, \delta \} = \{\mu, \lambda\} \, .
\end{align}
The knot complement $N$ can be ideally triangulated by 3 tetrahedra and the gluing equations are
\begin{align}
\begin{split}
&C_1= Z_1'+Z_1''+2Z_2+Z_3'+Z_3'', \quad C_2 = Z_1+Z_1'+Z_2'+Z_3+Z_3',  
\\
&C_3 = Z_1+Z_1''+Z_2'+2Z_2''+Z_3+Z_3'',
\\
&hol(\mu) = -Z_1+Z_2, \quad hol(\lambda)= 2 (Z_1-Z_1''-Z_2+Z_3'') \label{gluing eqs for 5_2}
\end{split}
\end{align}
There is no easy internal edge and two hard internal edges, $H_1 = C_1$ and $H_2 = C_2$. The NZ matrices are
\begin{align} \label{NZ matrices 5_2}
g_{\rm NZ} = \begin{pmatrix} -1 & 1 & 0 & 0 & 0 & 0 
\\ -1 & 2 & -1 & 0 & 0 & 0 
\\ 0 & -1 & 0 & -1 & -1 & -1
\\ 1 & -1 & 0 & -1 & 0 & 1
\\ 0 & 0 & 0 & 0 & 0 & -1
\\ 0 & 1 & 0 & 0 & 0 & 0
\end{pmatrix}, \; \boldsymbol{\nu}_x = \begin{pmatrix} 0 \\ 0 \\ 1 \end{pmatrix}, \; \boldsymbol{\nu}_p = \begin{pmatrix} 0 \\ 0 \\ 0 \end{pmatrix}
\end{align}
Using the decomposition $g_{\rm NZ} = g^s_J g^t_K g^{\rm gl}_U$, with
\begin{align}
U = \begin{pmatrix}  -1 & 1 & 0 \\ -1 & 2 & -1 \\ 0 & 1 & 0 \end{pmatrix}, \qquad
K = \begin{pmatrix}  -1 & 0 &  0 \\ 0 & 0 & 0 \\ 0 & 0 & 1\end{pmatrix}, \qquad
J = \begin{pmatrix}  0 & 0 &  0 \\ 0 & 0 & 0 \\ 0 & 0 & 1\end{pmatrix},
\end{align}
the UV gauge theory $T[S^3\backslash \mathbf{5}_2; \mu ]= [g_{\rm NZ}\cdot (T_\Delta)^{\otimes 3}]$ is 
\begin{align}
\begin{split}
&\mathcal{L}_{T[S^3\backslash \mathbf{5}_2; \mu]} (V_1, V_2,V_3)
\\
&= \frac{1}{4\pi }\int d^4 \theta \left(-\frac{1}{2} \Sigma V + \Sigma (V_2+2V_3)-2 \Sigma_1 V_1+ \Sigma_1 \Sigma_2 - \frac{1}2 \Sigma_2 V_2\right)
\\
&+ \int d^4 \theta \left(\Phi_1^\dagger e^{V-V_1}\Phi_1+\Phi_2^\dagger e^{V}\Phi_2  +\Phi_3^\dagger e^{V+V_1-V_2}\Phi_3\right). \label{L for 5_2}
\end{split}
\end{align}
Here $V$ (with field strength $\Sigma$) is the dynamical vector multiplet, $V_1$ is the background vector multiplet coupled to $U(1)_x$ and $V_2$ and $V_3$ are the background vector multiplet coupled to $U(1)_{h_1}$ and $U(1)_{h_2}$ respectively. The theory can therefore be summarized as
\begin{align}
\begin{split}
&T[S^3\backslash \mathbf{5}_2; \mu]
\\
&= U(1)_{-\frac{1}2} \text{ coupled to three chiral multiplets, $\Phi_1,\Phi_2$ and $\Phi_3$, each of charge $+1$.}
\end{split}
\end{align}
This theory has a manifest $U(1)\times SU(3)$ flavor symmetry.  The $U(1)$ factor is the topological symmetry associated with the $U(1)$ gauge symmetry, while the $SU(3)$ rotates the three chiral multiplets. The Cartan generators, $T_x,T_{h_1}$ and $T_{h_2}$, are embedded into the $U(1)\times SU(3)$ as follows
\begin{align}
T_x = \diag (-1,0,1) \in \mathfrak{su}(3), \quad T_{h_1} = \left(\diag (0,0,-1) \in \mathfrak{su}(3)\right)+\frac{1}2 T_{\rm top}, \quad T_{h_2} = T_{\rm top}\;, \nonumber
\end{align}
where $T_{\rm top}$ is  Cartan generator of $U(1)$. The $U(1)_x$ is enhanced to $SU(2)_x$, which is embedded into $SU(3)$ as in \eqref{SU(2) in SU(3)}. 
Integer Dehn filling corresponds to gauging this $SU(2)_x$ flavor symmetry:
\begin{align} 
\begin{split}
&T[(S^3 \backslash \mathbf{5}_2)_{[P \mu + \lambda]}] \sim \frac{T[S^3\backslash \mathbf{5}_2;\mu]}{SU(2)_P}\;
\\
&= U(1)_{-\frac{1}2} \times SU(2)_{P-4} \textrm{ coupled to a chiral multiplet $\Phi$ in $(\mathbf{3},1)$}.
\end{split}
\end{align}
Here, $(\mathbf{3},1)$ denotes the fundamental representation of $SU(3)$ with charge $+1$ under $U(1)$. 
The Chern--Simons term, $\frac{1}{4\pi } \int d^4 \theta (-2 \Sigma_1 V_1)$ in \eqref{L for 5_2}, 
which corresponds to level $-4$ for the background gauge field of the $SU(2)_x$ symmetry in $T[S^3\backslash \mathbf{5}_2]$, is taken into account. 
The gauged $SU(2)_x$ symmetry does not commute with $U(1)_{h_1}$, but commutes with $U(1)_{h_2}$, and thus we have
\begin{align}
\mathfrak{F}^{\rm hard}_{\rm d}[M] = \textrm{Span}\{ T_{h_2} \}\;.
\end{align}
Since we are considering integer Dehn filling, it follows from \eqref{F-ref[M]} that
\begin{align}
\mathfrak{F}^{\rm dehn}[M] = \{ \mathbf{0} \}
\quad \textrm{and} \quad
\mathfrak{F}^{\rm ref}[M] = \textrm{Span}\{ T_{h_2} \}\;.
\end{align}
Therefore, we expect that $T[(S^3\backslash \mathbf{5}_2)_{[P\mu+\lambda]}]$ generically has a rank-one flavor symmetry. In particular, when $P=5$, the UV field theory is IR dual to $T[(S^3\backslash \mathbf{4}_1)_{[-5 \mu+\lambda]}]$ in \eqref{T[4-1]} and \eqref{T[(4-1)P]}, which has a manifest $U(1)$ flavor symmetry.

\section{Refined 3D Index} \label{sec : refined 3D index-main}
    Motivated by the field-theoretic construction of \(T[M;\vec{\alpha}]\) in the previous section, we introduce a refined 3D index that computes the superconformal index of the theory in the presence of fugacities for the additional flavor symmetries in \eqref{F-ref[M]}. We further reformulate the refined index in terms of normal surface counting.
    
   \subsection{3D index} \label{sec : unrefined 3D index}
   As a warm-up, we first review the formula for the (unrefined) 3D index \cite{Dimofte:2011py,Gang:2018wek}. The 3D index computes the superconformal index, see \eqref{urefined index as SCI}, of the theory \(T[M;\vec{\alpha}]\) using only the symmetries manifest in the 6D compactification picture \eqref{T[M] from 6D}--\eqref{T[M] from 6D-3}. Via the 3D--3D correspondence \cite{Dimofte:2011py,Yagi:2013fda,Lee:2013ida}, the 3D index can alternatively be given as the partition function of \(SL(2,\mathbb{C})\) Chern--Simons theory on \(M\) at quantized level \(k=0\).

The 3D index, when it converges, defines a map
\begin{align}
\begin{split}
&\CI_M \;:\; H_1(\partial M;\tfrac12\mathbb{Z}) \longrightarrow \mathbb{Z}[[q^{1/2}]],\\
&b=A_1\alpha_1+\cdots + A_n \alpha_n + B_1\beta_1+\cdots+B_n\beta_n
\;\longmapsto\;
\CI_M^{b}
\in \mathbb{Z}[[q^{1/2}]] \; .
\end{split}
\end{align}
The index is given by \cite{Dimofte:2011py,Gang:2018wek,Celoria:2025lqm}
\begin{align}
\begin{split}
&\CI^{\vec{A}\cdot \vec{\alpha} + \vec{B}\cdot \vec{\beta}}_M
=
\CI_M^{(\vec{\alpha};\vec{\beta})}
(m_1,\ldots,m_n,e_1,\ldots,e_n)
\Big|_{\, e_i = A_i,\; m_i = -2 B_i } \; ,
\\[6pt]
&\CI_M^{(\vec{\alpha};\vec{\beta})}
(m_1,\ldots,m_n;e_1,\ldots,e_n)
=
\sum_{\{(m_{n+I},e_{n+I})\}_{I=1}^d }
\left(
\prod_{I=1}^d
\CK(P_I,Q_I;m_{n+I},e_{n+I})
\right)
\\
&\qquad\qquad \qquad\qquad  \qquad\qquad  \qquad \qquad \times\;
\CI_N^{(\vec{\alpha},\vec{\gamma};\vec{\beta},\vec{\delta})}
(m_1,\ldots,m_{n+d}, e_1,\ldots,e_{n+d}) \; .
\end{split}
\label{unrefined 3D index 1}
\end{align}
The summation ranges are
\begin{align}
m_{n+I} \in \mathbb{Z}\;, 
\qquad 
e_{n+I} \in \tfrac{1}{2}\mathbb{Z} \;. \label{sum range}
\end{align}
Here $\CI_N$ is the 3D index of $N$ written in the charge basis:
\begin{align}
\begin{split}
&\CI_N^{(\vec{\alpha},\vec{\gamma};\vec{\beta},\vec{\delta})}
(m_1,\ldots,m_{n+d},e_1,\ldots,e_{n+d})
\\
&=
\sum_{e_{n+d+1},\ldots,e_r \in \mathbb{Z}/2}
\left(
(-q^{1/2})^{\mathbf{m}\cdot \boldsymbol{\nu}_p - \mathbf{e}\cdot \boldsymbol{\nu}_x}
\prod_{a=1}^r
\CI_\Delta\!\left(
(g_{\rm NZ}^{-1}\boldsymbol{\kappa})_a,
(g_{\rm NZ}^{-1}\boldsymbol{\kappa})_{r+a}
\right)
\right)
\Bigg|_{\, m_{a>n+d} \rightarrow 0 } ,
\\
&\text{where }\;
\mathbf{m}:=(m_1,\ldots,m_r)^T\;,\quad
\mathbf{e}:=(e_1,\ldots,e_r)^T\;,\quad
\boldsymbol{\kappa}:=(m_1,\ldots,m_r,e_1,\ldots,e_r)^T \; . \label{unrefined 3D index 2}
\end{split}
\end{align}
The matrices $(g_{\rm NZ},\boldsymbol{\nu}_x,\boldsymbol{\nu}_p)$ denote the
Neumann--Zagier data introduced in \eqref{NZ matrices}. For the computation of the unrefined index, in constructing the matrices, the condition that $\gamma_I$ be non-closable is not required, and there is no need to distinguish between hard and easy internal edges.
The $\CI_\Delta$ denotes the tetrahedron index \cite{Dimofte:2011py}:
\begin{align}
\begin{split}
&\CI_\Delta(m,e)
:=
\begin{cases}
\displaystyle
\sum_{n=\lfloor e \rfloor}^{\infty}
\frac{(-1)^n\, q^{\frac12 n(n+1) - (n+\frac12 e)m}}
{(q)_n (q)_{n+e}}\;,
& m,e \in \mathbb{Z}\;,
\\[6pt]
0\;, & \text{otherwise}\;,
\end{cases}
\\&\text{where }\lfloor e\rfloor\equiv \frac{1}{2}(|e|-e)\text{ and }(q)_n\equiv\prod_{k=1}^n(1-q^k)\;.
\label{tetrahedron index}
\end{split}
\end{align}
Equivalently, the tetrahedron index in fugacity basis is given by
\begin{align}
&\CI^{\rm fug}_\Delta(m,u)
:=
\sum_{e\in\mathbb{Z}} \CI_\Delta(m,e)\, u^e
=
\prod_{r=0}^{\infty}
\frac{1 - q^{\,r-\frac12 m+1} u^{-1}}
     {1 - q^{\,r-\frac12 m} u}\;, \label{tetrahedron index -fugacity}
\end{align}
which compute the generalized superconformal index of $T_\Delta$ theory. 
The unrefined Dehn filling kernels are given by \cite{Gang:2018wek,Celoria:2025lqm}
\begin{align}
\begin{split}
&\CK(P,Q;m,e)
\\
&:=
\frac12 (-1)^{R m + 2 S e}
\Bigl[
\delta_{P m + 2 Q e,\,0}
\left(
q^{\frac{R m + 2 S e}{2}} + q^{-\frac{R m + 2 S e}{2}}
\right)
-\delta_{P m + 2 Q e,\,-2}
-\delta_{P m + 2 Q e,\,2}
\Bigr]. \label{unrefined Dehn filling kernel}
\end{split}
\end{align}
Here the integers $(R,S)$ are chosen such that
\begin{align}
\begin{pmatrix}
R & S \\
P & Q
\end{pmatrix}
\in SL(2,\mathbb{Z}) \; .
\end{align}
\paragraph{Convergence of the 3D index}
In summary, combining \eqref{unrefined 3D index 1} and \eqref{unrefined 3D index 2}, the 3D index is schematically given by the following infinite sum:
\begin{align}
\sum_{\{(m_{n+I},e_{n+I})\}_{I=1}^d} \sum_{e_{n+d+1}, \ldots, e_r} (\textrm{formal $q^{1/2}$ series}) .
\end{align}
The expression depends on the  auxiliary choices in \eqref{auxiliary choices}. In some cases this infinite sum does not converge as a formal $q^{1/2}$ series, in which case the 3D index is ill-defined. The convergence or divergence may depend on the auxiliary choices rather than on the manifold $M$ itself. When the sum converges, the resulting 3D index is expected to be independent of these choices and depend only on the 3-manifold $M$. It is also believed that for hyperbolic $M$, there always exists a choice of auxiliary data for which the index converges.  However, for some non-hyperbolic manifolds the 3D index may fail to converge for any choice of auxiliary data. Typical examples include reducible or toroidal manifolds, i.e. manifolds containing essential spheres or tori \cite{Choi:2022dju}.

The 3D index computes the following generalized superconformal index of the $T[M;\vec{\alpha}]$ theory:
\begin{align}
\begin{split}
&\CI_M^{(\vec{\alpha},\vec{\beta});{\rm fug}} (m_1, \ldots, m_n,  u_1, \ldots, u_n)
\\
&:= \sum_{e_1, \ldots, e_n\in \mathbb{Z}/2} 
\CI_M^{(\vec{\alpha},\vec{\beta})} (m_1, \ldots, m_n; e_1, \ldots, e_n)
\, u_1^{e_1}\ldots u_n^{e_n}
\\
&= \textrm{Tr}_{\CH (S^2; m_1, \ldots, m_n)}
(-1)^{R_{\rm geo}} 
q^{\frac{R_{\rm geo}}{2} + j_3} 
u_1^{T_{x_1}}\ldots u_n^{T_{x_n}} \;. \label{urefined index as SCI}
\end{split}
\end{align}
Here $\CH(S^2; m_1, \ldots, m_n)$ denotes the radially quantized Hilbert space of the $T[M;\vec{\alpha}]$ theory on $S^2$ in the presence of background monopole fluxes $\{m_i\}_{i=1}^n$ for the $U(1)$ flavor symmetries generated by $\{T_{x_i}\}_{i=1}^n$. When $m_1=\cdots=m_n=0$, the radially quantized Hilbert space
$\CH(S^2; m_1,\ldots,m_n)$ is isomorphic to the space of local operators
of the $T[M;\vec{\alpha}]$ theory (equivalently, to the spectrum of local operators) via the state--operator correspondence. The choice of $\vec{\beta}$ only affects the background Chern-Simons levels
for the $U(1)$ flavor symmetries, and its dependence can be factorized in the form $\prod_{i,j=1}^{n} u_i^{K_{ij} m_j}$.

\subsection{Refined 3D index} \label{sec : refined 3D index}
The refined index, when it converges, defines a map
\begin{align}
\begin{split}
&\CI_{M; {\rm ref}}
\;:\;
H_1\!\left(\partial M;\tfrac{1}{2} \mathbb{Z}\right)
\times
\Gamma\!\bigl(\mathfrak{F}^{\rm ref}[M]\bigr)
\longrightarrow
\mathbb{Z}[[q^{\frac{1}{2}}]][[\eta,\eta^{-1}]]\;,
\\
&\CI^{A_1\alpha_1+\cdots + A_n \alpha_n + B_1\beta_1+\cdots+B_n\beta_n}_{M; {\rm ref}}(\mathbf{v})
\in
\mathbb{Z}[[q^{\frac{1}{2}}]][[\eta,\eta^{-1}]]\;,
\end{split}
\end{align}
where
\begin{align}
\mathbf{v}
=
\sum_{\substack{I=1 \\ Q_I \neq \pm 1,\ \gamma_I\ \text{not marginal}}}^{d}
V_I \, T_{a_I}
+
\sum_{i=1}^{\sharp_H} W_i \, T_{h_i}
\;\in\;
\Gamma\!\bigl(\mathfrak{F}^{\rm ref}[M]\bigr)\,.
\end{align}
with $V_I, W_i \in \mathbb{Z}$. The vector space $\mathfrak{F}^{\rm ref}[M]$ is given in \eqref{F-ref[M]}. One nontrivial task is to identify the subspace $\mathfrak{F}^{\rm hard}_{\rm d}[M]$ of $\mathfrak{F}^{\rm hard}[N]$; this will be addressed below. The refined 3D index computes the following generalized refined superconformal index of the $T[M;\vec{\alpha}]$ theory:
\begin{align}
\begin{split}
&\CI_{M;{\rm ref}}^{(\vec{\alpha},\vec{\beta});{\rm fug}} (\mathbf{v};\vec{m},  \vec{u}) := \sum_{e_1, \ldots, e_n\in \mathbb{Z}/2} 
\CI_{M;{\rm ref}}^{(\vec{\alpha},\vec{\beta})} (\mathbf{v};\vec{m}, \vec{e})
\, \prod_{i=1}^n u_i^{e_i}
\\
&= \textrm{Tr}_{\CH (S^2; \vec{m})}
(-1)^{R_{\rm geo}} 
q^{\frac{R_{\rm geo}}{2} + j_3} 
u_1^{T_{x_1}}\ldots u_n^{T_{x_n}} \eta^{\mathbf{v}} \;.
\label{SCI interpretation of refined 3D index}
\end{split}
\end{align}
Here we define
\begin{align}
\CI^{(\vec{\a};\vec{\b})}_{M;{\rm ref}}(\mathbf{v};\vec{m}, \vec{e}):=
\CI^{\vec{A}\cdot \vec{\a}+\vec{B}\cdot \vec{\beta} }_{M;{\rm ref}} (\mathbf{v})|_{A_i =  e_i, B_i =- m_i/2}\;. \label{refined index in boundary 1-cycles}
\end{align}
For a given Dehn filling presentation $\mathcal{D}$ of $M$
in~\eqref{Dehn filling rep of M}, an ideal triangulation $\mathcal{T}$ of $N$
in~\eqref{Ideal triangulation of N}, and a choice of non-closable cycles $\vec{\gamma}$
in~\eqref{basis (a,b,g,d)-2}, the refined index is defined as follows:
\begin{align}
\begin{split}
&\CI^{(\vec{\alpha};\vec{\beta})}_{M;{\rm ref}} 
(\mathbf{v};m_1, \ldots, m_n, e_1,\ldots, e_n) 
\\
&= \sum_{\{(m_{n+I},e_{n+I})\}_{I=1}^d }
\;  \prod_{I=1}^d 
\begin{cases}
\CK_{\rm ref}\!\left(
P_I,Q_I;\,m_{n+I},e_{n+I};\,\eta^{V_I}
\right), 
& Q_I \neq \pm 1,\ \gamma_I \text{ is not marginal}
\\
\CK\!\left(
P_I,Q_I;\,m_{n+I},e_{n+I}
\right), 
& \text{otherwise}
\end{cases}
\\
&\qquad \times 
\CI^{(\vec{\alpha},\vec{\gamma}; \vec{\beta},\vec{\delta})}_{N ;{\rm ref}}
(\mathbf{v};m_1, \ldots, m_{n+d},e_1,\ldots, e_{n+d}, \mathbf{a}_{\rm d}, \mathbf{b}_{\rm d})
\;. 
\label{refined index in charge basis}
\end{split}
\end{align}
The ranges of summation are the same as in \eqref{sum range}.   Here, $\CK_{\rm ref}(P,Q; m, e;\eta)$ denotes the refined Dehn filling kernel, whose explicit expression is given in \eqref{refined Dehn filling kernel}. It reduces to the unrefined kernel in \eqref{unrefined Dehn filling kernel} when $\eta=1$ (i.e.\ when the refinement is turned off) or when $Q=\pm 1$ (i.e.\ for integral Dehn filling).   $\CI_{N; {\rm ref}}$ is the refined 3D index for $N$ in charge basis
\begin{align}
\begin{split}
&\CI^{(\vec{\alpha},\vec{\g};\vec{\b}, \vec{\d})}_{N;{\rm ref}} (\mathbf{v};\vec{m},\vec{e}) 
\\
&= \sum_{e_{n+d+1}, \ldots, e_r \in \mathbb{Z}/2}\left(  (-q^{1/2})^{\mathbf{m}\cdot \boldsymbol{\nu}_p - \mathbf{e}\cdot \boldsymbol{\nu}_x } \prod_{a=1}^r \CI_\Delta \left((g_{\rm NZ}^{-1}\boldsymbol{\kappa})_a, (g_{\rm NZ}^{-1}
\boldsymbol{\kappa})_{r+a}\right) \prod_{i=1}^{\sharp_H} \eta^{2 W_i e_{n+d+i}}\right)\bigg{|}_{m_{a>n+d}\rightarrow 0}\;, \label{refined index for N in charge basis}
\end{split}
\end{align}
where  $\mathbf{m}:= (m_1, \ldots ,m_r)^T\;, \quad \mathbf{e}:=(e_1, \ldots, e_r)^T\;, \quad \boldsymbol{\kappa} :=(m_1, \ldots, m_r, e_1, \ldots, e_r)^T$.
The matrices $(g_{\rm NZ},\boldsymbol{\nu}_x,\boldsymbol{\nu}_p)$ denote the
Neumann--Zagier data introduced in \eqref{NZ matrices}. For the computation of the refined index, unlike in the unrefined case, one must choose $\gamma_I$ to be non-closable in constructing these matrices, and distinguish between hard and easy internal edges, as in \eqref{NZ matrices}. As in the construction of $T[M;\vec{\alpha}]$, the refined index cannot be computed using a Dehn surgery representation in \eqref{Dehn filling rep of M} if $N$ does not admit non-closable cycles. We further define, for
\begin{align}
\begin{split}
%&a^{(i)}_{1\le I\le n}\in \frac{\mathbb{Z}}{\Omega(\alpha_I)},\quad
%b^{(i)}_{1\le I\le n}\in \frac{\Omega(\alpha_I)}{2}\mathbb{Z},
%\\
&\mathbf{a} = (\vec{a}_1,\ldots, \vec{a}_{d})  \textrm{ with }\vec{a}_{1\leq I \leq d} = (a_I^{(1)},\ldots, a_I^{(\sharp_H)})   \in  \mathbb{Z}^{\sharp_H}\;,
\\
&\mathbf{b} = (\vec{b}_1,\ldots, \vec{b}_{d})  \textrm{ with }\vec{b}_{1\leq I \leq d} = (b_I^{(1)},\ldots, b_I^{(\sharp_H)})   \in \left( \mathbb{Z}/2 \right)^{\sharp_H}\;,
\end{split}
\end{align}
the refined index with background shifts $(\mathbf{a},\mathbf{b})$ by
\begin{align}
\begin{split}
&\CI^{(\vec{\alpha},\vec{\gamma};\vec{\beta},\vec{\delta})}_{N;{\rm ref}}
(\mathbf{v};\vec{m},\vec{e};\mathbf{ a},\mathbf{b})
\\
&:=
\eta^{\,2\sum_{i=1}^{\sharp_H}\sum_{I=1}^{d}
W_i\bigl(a^{(i)}_I e_{n+I} + b^{(i)}_I m_{n+I}\bigr)}
\CI^{(\vec{\alpha},\vec{\gamma};\vec{\beta},\vec{\delta})}_{N;{\rm ref}}
(\mathbf{v};m_1,\ldots,m_{n+d},e_1,\ldots,e_{n+d})\; . \label{refined index with (a,b)}
\end{split}
\end{align}
The specific vector $(\mathbf{a}_{\rm d}, \mathbf{b}_{\rm d})$ appearing in \eqref{refined index in charge basis} will be determined below using the compatibility conditions in \eqref{Dehn filling compatibility}. At the level of the gluing equations, the vectors $(\mathbf{a},\mathbf{b})$ can be
implemented through the shifts
\begin{align}
\begin{split}
hol(\gamma_I) &\;\longrightarrow\;
hol(\gamma_I)
-\sum_i a^{(i)}_{I}\,(H_i-2\pi \mathbf{i})\;,
\\
hol(\delta_I) &\;\longrightarrow\;
hol(\delta_I)
+ 2\sum_i b^{(i)}_{I}\,(H_i-2\pi \mathbf{i})\; .
\end{split}
\end{align}
These shifts encode the ambiguity associated with the choice of representatives
of boundary $1$-cycles, since adding loops around internal edges leaves their
classes in $H_1(\partial N;\mathbb{Z})$ unchanged.
Unlike the unrefined index, the refined index is not invariant under these
shifts and may change by an overall factor of the form $\eta^{k}$,
$k\in\mathbb{Z}$, as shown in \eqref{refined index with (a,b)}.
Therefore, the refined 3D index is defined only up to such overall
multiplicative factors. In the field theory $T[N;\vec{\alpha},\vec{\gamma}]$, the shift $\mathbf{a}$
corresponds to a redefinition of background gauge fields,
\begin{align}
(A_I,\vec{B}) \;\longrightarrow\; (A_I -  \vec{a}_I \cdot \vec{B}, \vec{B} ) \; . \label{mixing : a}
\end{align}
Here $A_{I=1,\ldots,d}$ denote the background gauge fields for the Cartan
$U(1)^d $ flavor symmetry in \eqref{flavor of T[N]}, while
$\vec{B}= (B_1, \ldots, B_{\sharp_H})$ are the background gauge fields for the
$U(1)^{\sharp_H}$ flavor symmetry in \eqref{flavor of T[N]}. The $\mathbf{b}$ corresponds to  adding the following background mixed CS terms 
\begin{align}
\delta \mathcal{L} = \frac{1}{2\pi }\sum_{I, i}b^{(i)}_I  B_i \wedge dA_I\;. \label{mixing : b}
\end{align}
\paragraph{The vector spaces $ \mathfrak{F}^{\rm ref}[M]:= \mathfrak{F}^{\rm hard}_{\rm d}[M] \oplus \mathfrak{F}^{\rm dehn}[M] , \mathfrak{F}^{\rm ref}_{\rm dec}[M]$, and $\mathfrak{F}^{\rm ref}_{\rm IR}[M]$}

We now determine the vector space $\mathfrak{F}^{\rm ref}_{\rm IR}[M]$ in \eqref{F-ref[M]-IR}.  This vector space captures the subset of IR flavor symmetries that are not manifest in the 6D construction \eqref{T[M] from 6D}--\eqref{T[M] from 6D-3}, but are manifest in the UV field-theoretic description in \eqref{T[M] theory}. To this end, we first identify the subspace $\mathfrak{F}^{\rm hard}_{\rm d}[M] \subset \mathfrak{F}^{\rm hard}[N]$ appearing in \eqref{F-ref[M]}. We say that a vector $\mathbf{v} \in \Gamma(\mathfrak{F}^{\rm hard}[N])$ is
\emph{compatible with Dehn filling} if there exists a choice of $(\mathbf{a},\mathbf{b})$ such that the following two conditions are satisfied:
\begin{align}
\begin{split}
\text{1)}\;& 
\CI_{N;{\rm ref}}(\mathbf{v};\vec{m},\vec{e};\mathbf{a},\mathbf{b})
\ \text{is invariant under the Weyl $\mathbb{Z}_2$ action }
(m_i,e_i)\mapsto (-m_i,-e_i)\;,
\\
&\text{for each } i=n+1,\ldots,n+d\;,
\\[4pt]
\text{2)}\;& 
\CI_{q^1}(\eta,u_1,\ldots,u_{n+d};\mathbf{a},\mathbf{b})
\Big|_{(\mathrm{adj}\; SU(2)_I)}
\leq -1\;,
\\
&\text{or }
\CI_{N;{\rm ref}}^{(\vec{\a},\vec{\g};\vec{\beta},\vec{\d})}
(\mathbf{v};\vec{m}=\vec{0},\vec{e})
\neq 0
\quad \text{only if } e_{n+I}=0\;,
\\
&\text{for all } I=1,\ldots,d \;.
\label{Dehn filling compatibility}
\end{split}
\end{align}
We  define 
\begin{align}
\mathfrak{F}_{\rm d}^{\rm hard}= \textrm{Span}\left\{
\mathbf{v} \in \Gamma(\mathfrak{F}^{\rm hard}) \;\middle|\;   \mathbf{v}\textrm{ is compatible with Dehn filling}
\right\}\;, \label{F-hard-0}
\end{align}
which represent the Cartan subalgebra of a subgroup $F^{\rm hard}_{\rm d}[M]$ of $F^{\rm hard}[N] = U(1)^{\sharp_H}$
that survives after the gauging of $SU(2)^d$. We expect that 
there exists a choice of $(\mathbf{a},\mathbf{b})$, denoted by
$(\mathbf{a}_{\rm d},\mathbf{b}_{\rm d})$, for which the two conditions in
\eqref{Dehn filling compatibility} are satisfied for all
$\mathbf{v}\in\Gamma(\mathfrak{F}_{\rm d}^{\rm hard})$.
This choice appears in the refined index
\eqref{refined index in charge basis}.

In the above, $\CI_{q^1}(\eta, \vec{u}; \mathbf{a},\mathbf{b})$ denotes the coefficient of $q^1$ of the refined index in the fugacity basis
\begin{align}
\begin{split}
& \CI_{q^1}(\eta; \vec{u}; \mathbf{a},\mathbf{b})
 :=
 \text{coeff}_{q^1}
 \left[
\sum_{\vec{e} \in (\mathbb{Z}/2)^{n+d}} \CI^{(\vec{\alpha},\vec{\g};\vec{\b}, \vec{\d})}_{N;{\rm ref}}
 (\mathbf{v};\vec{m}= \vec{0}, \vec{e};\mathbf{a},\mathbf{b})
 \prod_{i=1}^{n+d} u_i^{e_i} 
 \right]\;.
 \end{split}
\end{align}
The notation $|_{(\mathrm{adj}\; SU(2)_I)}$ denotes the projection onto the
adjoint representation of $SU(2)_I$ and the trivial representation of the
other flavor symmetries. Explicitly,
\begin{align}
\begin{split}
&f(\eta,u_1,\ldots,u_{n+d})\Big|_{(\mathrm{adj}\; SU(2)_I)}
\\
&:=
\left(\oint\frac{d\eta}{2\pi \mathbf{i}\,\eta}\right)
\left(\prod_{i=1}^{n}\oint\frac{du_i}{2\pi \mathbf{i}\,u_i}\right)
\left(\frac{1}{2^{d}}
\prod_{J=1}^{d}
\oint
\frac{du_{n+J}}{2\pi \mathbf{i}\,u_{n+J}}
(2-u_{n+J}^{2}-u_{n+J}^{-2})\right)
\\
&\qquad\times
(1+u_{n+I}^{2}+u_{n+I}^{-2})\,
f\!\left(\eta,u_1^2,\ldots, u^2_{n+d}\right).
\end{split}
\end{align}
The integration contour is along the unit circles, $|\eta|=|u_i|=|u_{n+J}|=1$. Condition~1) is required so that the symmetry generated by the Cartan element
$\mathbf{v}$ commutes with the Weyl $(\mathbb{Z}_2)^d$ subgroup of $SU(2)^d$.
The first condition in 2) ensures that the IR SCFT contains conserved current
multiplets of $SU(2)^d$ that are neutral under both the symmetry generated by
the Cartan element $\mathbf{v}$ and the $U(1)^n$ symmetries, whose Cartan
generators are $\{T_{x_i}\}_{i=1}^n$ in \eqref{Cartans of T[N]}.
In 3D $\mathcal{N}=2$ superconformal field theories, the conserved current
multiplet for a flavor symmetry $F$ contributes to the superconformal index as
\begin{align}
-\,\chi_{\mathrm{adj}}(F)\, q \, ,
\end{align}
where $\chi_{\mathrm{adj}}(F)$ denotes the character of the adjoint
representation of $F$.
For the $SU(2)_I$ ($I=1,\ldots,d$) symmetry to commute with the other symmetries—
namely the $U(1)$ symmetry generated by the Cartan element $\mathbf v$ and the
$U(1)^n$ symmetries with Cartan generators $\{T_{x_i}\}_{i=1}^n$—the conserved
current multiplets of $SU(2)_I$ must be neutral under these abelian symmetries.
In terms of the superconformal index, this implies that the 
contribution from the $SU(2)^d$ conserved current multiplet,
\begin{align}
-\,\chi_{\mathrm{adj}}\!\left(SU(2)^d\right)\, q \; ,
\end{align}
appears without any additional fugacity factors, $\eta$ or
$\{u_i\}_{i=1}^n$, associated with $U(1)_{\mathbf v}$ or $U(1)^n$.
Equivalently, the coefficient of the adjoint character
$\chi_{\mathrm{adj}}\!\left(SU(2)^d\right)$ must be independent of
$\eta$ and $\{u_i\}_{i=1}^n$. The second condition in 2) is intended for the case in which the \(SU(2)_I\) flavor symmetry decouples in the IR. In that case, the first condition in 2) is not imposed, since the corresponding flavor current multiplet is absent in the IR, and the \(SU(2)_I\) symmetry commutes trivially with all other symmetries.

Since the $U(1)^d$ symmetries are the Cartan subalgebra of
$SU(2)^d$, they cannot mix with other abelian symmetries and
cannot have mixed Chern--Simons terms with them. Consequently, only special
values of the mixing parameters $(\mathbf{a},\mathbf{b})$ in
\eqref{mixing : a} and \eqref{mixing : b}—those corresponding to the absence of
such mixing—are allowed.  Only for these values do we expect that
the conditions in 1) and 2) can be simultaneously satisfied. 

We finally define
\begin{align}
\begin{split}
\mathfrak{F}^{\rm ref}_{\rm dec}[M]
&:= 
\operatorname{Span}\Bigl\{
\mathbf{v}\in \Gamma\!\bigl(\mathfrak{F}^{\rm ref}[M]\bigr)
\;\Big|\;
\CI^{b}_{M;{\rm ref}}(\mathbf{v})
=
\eta^{\,f(b)}\,\CI^{b}_{M;{\rm ref}}(\mathbf{0})
\\
&\hspace{4.5cm}
\text{for some linear map }
f: H_1\!\left(\partial M;\tfrac{1}{2}\mathbb{Z}\right)\to \mathbb{Z}
\Bigr\},
\\[4pt]
\mathfrak{F}^{\rm ref}_{\rm IR}[M]
&:=
\mathfrak{F}^{\rm ref}[M]\big/\mathfrak{F}^{\rm ref}_{\rm dec}[M] \;.
\label{F-ref[M]-IR-2}
\end{split}
\end{align}
Here, the vector space \(\mathfrak{F}^{\rm ref}[M] = \mathfrak{F}^{\rm dehn}[M]\oplus \mathfrak{F}^{\rm hard}_{\rm d}[M]\) is defined in \eqref{F-ref[M]}. The condition implies that the superconformal index is independent of the fugacity associated with the vector \(\mathbf{v}\), up to background Chern--Simons terms \eqref{mixing : b} or a redefinition of the background vector field \eqref{mixing : a}.
Physically, $\mathfrak{F}^{\rm ref}_{\rm dec}[M]$ captures the Cartan subalgebra of
the flavor symmetry that decouples in the IR, while
$\mathfrak{F}^{\rm ref}_{\rm IR}[M]$ corresponds to the Cartan subalgebra of the
faithful flavor symmetry acting nontrivially in the IR. Effectively, the refined index can be regarded as a map
\begin{align}
\CI_{M;{\rm ref}}
\;:\;
H_1(\partial M;\frac{1}2 \mathbb{Z})
\times
\Gamma\!\left(\mathfrak{F}^{\rm ref}_{\rm IR}[M]\right)
\;\longrightarrow\;
\mathbb{Z} [[q^{ \frac{1}2}]] [[\eta, \eta^{-1}]] \; .
\end{align}
The vector spaces $\mathfrak{F}^{\rm ref}[M]$, $\mathfrak{F}^{\rm ref}_{\rm dec}[M]$,
and $\mathfrak{F}^{\rm ref}_{\rm IR}[M]$, and hence the 3D refined index
$\CI_{M;{\rm ref}}$, are not intrinsic to $M$; rather, they generally depend on
auxiliary choices in \eqref{auxiliary choices}. These vector spaces are well-defined only when the refined index converges for the chosen data. 

We now present the main conjecture of the paper.
\begin{align}
\begin{split}
&\textbf{Conjecture.}
\\
&\text{For a given $3$-manifold $M$, there exists a choice }
(\mathcal{D}_{0},\mathcal{T}_{0},\vec{\gamma}_{0})
\text{ such that}
\\[4pt]
& \text{i)}\;
\dim\!\Bigl(
\mathfrak{F}^{\rm ref}_{\rm IR}\!\bigl[M;(\mathcal{D},\mathcal{T},\vec{\gamma})\bigr]
\Bigr)
\;\le\;
\dim\!\Bigl(
\mathfrak{F}^{\rm ref}_{\rm IR}\!\bigl[M;(\mathcal{D}_{0},\mathcal{T}_{0},\vec{\gamma}_{0})\bigr]
\Bigr)
< \infty\;,
\\
&\hspace{0.5cm}\text{for all admissible }(\mathcal{D},\mathcal{T},\vec{\gamma})\;,
\\[6pt]
&\text{ii)}\;
\text{For a general admissible choice }(\mathcal{D},\mathcal{T},\vec{\gamma}),
\text{ there exists an injective linear map}
\\
&\quad
F :
\mathfrak{F}^{\rm ref}_{\rm IR}\!\bigl[M;(\mathcal{D},\mathcal{T},\vec{\gamma})\bigr]
\;\hookrightarrow \;
\mathfrak{F}^{\rm ref}_{\rm IR}\!\bigl[M;(\mathcal{D}_{0},\mathcal{T}_{0},\vec{\gamma}_{0})\bigr]
\\
&\quad
\text{such that}
\quad
\CI_{M;{\rm ref}}\!\bigl[\mathbf{v};(\mathcal{D},\mathcal{T},\vec{\gamma})\bigr]
\sim
\CI_{M;{\rm ref}}\!\bigl[F\cdot \mathbf{v};(\mathcal{D}_0,\mathcal{T}_0,\vec{\gamma}_0)\bigr]\;.
\label{main conjecture}
\end{split}
\end{align}
Here we assume that the $3$-manifold $M$ admits at least one choice of $(\mathcal{D},\mathcal{T},\vec{\gamma})$ for which the 3D index converges, and that all triples appearing in the conjecture are restricted to such admissible choices. For some non-hyperbolic 3-manifolds, however, as we will see in Section~\ref{sec : examples}, this condition is not guaranteed, and the conjecture cannot be formulated. The equivalence relation appearing in the conjecture is defined as follows:
\begin{align}
\begin{split}
&\CI^{b}_1 [\mathbf{v}]
\sim
\CI^{b}_2 [F\cdot \mathbf{v}]
\\
&
\text{if there exist a bilinear map }
f : H_1 (\partial M, \frac{1}2 \mathbb{Z}) \times \Gamma(\mathfrak{F}_1) \rightarrow \mathbb{Z}
\text{ such that}
\\
&
\CI^b_1 [\mathbf{v}]
=
\eta^{f(b,\mathbf{v})}
\,\CI^b_2 [F\cdot \mathbf{v}]\;,
\qquad
\forall \mathbf{v}\in \Gamma\!\left(\mathfrak{F}_1\right)\;.
\end{split}
\end{align}
In other words, the conjecture asserts that the refined index is
\emph{independent of the auxiliary choices up to embedding}.
While different choices of $(\mathcal{D},\mathcal{T},\vec{\gamma})$
may lead to different vector spaces
$\mathfrak{F}^{\rm ref}_{\rm IR}[M;(\mathcal{D},\mathcal{T},\vec{\gamma})]$,
all such spaces admit injective linear maps into a maximal one,
$\mathfrak{F}^{\rm ref}_{\rm IR}[M;(\mathcal{D}_{0},\mathcal{T}_{0},\vec{\gamma}_{0})]$.
Under these maps, the refined index is preserved (up to the equivalence
relation defined above). 

In Section~\ref{sec : examples}, for some non-hyperbolic  3-manifolds we
propose the fully refined index. We also identify certain non-hyperbolic
manifolds that do not admit any triple for which the 3D index converges.
For hyperbolic manifolds, as illustrated in the examples, the distinguished
choice $(\mathcal{D}_{0},\mathcal{T}_{0},\vec{\gamma}_{0})$ need not be special
or finely tuned. In practice, a generic choice of $(\mathcal{D},\mathcal{T},\vec{\gamma})$ already achieves the maximal refinement.

We  define
\begin{align}
r_{\rm ref}[M]
:=
\dim\!\Bigl(
\mathfrak{F}^{\rm ref}_{\rm IR}\!\bigl[M;(\mathcal{D}_{0},\mathcal{T}_{0},\vec{\gamma}_{0})\bigr]
\Bigr)
=
\max_{(\mathcal{D},\mathcal{T},\vec{\gamma})}
\dim\!\Bigl(
\mathfrak{F}^{\rm ref}_{\rm IR}\!\bigl[M;(\mathcal{D},\mathcal{T},\vec{\gamma})\bigr]
\Bigr)\;.
\end{align}
The quantity $r_{\rm ref}[M]$, the maximal number of refinements, is itself an invariant of the 3-manifold. In general, checking whether a given choice $(\CD, \CT, \vec{\gamma})$ corresponds to $(\CD_0, \CT_0, \vec{\gamma}_0)$ is a nontrivial task. In the following, we provide a necessary condition for $(\CD, \CT, \vec{\gamma})$ to be a $(\CD_0, \CT_0, \vec{\gamma}_0)$.

\paragraph{Necessary condition for $(\CD_0, \CT_0, \vec{\gamma}_0)$} For closed 3-manifolds without torus boundary, there is a sufficient condition for
$r_{\rm ref}[M]\ge 1$:
\begin{align}
\begin{split}
&\textrm{coeff}_{q^1} (\CI_M) :=\bigl(\text{the coefficient of $q$ in the 3D index }\CI_M\bigr)<0
\\
&\Rightarrow \quad
r_{\rm ref}[M]\ge 1 \, .
\end{split}
\end{align}
This follows from the fact that, in a 3D $\CN=2$ SCFT, the only superconformal multiplet that contributes
a term $-q$ to the superconformal index is the flavor current multiplet.\footnote{The R-charge $R_{\rm geo}$ used in the 3D index \eqref{urefined index as SCI} need not coincide with the
superconformal R-charge, so the $-q$ term is not necessarily due to a flavor
current multiplet. However, such a discrepancy can arise only in the presence
of a flavor symmetry that mixes with the R-symmetry, implying the existence
of a $U(1)$ flavor symmetry.} As an example, consider
\begin{align}
M=(S^3\backslash \mathbf{4}_1)_{[-5\mu+\lambda]}
  =(S^3\backslash \mathbf{5}_2)_{[5\mu+\lambda]}\;.
\end{align}
The 3D index of this 3-manifold is
\begin{align}
\CI_M(q)=1-q-2q^2-q^3+\ldots \; .
\end{align}
Since the coefficient of $q$ is negative, we expect $r_{\rm ref}[M]\ge 1$. 
In fact, one finds that $r_{\rm ref}[M]=1$, and this refinement is manifest for the following choice of auxiliary data:
\begin{align}
\begin{split}
\CD_0 \;&:\; M=(S^3\backslash \mathbf{5}_2)_{[5\mu+\lambda]}\;,
\\
\CT_0 \;&:\; \textrm{the ideal triangulation in \eqref{gluing eqs for 5_2}}\;,
\\
\vec{\gamma}_0 \;&=\; (\mu)\;.
\end{split}
\end{align}
Generalizing the above argument to a 3-manifold $M$ with $n$ torus boundaries,
we claim that
\begin{align}
\text{coeff}_{q^1} \bigl(\CI^{\mathbf{0}}_M\bigr)< -n
\quad \Rightarrow \quad
r_{\rm ref}[M]\ge 1 \, .
\end{align}
The 3D index $\CI_M^{\mathbf{0}}$ counts local operators of the theory
$T[M;\vec{\alpha}]$ that are neutral under the $U(1)^n$ flavor symmetry,
whose Cartan generators are $T_{x_1},\ldots,T_{x_n}$ in
\eqref{F-tori[M]}. The flavor current multiplets associated with
$U(1)^n$ contribute $-n q$ to the index. Therefore, if the coefficient of
$q$ in $\CI_M^{\mathbf{0}}$ is smaller than $-n$, it implies the presence
of additional flavor symmetry commuting with $U(1)^n$. Such an enhanced
flavor symmetry leads to $r_{\rm ref}[M]\ge 1$.
Further generalizing the argument, we obtain the following necessary condition:
\begin{align} \label{maximal refinement necessary}
\begin{split}
&\text{If }(\CD,\CT,\vec{\gamma}) \text{ is a choice corresponding to }
(\CD_0,\CT_0,\vec{\gamma}_0) \text{ in the conjecture \eqref{main conjecture}}\;,\\
&\text{coeff}_{q^1}\left( \CI^{\mathbf{0}}_{M;{\rm ref}}|_{\rm neutral}\right)
\;\ge\;
- \dim\!\bigl(\mathfrak{F}_{\rm IR}[M;(\CD,\CT,\vec{\gamma})]\bigr) - n \, .
\end{split}
\end{align}
Here $\CI^{\mathbf{0}}_{M;{\rm ref}}\big|_{\rm neutral}$ denotes the part of
$\CI^{\mathbf{0}}_{M;{\rm ref}}(\mathbf{v})$ that is independent of $\eta$ for
any $\mathbf{v}\in \Gamma\bigl(\mathfrak{F}^{\rm ref}[M;\CD,\CT,\vec{\gamma}]\bigr)$.
It counts local operators of $T[M;\vec{\alpha}]$ that are neutral under both
the $U(1)^n$ flavor symmetry in \eqref{F-tori[M]} and the additional symmetries
in \eqref{F-ref[M]}. If the inequality is violated, it implies the existence of
further emergent flavor symmetries.

\paragraph{$(\CD_0,\CT_0,\vec{\gamma}_0)$ under integral Dehn filling}
Let $N$ be a $3$-manifold with $(n+1)$ torus boundaries, and let $(\CD_0,\CT_0,\vec{\gamma}_0)$ be the choice appearing in the conjecture \eqref{main conjecture}, which yields the maximal refinement. Consider a $3$-manifold $M$ obtained from $N$ by
Dehn filling along the $(n+1)$-th boundary,
\begin{align}
M = N_{[p\gamma_{n+1}+\delta_{n+1}]}\,,
\end{align}
where $\gamma_{n+1}\in H_1(T_{n+1},\mathbb{Z})$ is a non-closable cycle.
Then, for sufficiently large $|p|$, we expect that the following data
\begin{align}
\begin{split}
&\widetilde{\CD}_0
= \bigl(\text{the Dehn surgery presentation of $M$ obtained by combining }
\CD_0 \text{ for $N$}
\\
&\qquad\quad \text{with the filling } M = N_{[p\gamma_{n+1}+\delta_{n+1}]}\bigr)\;,
\\
&\widetilde{\CT}_0 = \CT_0\;, \qquad
\widetilde{\vec{\gamma}}_0
= \vec{\gamma}_0 \cup \{\gamma_{n+1}\}
\end{split}
\end{align}
provide the choice appearing in the conjecture \eqref{main conjecture}. The reason is that, at sufficiently large Chern--Simons level $p$, the quantum fluctuations of the gauge field introduced by the Dehn filling are suppressed. Consequently, the naive UV flavor-symmetry counting remains reliable, and no accidental symmetry emerges from the gauging.

As an example, consider $N=S^3\backslash \mathbf{5}^2_1$ (the Whitehead link complement), with the following choices:
\begin{align}
\begin{split}
&\CD_0 \;:\; N = S^3\backslash \mathbf{5}^2_1 \qquad (d=0)\;,
\\
&\CT_0 \;:\;\text{an ideal triangulation provided by \texttt{SnapPy}; see \eqref{Gluing equations : 5_2 and 5^2_1}}\;,
\\
&\vec{\gamma}_0 = \emptyset\;.
\end{split}
\end{align}
There are four ideal tetrahedra in $\CT_0$, with one easy internal edge $E_1$ and one hard internal edge $H_1$. The hard edge gives rise to an additional $U(1)^{\sharp_H=1}$ symmetry, and this symmetry turns out not to decouple in the IR. Thus, we have
\begin{align}
\mathfrak{F}^{\rm ref}_{\rm IR}[S^3\backslash \mathbf{5}^2_1;\CD_0,\CT_0,\vec{\gamma}_0]
=
\mathfrak{F}^{\rm hard}[S^3\backslash \mathbf{5}^2_1;\CD_0,\CT_0,\vec{\gamma}_0] = \textrm{Span} \{T_{h_1}\}\;.
\end{align}
We further claim that this choice realizes the maximal refinement. In particular,
\begin{align}
\begin{split}
r_{\rm ref}[S^3\backslash \mathbf{5}^2_1]
&=
\dim\!\Bigl(
\mathfrak{F}^{\rm ref}_{\rm IR}[S^3\backslash \mathbf{5}^2_1;\CD_0,\CT_0,\vec{\gamma}_0]
\Bigr)
=1\;.
\end{split}
\end{align}
Now consider
\begin{align}
M_p := (S^3\backslash \mathbf{5}^2_1)_{[p\mu_2+\lambda_2]}\;,
\qquad p\ge 1\,.
\end{align}
Here $\mu_2$ and $\lambda_2$ denote the meridian and longitude of the second boundary torus of the link complement.
For $M_p$, we choose
\begin{align}
\begin{split}
&\widetilde{\CD} \;:\; N = S^3\backslash \mathbf{5}^2_1\;,
\qquad \{P_I\tilde{\gamma}_I+Q_I\tilde{\delta}_I\}_{I=1}^{d=1} = \{p\mu_2+\lambda_2\}\;,
\\
&\widetilde{\CT} \;:\; \text{an ideal triangulation provided by \texttt{SnapPy}; see \eqref{Gluing equations : 5_2 and 5^2_1}}\;,
\\
&\widetilde{\vec{\gamma}} = (\mu_2)\;.
\end{split}
\end{align}
For sufficiently large $p$, we expect that this choice provides the distinguished data for $M_p$, realizing the maximal refinement. Since the Dehn filling is integral (i.e.\ $(P_1,Q_1)=(p,1)$ with $(\tilde{\gamma}_1,\tilde{\delta}_1)=(\mu_2,\lambda_2)$), there is no additional flavor symmetry arising from the Dehn filling. Hence,
\begin{align}
\mathfrak{F}^{\rm ref}[M_p;\widetilde{\CD},\widetilde{\CT},\widetilde{\vec{\gamma}}]
=
\mathfrak{F}^{\rm hard}_{\rm d} [M_p;\widetilde{\CD},\widetilde{\CT},\widetilde{\vec{\gamma}}]\;.
\end{align}
It turns out that $\mathbf{v}=T_{h_1}$ is not compatible with the Dehn filling, and therefore
\begin{align}
\mathfrak{F}^{\rm ref}_{\rm IR}[M_p;\widetilde{\CD},\widetilde{\CT},\widetilde{\vec{\gamma}}]
=
\{\mathbf{0}\}\;.
\end{align}
Accordingly, we expect that
\begin{align}
r_{\rm ref}[M_p]=0
\qquad \text{for sufficiently large } p\;.
\end{align}
The manifold $M_p$ is homeomorphic to the once-punctured torus bundle with monodromy matrix
\[
\varphi = LR^p \in SL(2,\mathbb{Z})\;,
\]
where
\begin{align}
L=\begin{pmatrix}1&1\\0&1\end{pmatrix}\;,
\qquad
R=\begin{pmatrix}1&0\\1&1\end{pmatrix}\;.
\end{align}
One can indeed check that, in an ideal triangulation provided by \texttt{SnapPy}, this once-punctured torus bundle has no hard internal edge for $p\ge 2$. In \texttt{SnapPy}, the manifold $M_p$ can be input as `$b++ L\overbrace{R\cdots R}^{p\ \text{times}}$'.
This is consistent with the expectation that $r_{\rm ref}[M_p]=0$ for sufficiently large $p$, and we therefore claim that
\begin{align}
r_{\rm ref}[M_p]=0
\qquad \text{for } p\ge 2\;.
\end{align}
By the same reasoning, one can conclude that
\begin{align}
r_{\rm ref}[(S^3\backslash \mathbf{4}_1)_{[p \mu+\lambda]}]=0\;, 
\quad 
r_{\rm ref}[(S^3\backslash \mathbf{5}_2)_{[p \mu+\lambda]}]=1\;, 
\quad 
\text{for sufficiently large } |p| \;.
\label{r-ref for 4-1-p and 5-2-p}
\end{align}
\subsection{Refined 3D Index from Normal Surface Counting} \label{sec : index from normal surface counting}
In this section, we briefly review the construction of 3D index via normal surface counting proposed in \cite{GHRS15} and \cite{GHHR16}. Then we follow the same procedure to reconstruct the refined 3D index and prove the invariance under Pachner 2-3 moves. 
\subsubsection{\(Q\)-Normal Surface theory}
Let \(N\) be a 3-manifold with \((n+d)\) torus boundaries equipped with an ideal triangulation \(\CT\) consisting of \(r\) tetrahedra \(\Delta_a\). A \emph{normal surface} is an embedded surface in \(N\) that intersects each ideal tetrahedron in a collection of elementary disks. Up to isotopy, these disks come in two topologies: triangles, which cut off the ideal vertices of the tetrahedron, and quadrilaterals, which separate pairs of edges of the tetrahedron as illustrated in \ref{fig : quad types}.

In classical normal surface theory, surfaces are strictly embedded, restricting each tetrahedron to contain at most one type of quadrilateral disk to prevent self-intersection. \(Q\)-normal surfaces relax this constraint, allowing multiple quadrilateral types to coexist and intersect along \emph{double arcs}. These intersections are crucial in our construction of refined index. Furthermore, because we work with ideal triangulations whose vertices form torus boundaries, these surfaces do not terminate but rather spin asymptotically around the boundary tori. This spinning behavior geometrically encodes the boundary magnetic and electric charges via the peripheral components. Because the triangular disks are completely determined by how the surface meets the boundary of \(N\), the non-trivial bulk geometry of the surface is entirely captured by the quadrilateral disks. For these reasons, the 3D index counts specifically \emph{spun \(Q\)-normal surfaces}. 

\begin{figure}[h]
    \centering
    \includegraphics[width=0.7\linewidth]{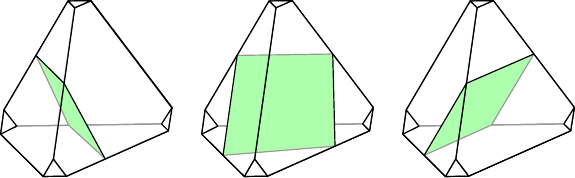}
    \caption{Quad types in single tetrahedra}
    \label{fig : quad types}
\end{figure}
There are three types of quadrilaterals in a single tetrahedron and each of them exactly separates the corresponding pair of opposite edges. Hence, each quad can be labeled by edge parameters \((Z_a, Z_a', Z_a'') \) and \(Q\)-normal classes are represented by the quad coordinate \((f_a, g_a, h_a)\). That is, for example, \(S=(0,0,1,2,1,0)\) is a surface that intersects with the first tetrahedron in third type once, intersects with the second tetrahedron in first type twice and second type once. 

For these isolated quadrilaterals to form a globally well-defined, continuous surface across the entire 3-manifold, they must glue together consistently at the boundaries of the tetrahedra. This geometric requirement imposes a set of linear constraints known as the \emph{\(Q\)-matching equations}. Specifically, across any shared triangular face between two adjacent tetrahedra, the quadrilateral disks intersect the face in distinct topological types of normal arcs. The \(Q\)-matching equations dictate that for each of the three possible arc types on the shared face, the number of arcs orginating from the quadrilateral disks in one tetrahedron must exactly equal the number originating from the adjacent tetrahedron. 

Algebraically, the real solution space to the \(Q\)-matching equations is generated by three fundamental types of solutions: edge solutions \(C_a\), peripheral curve solutions \(M_i\) and \(L_i\), and tetrahedral solutions \(T_a\). The edge solutions \(C_a\) and peripheral curve solutions \(M_i\), and \(L_i\) correspond directly to the internal edge classes and the chosen boundary curves (meridians \(hol(\alpha_i)\), \(hol(\gamma_I)\), and longitudes \(hol(\beta_i)\), \(hol(\delta_I)\), for each cusp \(i\), and \(I\)), with their coordinates completely determined by the standard Thurston gluing equations. Meanwhile, the tetrahedral solutions \(T_a\) represent local, internal degrees of freedom. Explicitly, the tetrahedral solution for the \(a\)-th ideal tetrahedron is defined as \(T_a=(0,\dots,0, 1,1,1, 0,\dots,0)\), which uniformly assigns a unit weight to all three of its quadrilateral types. 

By adapting this geometric basis, the solution of \(Q\)-matching equations are in the following form: 
\begin{equation}
    S = \sum_{i=1}^{n+d} (A_i M_i + B_i L_i) + \sum_{j=1}^{r-n-d}x_j C_j+\sum_{a=1}^r y_a T_a
    \label{surface_basis}
\end{equation}
At this stage, one must be careful. While \eqref{surface_basis} spans the \emph{real} solution space of the \(Q\)-matching equations, geometrically valid spun normal surfaces require strictly integer quad coordinates \((f,g,h)\in\mathbb{Z}^{3r}\). Therefore, we must intersect this real solution space with integer lattice. 

Crucially for \(S\) to be an integer \(Q\)-normal class, the coefficients \(A_i\), \(B_i\), \(x_j\), and \(y_a\) are not necessarily required to be strict integers. Because the basis vectors possess specific even/odd coordinate structures, valid integer surfaces can be formed using half-integer coefficients, provided that their fractional parts cancel out modulo \(\mathbb{Z}\). As highlighted in Remark 7.3, 7.5, and 7.6 in \cite{GHHR16}, for any valid integer \(Q\)-normal class \(S\), its double \(2S\) is always an integer linear combination of the basis elements, meaning the coordinates naturally carry a modulo 2 parity structure. 

\subsubsection{Refined 3D Index from Normal Surface Counting}
To construct the 3D index from normal surface, we define the \emph{formal Euler characteristic}: 
\begin{equation}
    -\chi(S) = \sum_{j=1}^{r-n-d}2x_j +\sum_{a=1}^r y_a.
\end{equation}
which gives the usual Euler characteristic for embedded closed and spun normal surfaces. 

Following the structure of Section 8 in \cite{GHHR16} with \(a=0\), the 3D index is formulated as a state-sum over all geometrically valid integer \(Q\)-normal surfaces. The contribution of each surface \(S\) is determined by local weights assigned to each tetrahedron, dressed by global topological factors.

For a single ideal tetrahedron \(\Delta_a\), let \((f_a, g_a, h_a)\) denote its non-negative integer quad coordinates. The local tetrahedral weight is given by the tetrahedron index \(J_{\Delta}(f_a, g_a, h_a)\), defined as 
\begin{equation}
    J_\Delta(a,b,c) = (-q^{1/2})^{-b}\CI_\Delta(b-c,a-b)
\end{equation}
which is invariant under all permutations of its arguments. Note that it satisfies an identity 
\begin{equation}
    J_\Delta(a+s, b+s, c+s) = (-q^{1/2})^{-s}J_\Delta(a,b,c).     
\end{equation}
We then define the single normal surface index:
\begin{equation}
    J(S) = (-q^{1/2})^{-\chi(S)}\prod_{a=1}^r J_{\Delta}(f_a, g_a, h_a). 
\end{equation}

Note that the tetrahedral solutions \(T_a\) in \eqref{surface_basis} act as trivial redundancies. Shifting a surface by \(T_a\) uniformly increases the quad coordinates \((f_a, g_a, h_a)\) of the \(a\)-th tetrahedron by \(1\), which multiplies the local weight \(J_\Delta\) by \((-q^{1/2})^{-1}\). However, it simultaneously increases the formal Euler characteristic \(-\chi(S)\) by \(1\), multiplying the global prefactor by \((-q^{1/2})^1\). These two effects perfectly cancel, ensuring that \(J(S + T_a) = J(S)\). 

Because the surface weight is strictly invariant under the addition of tetrahedral solutions, we must evaluate the state-sum over equivalence classes of normal surfaces modulo these \(T_a\) shifts to avoid infinite over counting. The standard 3D index evaluated at a fixed boundary topological sector \(b \in H_1(\partial N; \mathbb{Z}/2)\) is then defined by summing over the equivalence classes of all valid integer \(Q\)-normal surfaces \([S]\) whose boundary class matches \(b\):
\begin{equation}
    \CJ_N(b) = \sum_{[S],\, \partial S=b} J(S) = \sum_{[S],\, \partial S=b} (-q^{1/2})^{-\chi(S)} \prod_{a=1}^r J_{\Delta}(f_a, g_a, h_a).
\end{equation}
Assuming the edge solutions \(\mathbf{D}=(D_1, D_2, \cdots, D_{r-n-d})\) are taken to be linearly independent, then the index can be calculated as 
\begin{equation}
    \CJ_N(b) = \sum_{\mathbf{k}\in (\mathbb{Z}/2)^{r-n-d}}J( \mathbf{A}\cdot\mathbf{M}+\mathbf{B}\cdot\mathbf{L}+\mathbf{k}\cdot \mathbf{D}).
\end{equation}

Now suppose the edge solutions \(\mathbf{D}=(\mathbf{H}, \mathbf{E})\) are taken to maximally contain linearly independent easy edges \(\mathbf{E} = (E_1, E_2, \cdots, E_{\sharp _E})\) and remaining hard edges \(\mathbf{H}=(H_1, H_2, \cdots, H_{\sharp_H})\). In this construction, the hard edge solutions \(\mathbf{H}\) act as the primary sources of double arcs, representing intrinsic singularities on the \(Q\)-normal surfaces that cannot be removed by isotopy. While the unrefined index integrates over all such configurations, our refined 3D index distinguishes normal surfaces by their singular structure, weighting each surface by the number of hard edge components it contains:
\begin{equation}
    \mathcal{J}_{N;\text{ref}}(b;\mathbf{v}) = \sum_{\mathbf{k} \in (\mathbb{Z}/2)^{r-n-d}} J(\mathbf{A}\cdot\mathbf{M}+\mathbf{B}\cdot\mathbf{L}+ \mathbf{k} \cdot \mathbf{D}) \prod_{i=1}^{\sharp_H} \eta^{2\mathbf{k}\cdot\mathbf{W}} \;.
\end{equation}
By grading the state-sum by the hard-edge weights \(k_{i}\), the refined index captures the contribution of the double-arc singularities generated by the internal bulk geometry. This provides a more sensitive topological invariant that resolves the divergent sectors of the standard index and reflects the accidental symmetry structure of the underlying 3D gauge theory.

\begin{theorem}[Equivalence between two Refined Indices]
For \(b\in H_1(\partial N,\mathbb{Z}/2)\), and \(\mathbf{v}\in\Gamma(\mathfrak{F}^{\textrm{hard}}(N))\)
\begin{equation}
    \mathcal{I}_{N;\textrm{ref}}^b(\mathbf{v}) = \CJ_{N;\textrm{ref}}(b;\mathbf{v})
\end{equation}
\end{theorem}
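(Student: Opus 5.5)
The plan is to reduce the identity term by term to the already-known equivalence between the unrefined 3D index in the Neumann--Zagier form and the normal-surface state sum of \cite{GHHR16}. Both sides are sums over the same lattice of half-integer labels of one internal-edge direction: $(e_{n+d+1},\ldots,e_r)\in(\mathbb{Z}/2)^{r-n-d}$ on the gauge-theory side and $\mathbf{k}\in(\mathbb{Z}/2)^{r-n-d}$ on the normal-surface side. The refinement in both cases only multiplies each summand by $\eta^{2\sum_i W_i(\cdot)}$ of the hard-edge label. It therefore suffices to exhibit a bijection of summation variables, under which the unrefined summands agree and the hard-edge labels correspond, namely $e_{n+d+i}\leftrightarrow k_{H_i}$.

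\textbf{Step 1: identify the bases.} On the normal-surface side I write $S=\mathbf{A}\cdot\mathbf{M}+\mathbf{B}\cdot\mathbf{L}+\mathbf{k}\cdot\mathbf{D}$ with $\mathbf{D}=(\mathbf{H},\mathbf{E})$, ordered exactly as the rows $\mathbf{X},\mathbf{H},\mathbf{E}$ of the refined $g_{\rm NZ}$ in \eqref{NZ matrices}. The quad coordinates of an edge or peripheral solution are the coefficient vectors of that edge or holonomy in $(Z_a,Z'_a,Z''_a)$. Using $Z+Z'+Z''=\mathbf{i}\pi$ and the tetrahedral redundancy $T_a$, I reduce each quad triple $(f_a,g_a,h_a)$ to the pair $(f_a-g_a,\,h_a-g_a)$, i.e. coordinates in $(Z_a,Z''_a)$. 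In these coordinates the quad vector of $S$ is a linear image of $(\mathbf{A},\mathbf{B},\mathbf{k})$ through the rows of $g_{\rm NZ}$, and the symplectic pairing turns this into $g_{\rm NZ}^{-1}\boldsymbol{\kappa}$. The identification is $\boldsymbol{\kappa}=(m_1,\ldots,m_{n+d},0,\ldots,0;\,e_1,\ldots,e_r)$ with $e_i=A_i$, $m_i=-2B_i$, and $e_{n+d+j}=k_j$ for $j\le r-n-d$. The zero entries in the $m$-block are the condition $m_{a>n+d}\to0$ in \eqref{refined index for N in charge basis}. This is exactly the bookkeeping of Section~8 of \cite{GHHR16} (with $a=0$), and the only change here is that I fix the ordering of $\mathbf{D}$ to match the hard/easy split.

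\textbf{Step 2: match the summands.} Using $J_\Delta(a,b,c)=(-q^{1/2})^{-b}\CI_\Delta(b-c,a-b)$ and the shift identity $J_\Delta(a+s,b+s,c+s)=(-q^{1/2})^{-s}J_\Delta(a,b,c)$, I rewrite $J(S)$ as $\prod_a\CI_\Delta$ evaluated at $g_{\rm NZ}^{-1}\boldsymbol{\kappa}$, times a power of $-q^{1/2}$. I then check that this power, namely $-\chi(S)$ plus the $\sum_a g_a$ pieces, equals $\mathbf{m}\cdot\boldsymbol{\nu}_p-\mathbf{e}\cdot\boldsymbol{\nu}_x$. The check uses the fact that the constant terms $\mathbf{i}\pi\boldsymbol{\nu}$ in \eqref{NZ matrices} record precisely the $\mathbf{i}\pi$ left over from eliminating $Z'_a$, together with the $-2\pi\mathbf{i}$ shifts of the edges. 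The latter account for the $2x_j$ term in $-\chi$, since $x_j=k_j$. Half-integrality works out the same way on both sides: $\CI_\Delta$ vanishes off $\mathbb{Z}^2$, and this kills exactly the non-integer $Q$-normal classes. Parity is preserved because $2S$ is an integer combination.

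\textbf{Step 3: the refinement and the main obstacle.} Under the bijection, the factor $\prod_i\eta^{2W_ie_{n+d+i}}$ becomes $\eta^{2\mathbf{k}_H\cdot\mathbf{W}}$. I read the normal-surface weight $\eta^{2\mathbf{k}\cdot\mathbf{W}}$ as pairing $\mathbf{W}$ only with the hard components of $\mathbf{k}$, i.e. $\mathbf{W}$ extended by zero on the easy edges, so the two weights coincide summand by summand. The main obstacle is Step 2. The sign and $q^{1/2}$ exponent bookkeeping is delicate, because $g_{\rm NZ}$ has half-integer entries ($P=hol/2$), and the paper's refined $g_{\rm NZ}$ uses a particular conjugate basis $\boldsymbol{\Gamma}_H,\boldsymbol{\Gamma}_E$ that is not unique. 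I would first verify that the unrefined summand does not depend on the choice of the conjugates $\boldsymbol{\Gamma}$. Changing them multiplies $g_{\rm NZ}$ on the left by a T-type matrix acting only on the $m$-block, and all $m_{a>n+d}$ are set to zero, so the summand is unchanged. With that established, I can quote the unrefined equivalence of \cite{GHHR16} and only need to track the label $e_{n+d+i}$ through it.
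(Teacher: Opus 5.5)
Your proposal is correct and follows essentially the same route as the paper. Both arguments match each normal-surface term $J(S)$ directly with the Neumann--Zagier summand, using the block form of $g_{\rm NZ}^{-1}$, the relations $f_a-g_a=(g_{\rm NZ}^{-1}\boldsymbol{\kappa})_{r+a}$ and $g_a-h_a=(g_{\rm NZ}^{-1}\boldsymbol{\kappa})_a$, and $\sum_a g_a=-\mathbf{m}\cdot\boldsymbol{\nu}_p+\mathbf{e}\cdot\boldsymbol{\nu}_x-\chi(S)$, after which the hard-edge weights coincide label by label. Your identification $e_{n+d+j}=+k_j$ is the consistent one: the paper's proof writes $-k_j$ but then uses $+k_j$ in its displayed formulas, and $+k_j$ is what makes the refined weights agree for the same $\mathbf{v}$. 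Your extra check that the result is independent of the choice of $\boldsymbol{\Gamma}$ is harmless but automatic, because the $\boldsymbol{\Gamma}$-rows of $g_{\rm NZ}$ only ever multiply $m_{a>n+d}=0$.
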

\begin{proof}
    Lets denote the Neumann-Zaiger matrix as 
    \begin{equation}
        g_{\textrm{NZ}}= \begin{pmatrix}
            \mathbf{P}&\mathbf{Q}\\
            \mathbf{R}&\mathbf{S}
        \end{pmatrix}
    \end{equation}
    such that in terms of the fundamental solutions, 
    \begin{equation}
        \begin{split}
            P_{ia}&=\begin{cases}
            M_{i,\,3a-2}-M_{i,\,3a-1},\quad & 1\leq i\leq n+d\\
            D_{j,\,3a-2}-D_{j,\,3a-1},\quad&n+d+1\leq i\leq r
            \end{cases}\\
            Q_{ia}&=\begin{cases}
            M_{i,\,3a}-M_{i,\,3a-1},\quad & 1\leq i\leq n+d\\
            D_{j,\,3a}-D_{j,\,3a-1},\quad&n+d+1\leq i\leq r
            \end{cases}\\
            R_{ia}&=\begin{cases}
            L_{i,\,3a-2}/2-L_{i,\,3a-1}/2,\quad & 1\leq i\leq n+d\\
            \Gamma_{j,\,3a-2}-\Gamma_{j,\,3a-1},\quad&n+d+1\leq i\leq r
            \end{cases}\\
            S_{ia}&=\begin{cases}
            L_{i,\,3a}/2-L_{i,\,3a-1}/2,\quad & 1\leq i\leq n+d\\
            \Gamma_{j,\,3a}-\Gamma_{j,\,3a-1},\quad&n+d+1\leq i\leq r
            \end{cases},
        \end{split}
    \end{equation}
    where \(j=i-n-d\) and \(\Gamma\) is a solutions to make the Neumann-Zaiger matrix to be true. Since \(g_{NZ}\) matrix is an symplectic matrix, its inverse is 
    \begin{equation}
        g_{\textrm{NZ}}^{-1} = \begin{pmatrix}
            \mathbf{S}^T & -\mathbf{Q}^T\\
            -\mathbf{R}^T&\mathbf{P}^T
        \end{pmatrix}. 
    \end{equation}
    Similarly, we can express Affine shift vectors as 
    \begin{equation}
        \begin{split}
            \boldsymbol{\nu}_{x,\,i}&=\begin{cases}
                \sum_{a=1}^rM_{i,\,3a-1},\quad&1\leq i\leq n+d\\
                \sum_{a=1}^rD_{j,\,3a-1}-2,\quad&n+d+1\leq i\leq r
            \end{cases}\\
            \boldsymbol{\nu}_{p,\,i}&=\begin{cases}
                \sum_{a=1}^rL_{i,\,3a-1}/2,\quad&1\leq i\leq n+d\\
                \sum_{a=1}^r\Gamma_{j,\,3a-1},\quad&n+d+1\leq i\leq r
            \end{cases}
        \end{split}
    \end{equation}
    On the other hand, a normal surface \(S=\mathbf{A}\cdot\mathbf{M}+\mathbf{B}\cdot\mathbf{L}+ \mathbf{k} \cdot \mathbf{D}\) has quad coordinate coefficients  
    \begin{equation}
        \begin{split}
            f_a &= \sum_{i=1}^{n+d}(A_i M_{i,\,3a-2}+B_i L_{i,\,3a-2})+\sum_{j=1}^{r-n-d}k_j D_{j,\,3a-2}, \\
            g_a &=  \sum_{i=1}^{n+d}(A_i M_{i,\,3a-1}+B_i L_{i,\,3a-1})+\sum_{j=1}^{r-n-d}k_j D_{j,\,3a-1}, \\
            h_a &=  \sum_{i=1}^{n+d}(A_i M_{i,\,3a}+B_i L_{i,\,3a})+\sum_{j=1}^{r-n-d}k_j D_{j,\,3a}.
        \end{split}
    \end{equation}
    After substituting \(A_i =e_i\), \(B_i=-m_i/2\), and setting \(e_{j+n+d}=-k_j\), \(m_{j+n+d}=0\) for \(j=1,2,\cdots,r-n-d\), and setting \(\boldsymbol{\kappa}=(\mathbf{m},\mathbf{e})^T=(m_1, \cdots, m_r, e_1, \cdots, e_r)^T\), one finds that 
    \begin{equation}
        \begin{split}
            f_a - g_a &= \sum_{i=1}^{n+d}(A_i P_{ia}+2B_i R_{ia})+\sum_{j=1}^{r-n-d}k_j P_{j+n+d,\,a}=(\mathbf{P}^T \mathbf{e}-\mathbf{R}^T\mathbf{m})_a\\
            g_a - h_a &= \sum_{i=1}^{n+d}(-A_i Q_{ia}-2B_i S_{ia})-\sum_{j=1}^{r-n-d}k_j Q_{j+n+d,\,a}=(-\mathbf{Q}^T\mathbf{e}+\mathbf{S}^T \mathbf{m})_a\\
            \sum_{a=1}^rg_a&=-\mathbf{m}\cdot\boldsymbol{\nu}_p +\mathbf{e}\cdot \boldsymbol{\nu}_x-\chi(S).        \end{split}
    \end{equation}
    Then further simplifying as such, 
    \begin{equation}
        \begin{split}
            J(S) &= (-q^{1/2})^{-\chi(S)}\prod_{a=1}^r J_{\Delta}(f_a,g_a,h_a) \\
            &=(-q^{1/2})^{-\chi(S)}\prod_{a=1}^r (-q^{1/2})^{-g_a} \CI_{\Delta}(g_a - h_a, f_a - g_a)\\
            &=(-q^{1/2})^{\mathbf{m}\cdot\boldsymbol{\nu}_p-\mathbf{e}\cdot\boldsymbol{\nu}_p}\prod_{a=1}^r \CI_\Delta((-\mathbf{Q}^T\mathbf{e}+\mathbf{S}^T \mathbf{m})_a,(\mathbf{P}^T \mathbf{e}-\mathbf{R}^T\mathbf{m})_a)\\
            &= (-q^{1/2})^{\mathbf{m}\cdot\boldsymbol{\nu}_p-\mathbf{e}\cdot\boldsymbol{\nu}_p}\prod_{a=1}^r \CI_\Delta((g_{\textrm{NZ}}^{-1}\boldsymbol{\kappa})_a,(g_{\textrm{NZ}}^{-1}\boldsymbol{\kappa})_{r+a})
        \end{split}
    \end{equation}
    one obtain 
    \begin{equation}
        \CJ_{N;\textrm{ref}}(b;\mathbf{v})=\sum_{\mathbf{k}\in(\mathbb{Z}/2)^{r-n-d}}J(\mathbf{A}\cdot\mathbf{M}+\mathbf{B}\cdot\mathbf{L}+\mathbf{k}\cdot\mathbf{D}) \prod_{i=1}^{\sharp_H}\eta^{2\mathbf{k}\cdot\mathbf{W}} = \CI_{N;\textrm{ref}}^b(\mathbf{v}).
    \end{equation}
\end{proof}

\subsubsection{Refined 3D index under 2-3 moves}
In this subsection, we prove that the number of hard internal edges does not increase under a Pachner \(2\to 3\) move, and use this together with the pentagon identity to establish that the refined 3D index is invariant under \(2\to 3\) moves. 
\begin{figure}[h]
    \centering
    \includegraphics[width=0.9\linewidth]{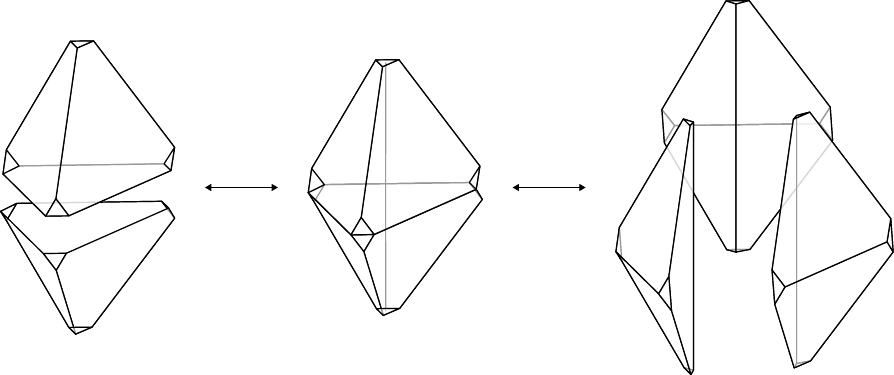}
    \caption{Pachner 2-3 move. }
    \label{fig : 2-3 move}
\end{figure}
    \paragraph{Setup.} Let \(\CT\) be an ideal triangulation of \(N\) with \(r\) tetrahedra, equipped with edge shape parameters \(Z_a\), \(Z_a'\), and \(Z_a''\) for \(a=1,2,\cdots ,r\). Let \(\widetilde{\CT}\) be the ideal triangulation obtained from \(\CT\) by a Pachner \(2\to 3\) move, which replaces the \((r-1)\)-th and \(r\)-th tetrahedra by three new tetrahedra; without loss of generality the move is applied to the last two tetrahedra. The triangulation \(\widetilde{\CT}\) has \(r+1\) tetrahedra, with edge parameters \(\widetilde{Z}_a\), \(\widetilde{Z}_a'\), and \(\widetilde{Z}_a''\) for \(a=1,2,\cdots , r+1\). 

    The first \(r-2\) tetrahedra are unaffected: 
    \begin{equation}
        Z_a = \widetilde{Z}_a, \quad Z_a' = \widetilde{Z}_a', \quad Z_a'' = \widetilde{Z}_a'', \quad a=1,2,\cdots,r-2
    \end{equation}
    while the last two old tetrahedra are expressed in terms of the three new ones via 
    \begin{equation}
        \begin{pmatrix}
            Z_{r-1}\\
            Z_{r-1}'\\
            Z_{r-1}''\\
            Z_r\\
            Z_r'\\
            Z_r''
        \end{pmatrix} = \begin{pmatrix}
            0&0&0&0&1&0&0&0&1\\
            0&1&0&0&0&1&0&0&0\\
            0&0&1&0&0&0&0&1&0\\
            0&1&0&0&0&0&0&0&1\\
            0&0&1&0&1&0&0&0&0\\
            0&0&0&0&0&1&0&1&0
        \end{pmatrix}\begin{pmatrix}
            \widetilde{Z}_{r-1}\\
            \widetilde{Z}_{r-1}'\\
            \widetilde{Z}_{r-1}''\\
            \widetilde{Z}_r\\
            \widetilde{Z}_r'\\
            \widetilde{Z}_r''\\
            \widetilde{Z}_{r+1}\\
            \widetilde{Z}_{r+1}'\\
            \widetilde{Z}_{r+1}''
        \end{pmatrix}.\label{23 move edge parameter transformation}
    \end{equation}
    The move introduces a new internal edge, the \emph{central edge}
    \begin{equation}
        D_c:= \widetilde{Z}_{r-1}+ \widetilde{Z}_r + \widetilde{Z}_{r+1}, 
    \end{equation}
    which is manifestly easy: it carries exactly the unprimed quad coordinate in each of the three new tetrahedra and zero in all others. 

    \paragraph{Transformation of internal edges.} Let \(D = \sum_{a=1}^r (f_a Z_a + g_a Z_a' + h_a Z_a'')\) be an internal edge of \(\CT\). Substituting via \eqref{23 move edge parameter transformation}, the same edge in \(\widetilde{T}\) is 
    \begin{equation}
        \begin{split}
            \widetilde{D}=&\sum_{a=1}^{r-2}\bigl(f_a\widetilde{Z}_a+g_a\widetilde{Z}_a'+h_a\widetilde{Z}_a''\bigr)+\underbrace{(f_r+g_{r-1})}_{\widetilde{g}_{r-1}}\widetilde{Z}_{r-1}'+\underbrace{(g_r+h_{r-1})}_{\widetilde{h}_{r-1}}\widetilde{Z}_{r-1}''\\
            &+\underbrace{(f_{r-1}+g_r)}_{\widetilde{g}_r}\widetilde{Z}_r'+\underbrace{(g_{r-1}+h_r)}_{\widetilde{h}_r}\widetilde{Z}_r''+\underbrace{(h_{r-1}+h_r)}_{\widetilde{g}_{r+1}}\widetilde{Z}_{r+1}'+\underbrace{(f_{r-1}+f_r)}_{\widetilde{h}_{r+1}}\widetilde{Z}_{r+1}''.
            \label{Dtilde}
        \end{split}
    \end{equation}
    Note that the unprimed quad coordinate \(\widetilde{f}_a\) vanishes for each new tetrahedron \(a\in \{r-1,r,r+1\}\).  

    We are now ready to state and prove the key structural lemma. 
    \begin{lemma}[Embedding of hard-edge symmetries under 2-3 moves]\label{lem:easy-23}
    Let \(\mathfrak{F}^{\rm hard}[N;\CT]\) denote the Cartan subalgebra associated with the hard internal edges of the triangulation \(\CT\). A Pachner 2-3 move \(\CT \to \widetilde{\CT}\) induces a natural injective linear map (an embedding) between the hard-edge symmetry algebras:
    \begin{equation}
        \iota: \mathfrak{F}^{\rm hard}[N;\widetilde{\CT}] \hookrightarrow \mathfrak{F}^{\rm hard}[N;\CT] \;.
    \end{equation}
    Combinatorially, this implies that the number of easy edges strictly increases, \(\sharp_E(\widetilde{\CT}) \geq \sharp_E(\CT)+1\), while the number of hard edges is non-increasing, \(\sharp_H(\widetilde{\CT}) \leq \sharp_H(\CT)\).
    \end{lemma}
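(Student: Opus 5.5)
The plan is to compare the spaces of internal edges of $\CT$ and $\widetilde{\CT}$ directly, using the substitution \eqref{23 move edge parameter transformation} and the explicit form \eqref{Dtilde} of a transported edge $\widetilde{D}$. I would define $\mathfrak{F}^{\rm hard}[N;\CT]$ intrinsically as the dual of the quotient $\mathcal{C}(\CT)/\mathcal{E}(\CT)$. Here $\mathcal{C}(\CT)$ is the $\mathbb{Q}$-span of internal edges, taken modulo the relations $C_a-2\pi\mathbf{i}$ and $Z_a+Z_a'+Z_a''-\mathbf{i}\pi$, and $\mathcal{E}(\CT)$ is the span of the easy internal edges. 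This is the description implicit in the definition of $g_{\rm NZ}$: the $T_{h_i}$ are dual to the hard directions $H_i$ once the easy ones $E_I$ are removed. With this description, $\iota$ will be obtained as the dual of a surjection
\[
\pi:\ \mathcal{C}(\CT)/\mathcal{E}(\CT)\longrightarrow \mathcal{C}(\widetilde{\CT})/\mathcal{E}(\widetilde{\CT}) .
\]
Injectivity of $\iota$ then follows automatically from surjectivity of $\pi$.

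\textbf{Step 1 (counting and spanning).} $\CT$ has $r-n-d$ independent internal edges and $\widetilde{\CT}$ has $r+1-n-d$. The map $D\mapsto\widetilde{D}$ of \eqref{Dtilde} is linear and preserves the conditions defining an internal edge. Indeed, \eqref{23 move edge parameter transformation} is compatible with the symplectic structure and sends the peripheral holonomies and old edge relations to their counterparts. So $D\mapsto\widetilde{D}$ sends $\mathcal{C}(\CT)$ into $\mathcal{C}(\widetilde{\CT})$. It is injective, because the old shape parameters are recovered from the new ones on the constraint surface. The central edge $D_c$ is independent of this image, since every transported $\widetilde{D}$ has $\widetilde{f}_a=0$ on the three new tetrahedra. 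By dimension count,
\[
\mathcal{C}(\widetilde{\CT})=\widetilde{\mathcal{C}(\CT)}\oplus\mathbb{Q}D_c .
\]
Composing with the quotient map gives the surjection $\pi$, provided one knows that easy edges go to easy edges modulo $D_c$. That is Step 2.

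\textbf{Step 2 (easy edges stay easy; main obstacle).} I must show that if $E$ is easy in $\CT$, then $\widetilde{E}$, possibly corrected by a multiple of $D_c$ and the tetrahedral relations, is easy in $\widetilde{\CT}$. On the first $r-2$ tetrahedra nothing changes, so only the three new tetrahedra matter. There the coefficients are pairwise sums of the old ones, as read off from \eqref{Dtilde}. Since $E$ is easy, at most one of $(f_{r-1},g_{r-1},h_{r-1})$ and at most one of $(f_r,g_r,h_r)$ is nonzero, which leaves nine cases.

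In most cases each new tetrahedron receives at most one nonzero slot, so $\widetilde{E}$ is already easy. The dangerous cases are those where a new tetrahedron receives two nonzero primed slots. An example is $f_r\neq0$ together with $h_{r-1}\neq0$ in tetrahedron $r-1$.

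\begin{itemize}
\item My first attempt is the geometric argument. An internal edge class is a genuine edge of the triangulation, and its coefficient in a tetrahedron counts how often it occupies each pair of opposite edges there. The bipyramid has three equatorial edges, each lying in both old tetrahedra, and six apex-to-equator edges. An equatorial edge sits in two new tetrahedra in a single quad slot each, while an apex edge sits in one new tetrahedron. From this I would try to show that a single easy edge cannot occupy the incompatible pair of positions.
\item If the geometric argument fails, the fallback is to rewrite $\widetilde{E}$ using $Z'+Z''=\mathbf{i}\pi-Z$ on the offending tetrahedra together with $D_c$. A simultaneous $g$ and $h$ in one tetrahedron is then traded for an $f$-coefficient, which $D_c$ cancels across the three new tetrahedra. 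I would also check that nonnegativity of the coefficients survives this rewriting.
\end{itemize}

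\textbf{Step 3 (conclusion).} With Step 2, $\widetilde{\mathcal{E}(\CT)}+\mathbb{Q}D_c\subset\mathcal{E}(\widetilde{\CT})$, and $D_c$ is independent of the image. This gives $\sharp_E(\widetilde{\CT})\ge\sharp_E(\CT)+1$. Using $\sharp_H=(\#\text{tet})-n-d-\sharp_E$,
\[
\sharp_H(\widetilde{\CT})=r+1-n-d-\sharp_E(\widetilde{\CT})\le r-n-d-\sharp_E(\CT)=\sharp_H(\CT).
\]
Finally, $\pi$ is well defined and surjective, so $\iota=\pi^{*}$ is the claimed natural embedding. It sends each hard generator $T_{\widetilde{h}}$ of $\widetilde{\CT}$ to the functional on old hard edges given by $D\mapsto\langle T_{\widetilde{h}},\widetilde{D}\rangle$.
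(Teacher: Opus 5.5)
Your overall route is the paper's: $D_c$ gives a new easy direction, the old easy edges survive after a correction, and the count $\sharp_H=\#\text{tet}-n-d-\sharp_E$ finishes the argument.

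\textbf{Where your packaging improves on the paper.} Writing $\iota=\pi^{*}$ with $\mathfrak{F}^{\rm hard}\cong(\mathcal{C}/\mathcal{E})^{*}$ is better than the paper's bare dimension comparison and its ``padding with zeros''. It makes $\iota$ independent of the choice of the $H_i$. It is exactly what the following invariance theorem uses, since the weight $\eta^{2\mathbf{k}\cdot\mathbf{W}}$ is a functional on edge space that vanishes on easy edges. It also removes any need to prove that $D_c$ is independent: surjectivity of $\pi$ gives $\sharp_H(\widetilde{\CT})\le\sharp_H(\CT)$ directly, and the rank formula then gives $\sharp_E(\widetilde{\CT})\ge\sharp_E(\CT)+1$. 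That matters, because the reason you and the paper give for independence, $\widetilde f_a=0$ on the new tetrahedra, does not work in the quotient where the count lives. For example, the transport of $Z_{r-1}+Z'_{r-1}+Z''_{r-1}$ has vanishing unprimed coordinates, yet it equals $\widetilde T_{r-1}+\widetilde T_r+\widetilde T_{r+1}-D_c$, where $\widetilde T_a:=\widetilde Z_a+\widetilde Z'_a+\widetilde Z''_a$.

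\textbf{Two set-up fixes.} Take $\mathcal{C}$ to be the linear parts modulo the tetrahedral combinations only; quotienting also by $C_a-2\pi\mathbf{i}$ collapses every internal edge to a constant. Transport is then well defined only into $\mathcal{C}(\widetilde{\CT})/\mathbb{Q}D_c$. This is harmless for $\pi$, because $D_c\in\mathcal{E}(\widetilde{\CT})$.

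\textbf{The real gap is Step 2.} Your first idea, that an easy edge cannot occupy an incompatible pair of slots, would fail because the claim is false. Of the nine cases, exactly three are bad: $(h_{r-1},f_r)$, $(f_{r-1},h_r)$ and $(g_{r-1},g_r)$. These are precisely the pairs of old parameters at a common equatorial edge of the bipyramid. For instance, $Z''_{r-1}+Z_r\equiv 2\pi\mathbf{i}-\widetilde Z_{r-1}-\widetilde Z_{r+1}\equiv\widetilde Z_r$. So any easy edge of $\CT$ that runs along an equatorial edge lands in a bad case, and a correction cannot be avoided.

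Your fallback is exactly the paper's fix, and it works. In the first bad case, set $m=\min(f_r,h_{r-1})$ and $\widetilde E'=\widetilde E-m(\widetilde T_{r-1}+\widetilde T_{r+1}-D_c)$. This satisfies $\widetilde E'\equiv\widetilde E$, since $\mathbf{i}\pi+\mathbf{i}\pi-2\pi\mathbf{i}=0$. Its coefficients stay nonnegative: on the three new tetrahedra they are $(0,f_r-m,h_{r-1}-m)$, $(m,0,0)$ and $(0,h_{r-1}-m,f_r-m)$. The other two bad cases use $(\widetilde T_r,\widetilde T_{r+1})$ and $(\widetilde T_{r-1},\widetilde T_r)$ in the same way.

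The check your sketch omits is this: $D_c$ also adds $m$ to the unprimed slot of the third new tetrahedron. For the result to be easy, that tetrahedron must be empty before the correction. Reading off \eqref{Dtilde}, it is empty in all three bad cases. Once you write out this enumeration, Step 2, and with it your proof, is complete.
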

    
    \begin{proof}
        Since the dimension of the Cartan subalgebra \(\mathfrak{F}^{\rm hard}[N;\CT]\) is precisely the number of hard edges \(\sharp_H\), and the total number of linearly independent internal edges obeys the relation \(\sharp_H(\CT) + \sharp_E(\CT) = r - n - d\) before the move and \(\sharp_H(\widetilde{\CT}) + \sharp_E(\widetilde{\CT}) = (r+1) - n - d\) after the move, the existence of the embedding \(\iota\) follows directly if we can establish the strict inequality \(\sharp_E(\widetilde{\CT})\geq \sharp_E(\CT)+1\). We demonstrate this by explicitly constructing the easy edge subspace for \(\widetilde{\CT}\).

        \medskip
        \noindent\textit{Step 1: The central edge generates a new easy direction.}\\
        The newly introduced edge \(D_c=\widetilde{Z}_{r-1}+\widetilde{Z}_{r}+\widetilde{Z}_{r+1}\) has quad coordinates \((1,0,0)\) in tetrahedra \(r-1\), \(r\), and \(r+1\), and \((0,0,0)\) in all others. Because at most one coordinate is nonzero per tetrahedron, \(D_c\) is manifestly easy. Moreover, \(D_c\) is linearly independent of any image \(\widetilde{E}\) of an old internal edge, since any such \(\widetilde{E}\) has a vanishing unprimed coordinate in all three new tetrahedra. Hence, \(D_c\) contributes one entirely new dimension to the easy edge subspace that was not present in \(\CT\). 

        \medskip
        \noindent\textit{Step 2: The old easy edge subspace survives.}\\
        Let \(E\) be an easy internal edge of \(\CT\) with quad coefficients \((f_a,g_a,h_a)_{a=1}^r\). By the easy constraint, at most one of \(\{f_a,g_a,h_a\}\) is nonzero for each \(a\). We claim there exists a valid easy internal edge \(\widetilde{E}'\) for \(\widetilde{\CT}\) within the \(\BZ\)-span of \(\widetilde{E}\), \(D_c\), and the tetrahedral solutions \(T_j\) of \(\widetilde{\CT}\) (recall that \(T_j\) has quad coordinates \((1,0,0)\) in the \(j\)-th tetrahedron and \((0,0,0)\) elsewhere). 

        From \eqref{Dtilde}, the quad coordinates of \(\widetilde{E}\) in the three new tetrahedra are:
        \begin{align}
            \begin{array}{ll}
                \textrm{tetrahedron }r-1:& (0,\; f_r + g_{r-1},\;g_r+h_{r-1}),\\
                \textrm{tetrahedron }r:&(0,\;f_{r-1}+g_r,\;g_{r-1}+h_r),\\
                \textrm{tetrahedron }r+1:&(0,\;h_{r-1}+h_r,\;f_{r-1}+f_r).
            \end{array}\label{new tetrahedron coordinates}
        \end{align}
        If every one of these triples has at most one nonzero entry, then \(\widetilde{E}\) is already strictly easy and we set $\widetilde{E}' = \widetilde{E}$. Otherwise, at least one triple has two nonzero entries. Under the original easy edge constraint on $\CT$, the admissible configurations that produce two simultaneously nonzero entries in \eqref{new tetrahedron coordinates} fall exactly into the following three cases:

        \begin{enumerate}
            \item[(i)] \(f_r\neq 0\) and \(h_{r-1}\neq 0\), all others zero (tetrahedra \(r-1\) and \(r+1\) temporarily violate easiness).\\
            The linear combination
            \begin{equation}
                \widetilde{E}':= \widetilde{E}- \textrm{min}(f_r,\;h_{r-1})(T_{r-1}+T_{r+1}-D_c)\label{eq: modification of easy edge first}
            \end{equation}
            is a valid internal edge with quad coordinates in the three new tetrahedra: 
            \begin{align}
                \begin{array}{ll}
                    \textrm{tetrahedron }r-1:& (0,\;|f_r-h_{r-1}|,\;0),\\
                    \textrm{tetrahedron }r:&(\textrm{min}(f_r,\;h_{r-1}),\;0,\;0),\\
                    \textrm{tetrahedron }r+1:&(0,\;0,\;|f_r-h_{r-1}|).
                \end{array}
            \end{align}
            At most one entry is nonzero per tetrahedron, so \(\widetilde{E}'\) is successfully projected back to being easy. 

            \item [(ii)] \(f_{r-1}\neq 0\) and \(h_r\neq 0\), all others zero (tetrahedra \(r\) and \(r+1\) temporarily violate easiness). \\
            The linear combination 
            \begin{equation}
                \widetilde{E}' := \widetilde{E}-\textrm{min}(f_{r-1},\;h_{r})(T_r + T_{r+1} - D_c) 
            \end{equation}
            is a valid internal edge with quad coordinates:
            \begin{align}
                \begin{array}{ll}
                    \textrm{tetrahedron }r-1:& (\textrm{min}(f_{r-1},\;h_{r}),\;0,\;0),\\
                    \textrm{tetrahedron }r:&(0,\;|f_{r-1}-h_r|,\;0),\\
                    \textrm{tetrahedron }r+1:&(0,\;0,\;|f_{r-1}-h_{r}|).
                \end{array}
            \end{align}
            At most one entry is nonzero per tetrahedron, making \(\widetilde{E}'\) easy. 

            \item [(iii)] \(g_{r-1}\neq 0\) and \(g_r\neq 0\), all others zero (tetrahedra \(r-1\) and \(r\) temporarily violate easiness). \\
            The linear combination 
            \begin{equation}
                \widetilde{E}' = \widetilde{E} - \textrm{min}(g_{r-1},\;g_{r})(T_{r-1}+T_r-D_c)\label{eq: modification of easy edge last}
            \end{equation}
            is a valid internal edge with quad coordinates:
            \begin{align}
                \begin{array}{ll}
                    \textrm{tetrahedron }r-1:& (0,\;0,\;|g_r - g_{r-1}|),\\
                    \textrm{tetrahedron }r:&(0,\;|g_r - g_{r-1}|,\;0),\\
                    \textrm{tetrahedron }r+1:&(\textrm{min}(g_{r-1},\;g_{r}),\;0,\;0).
                \end{array}
            \end{align}
            At most one entry is nonzero per tetrahedron, making \(\widetilde{E}'\) easy. 
        \end{enumerate}
        In every case, the operation \(\widetilde{E}\mapsto\widetilde{E}'\) provides a valid easy internal edge in the new triangulation. 

        Combining Steps 1 and 2, the subspace of easy internal edges in \(\widetilde{\CT}\) contains at least \(\sharp_E(\CT)\) independent modified edges along with the newly independent central edge \(D_c\). Thus, \(\sharp_E(\widetilde{\CT})\geq \sharp_E(\CT)+1\). As discussed, this strictly bounds the dimension of the hard edge space: \(\sharp_H(\widetilde{\CT}) \leq \sharp_H(\CT)\). Consequently, the hard-edge symmetry algebra of the new triangulation naturally embeds into that of the old triangulation, establishing the map \(\iota\).
    \end{proof}
    Correspondingly, the embedding $\iota$ of the Cartan subalgebras naturally restricts to an embedding of their associated integer weight lattices:
    \begin{equation}
        \iota_{\Gamma}: \Gamma\left(\mathfrak{F}^{\rm hard}[N;\widetilde{\CT}]\right) \hookrightarrow \Gamma\left(\mathfrak{F}^{\rm hard}[N;\CT]\right) \;.
    \end{equation}
    Let \(\mathbf{v}\) denote the refinement vector living in this lattice. The map \(\iota_{\Gamma}\) embeds these refinement parameters into the parameter space of the old triangulation by padding the lost directions with zeros. Physically, evaluating the refined index of the original triangulation on the image \(\iota_{\Gamma}(\widetilde{\mathbf{v}})\) evaluates it at a configuration where the components corresponding to the lost hard edges are identically zero. This successfully mimics 'turning off' the refinements for any hard edges in \(\CT\) that relaxes into an easy edge in \(\widetilde{\CT}\).

    \paragraph{Invariance of the refined index.} We now use Lemma \ref{lem:easy-23} and \ref{lem:symmetrized pentagon} which will be provided right below to prove that \(\CI_{N; {\rm ref}}\) is invariant under the \(2\to 3\) move. 

    \begin{lemma}[Pentagon identity of symmetrized tetrahedron index]\label{lem:symmetrized pentagon}
        \begin{equation}
            \begin{split}
                &\sum_{k_c\in\mathbb{Z}}(-q^{1/2})^{2k_c}J_{\Delta}(k_c, f_r+g_{r-1},g_r+h_{r-1})J_{\Delta}(k_c, f_{r-1}+g_r, g_{r-1}+h_r)\\
                &\quad \quad J_{\Delta}(k_c, h_{r-1}+h_r, f_{r-1}+f_r) = J_\Delta(f_{r-1}, g_{r-1}, h_{r-1})J_{\Delta}(f_r , g_r, h_r).
            \end{split}
        \end{equation}
    \end{lemma}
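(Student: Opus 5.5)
The plan is to reduce the statement to the standard pentagon (2--3) identity for the tetrahedron index in its $(m,e)$ form, due to Dimofte--Gaiotto--Gukov and written in the symmetric $J_\Delta$ form by Garoufalidis--Hodgson--Hoffman--Rubinstein \cite{GHHR16}. We will assume that identity and only translate variables.

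First, we exploit the symmetries of $J_\Delta$ to normalize the arguments:
\begin{itemize}
\item $J_\Delta$ is invariant under permutations of its arguments.
\item It obeys $J_\Delta(a+s,b+s,c+s)=(-q^{1/2})^{-s}J_\Delta(a,b,c)$.
\end{itemize}
Both sides of the identity are therefore covariant under the shifts $(f_a,g_a,h_a)\mapsto(f_a+s,g_a+s,h_a+s)$ for $a=r-1,r$. Under such a shift the left-hand triples change by $(0,s,s)$. After relabelling $k_c\mapsto k_c+s$, this becomes a uniform shift of each triple plus one factor $(-q^{1/2})^{2s}$ from the measure, so the powers of $q$ match on both sides. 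Using this freedom, we can set e.g. $g_{r-1}=g_r=0$ without loss of generality. Each side then depends only on four independent differences, which we identify with the two $(m,e)$ pairs of the old tetrahedra via $J_\Delta(a,b,c)=(-q^{1/2})^{-b}\CI_\Delta(b-c,a-b)$.

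Second, we rewrite each factor on the left as $(-q^{1/2})^{-b}\CI_\Delta(b-c,a-b)$, choosing the permutation so that the distinguished argument $k_c$ sits in the slot that makes the summed variable appear linearly as an electric charge. A convenient choice is $J_\Delta(k_c,x,y)=J_\Delta(x,y,k_c)$, which gives $(-q^{1/2})^{-y}\CI_\Delta(y-k_c,x-y)$. Collecting the three $q$-prefactors with $(-q^{1/2})^{2k_c}$, the left side becomes a sum over a single integer of a product of three $\CI_\Delta$'s, whose arguments are linear in $k_c$ with coefficients $\pm1$. This is precisely the shape of the pentagon identity
\[
\sum_{e_0}\CI_\Delta(m_1-e_0,e_0)\,\CI_\Delta(m_2-e_0,e_0)\cdots=\CI_\Delta\CI_\Delta,
\]
or, more cleanly, its fugacity form: the fugacity-basis tetrahedron index \eqref{tetrahedron index -fugacity} satisfies the quantum-dilogarithm pentagon relation, and Fourier transforming in $u$ gives the charge-basis sum. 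The right side similarly becomes $(-q^{1/2})^{-g_{r-1}-g_r}\CI_\Delta\CI_\Delta$.

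The main obstacle is bookkeeping: we must match the linear map \eqref{23 move edge parameter transformation}, together with the specific arrangement of arguments in the lemma, with a known form of the pentagon identity, including all signs and the overall $q^{1/2}$ power. We would fix the normalization by checking one simple configuration, such as all $f,g,h=0$, at low order in $q$. The global statement is then covered by the Dimofte--Gaiotto--Gukov identity, which is itself proven through the $q$-binomial/Ramanujan ${}_1\psi_1$-type summation. An alternative, if the matching becomes messy, is to cite directly Theorem 1.2-type statements in \cite{GHHR16}. There the invariance of $\CJ_N$ under 2--3 moves is established via exactly this symmetrized identity with summation over the quad coordinate of the central edge. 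We would only verify that their parametrization of the three new tetrahedra agrees with \eqref{new tetrahedron coordinates} up to the permutation symmetry of $J_\Delta$.
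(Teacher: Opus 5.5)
Your strategy is the same as the paper's: rewrite each $J_\Delta$ via $J_\Delta(a,b,c)=(-q^{1/2})^{-b}\CI_\Delta(b-c,a-b)$ and its permutation symmetry, then invoke the Dimofte--Gaiotto--Gukov pentagon identity. However, you defer the only nontrivial step to ``bookkeeping'', and what you do write about that step is wrong.

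First, the identity you display is not the pentagon identity. In the correct identity the summed variable enters only the electric slots, the magnetic charges are fixed with the third equal to the sum of the first two, and there is a weight $q^{e_0}$:
\[
\sum_{e_0\in\mathbb{Z}}q^{e_0}\,\CI_\Delta(m_1,x_1+e_0)\,\CI_\Delta(m_2,x_2+e_0)\,\CI_\Delta(m_1+m_2,x_3+e_0)
=q^{-x_3}\,\CI_\Delta(m_1-x_2+x_3,\,x_1-x_3)\,\CI_\Delta(m_2-x_1+x_3,\,x_2-x_3).
\]
The paper's display has a typo here, writing $m_1-x_2+x_2$ for $m_1-x_2+x_3$.

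Second, your choice $J_\Delta(k_c,x,y)=J_\Delta(x,y,k_c)=(-q^{1/2})^{-y}\CI_\Delta(y-k_c,x-y)$ puts $k_c$ in the magnetic slot, not the electric one. More importantly, apply any single ordering uniformly to the three factors as written in the lemma (yours, or keeping $k_c$ first). The fixed charges are then $\pm(x-y)$ with a common sign, where $x-y$ equals $f_r+g_{r-1}-g_r-h_{r-1}$, $f_{r-1}+g_r-g_{r-1}-h_r$ and $h_{r-1}+h_r-f_{r-1}-f_r$. These three sum to zero, so you get the pattern $(m_1,m_2,-m_1-m_2)$ rather than $(m_1,m_2,m_1+m_2)$, and the sum is not ``precisely the shape'' of the pentagon identity. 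A low-order check of one configuration cannot repair this, since it cannot determine a normalization that depends linearly on $f,g,h$; in any case no numerical check is needed.

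The missing step is to treat the third factor differently. In the first two factors keep $k_c$ first, $J_\Delta(k_c,x,y)=(-q^{1/2})^{-x}\CI_\Delta(x-y,k_c-x)$, but write the third as $J_\Delta(k_c,f_{r-1}+f_r,h_{r-1}+h_r)$. Set $e_0=k_c$, $m_1=f_r+g_{r-1}-g_r-h_{r-1}$, $m_2=f_{r-1}+g_r-g_{r-1}-h_r$, $x_1=-f_r-g_{r-1}$, $x_2=-f_{r-1}-g_r$, $x_3=-f_{r-1}-f_r$, and use $(-q^{1/2})^{2k_c}=q^{k_c}$. The left side of the lemma is then $(-q^{1/2})^{x_1+x_2+x_3}$ times the left side of the identity above. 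Its right side gives the prefactor $(-q^{1/2})^{x_1+x_2-x_3}=(-q^{1/2})^{-g_{r-1}-g_r}$ and the arguments $m_1-x_2+x_3=g_{r-1}-h_{r-1}$, $x_1-x_3=f_{r-1}-g_{r-1}$, $m_2-x_1+x_3=g_r-h_r$, $x_2-x_3=f_r-g_r$. This is exactly $J_\Delta(f_{r-1},g_{r-1},h_{r-1})J_\Delta(f_r,g_r,h_r)$, and it is the paper's proof. Your preliminary shift-covariance reduction to $g_{r-1}=g_r=0$ is correct but unnecessary.
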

    \begin{proof}
        Recall the pentagon identity of ordinary tetrahedron index: 
        \begin{equation} \label{Tetra pentagon}
            \begin{split}
                &\sum_{e_0\in\mathbb{Z}}q^{e_0}\CI_\Delta(m_1, x_1 +e_0)\CI_\Delta(m_2, x_2+e_0)\CI_\Delta(m_1 + m_2, x_3 + e_0)\\
                &=q^{-x_3}\CI_\Delta(m_1 - x_2 + x_2, x_1 - x_3) \CI_\Delta(m_2 - x_1 + x_3, x_2 - x_3). 
            \end{split}        
        \end{equation}
        Setting \(m_1 = f_r- h_{r-1}+g_{r-1}-g_r\), \(m_2 = f_{r-1}- h_r+g_r-g_{r-1}\), \(e_0 = k_c\), \(x_1 = -f_r - g_{r-1}\), \(x_2 = -f_{r-1}-g_r\), and \(x_3 = -f_{r-1}-f_r\) gives 
        \begin{equation}
            \begin{split}
                &J_{\Delta}(k_c, f_r+g_{r-1},g_r+h_{r-1})=(-q^{1/2})^{x_1}\CI_\Delta(m_1 , x_1 + e_0)\\
                &J_{\Delta}(k_c, f_{r-1}+g_r, g_{r-1}+h_r)=(-q^{1/2})^{x_2}\CI_\Delta(m_2 , x_2 + e_0)\\
                &J_{\Delta}(k_c, h_{r-1}+h_r, f_{r-1}+f_r) =(-q^{1/2})^{x_3}\CI_\Delta(m_1+m_2 , x_3 + e_0).
            \end{split}
        \end{equation}
        Therefore, 
        \begin{equation}
            \begin{split}
                (\textrm{LHS)}&=(-q^{1/2})^{x_1 +x_2 + x_3}\sum_{k_c\in\mathbb{Z}}q^{e_0}\CI_\Delta(m_1, x_1 +e_0)\CI_\Delta(m_2, x_2+e_0)\\
                &\quad\;\;\CI_\Delta(m_1 + m_2, x_3 + e_0)\\
                &= (-q^{1/2})^{x_1 +x_2 - x_3}\CI_\Delta(m_1 - x_2 + x_2, x_1 - x_3) \CI_\Delta(m_2 - x_1 + x_3, x_2 - x_3)\\
                &= (-q^{1/2})^{-g_{r-1}-g_r}\CI_\Delta(g_{r-1}-h_{r-1},f_{r-1}-g_{r-1})\CI_\Delta(g_r - h_r, f_r - g_r)\\
                &= J_\Delta(f_{r-1}, g_{r-1}, h_{r-1}) J_\Delta(f_r, g_r, h_r)\\
                &=(\textrm{RHS)}.
            \end{split}
        \end{equation}
    \end{proof}

    \begin{theorem}[Functoriality Refined 3D index under \(2\to 3\) moves] 
    Let \(\mathbf{v} \in \Gamma\left(\mathfrak{F}^{\rm hard}[N;\widetilde{\CT}]\right)\) be the refinement vector for the new triangulation. The refined 3D index \(\CJ_{N; {\rm ref}}\) is invariant under the Pachner \(2\to 3\) move up to the lattice embedding \(\iota_{\Gamma}\) induced by the move: 
    \begin{equation}
        \widetilde{\CJ}_{N;{\rm ref}}(b; \mathbf{v}) = \CJ_{N;{\rm ref}}(b; \iota_{\Gamma}(\mathbf{v})) \;.
    \end{equation}
    When no hard edges are lost during the move (i.e., \(\iota_{\Gamma}\) is an isomorphism), the refined index is strictly invariant.
    \end{theorem}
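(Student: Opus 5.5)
The plan is to work entirely in the normal-surface formulation of the refined index and to reduce the statement to the symmetrized pentagon identity, Lemma~\ref{lem:symmetrized pentagon}, applied summand by summand.

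First fix a good basis of internal edges for $\widetilde{\CT}$ using Lemma~\ref{lem:easy-23}. The easy edges are the modified images $\widetilde{E}'_I$ of the old easy edges $E_I$ (obtained via \eqref{eq: modification of easy edge first}--\eqref{eq: modification of easy edge last}), together with the central edge $D_c$. The hard edges are the images $\widetilde{H}_i$ of old hard edges, restricted to those not absorbed into the easy span; this restriction is exactly what $\iota_\Gamma$ encodes. Write a surface class of $\widetilde{\CT}$ as
\begin{align}
\widetilde{S}=\mathbf{A}\cdot\widetilde{\mathbf{M}}+\mathbf{B}\cdot\widetilde{\mathbf{L}}+\mathbf{k}\cdot\widetilde{\mathbf{D}}+k_c D_c\;,
\end{align}
where $\widetilde{\mathbf{M}},\widetilde{\mathbf{L}},\widetilde{\mathbf{D}}$ are the images of the old peripheral and edge solutions under \eqref{Dtilde}. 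The modifications $\widetilde{E}\to\widetilde{E}'$ differ only by integer multiples of $D_c$ and of tetrahedral solutions. Since $J(S+T_a)=J(S)$ and $D_c$ is summed over all of $\mathbb{Z}$, one can change variables $k_c\to k_c+(\text{integer combination of }\mathbf{k})$ and drop the tetrahedral shifts. The summation lattice and the weights $J$ are therefore unchanged. The $\eta$-grading depends only on the hard coefficients, which are untouched, so it also passes through.

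Next, compare weights term by term. Fix $\mathbf{k}$ and the boundary data, and let $(f_a,g_a,h_a)$ be the quad coordinates of the old surface $S=\mathbf{A}\cdot\mathbf{M}+\mathbf{B}\cdot\mathbf{L}+\mathbf{k}\cdot\mathbf{D}$ in $\CT$. By \eqref{Dtilde}, $\widetilde{S}$ agrees with $S$ on tetrahedra $1,\dots,r-2$. On the three new tetrahedra its coordinates are the triples in \eqref{new tetrahedron coordinates} with first entry $k_c$.

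The Euler characteristics then need to be compared. Going from $\CT$ to $\widetilde{\CT}$ gives one extra edge (coefficient $k_c$, contributing $2k_c$) and one extra tetrahedron. I would check that $-\chi(\widetilde S)=-\chi(S)+2k_c$ once the tetrahedral normalization is fixed consistently with Lemma~\ref{lem:symmetrized pentagon}. This is the step I expect to be the main obstacle, because the paper's formal Euler characteristic is stated in basis coefficients and needs care under the re-expression; the fallback is to use the identity $\sum_a g_a=-\mathbf{m}\cdot\boldsymbol{\nu}_p+\mathbf{e}\cdot\boldsymbol{\nu}_x-\chi(S)$ from the proof of the equivalence theorem. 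The other delicate point is that the half-integer parities of $\mathbf{k}$ and $k_c$ must be matched correctly, so that $k_c$ truly ranges over $\mathbb{Z}$ as the lemma requires.

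With these in place, we have
\begin{align}
\sum_{k_c\in\mathbb{Z}} J(\widetilde{S})
=(-q^{1/2})^{-\chi(S)}\prod_{a\le r-2}J_\Delta(f_a,g_a,h_a)\sum_{k_c}(-q^{1/2})^{2k_c}\prod_{\rm new}J_\Delta\;,
\end{align}
and Lemma~\ref{lem:symmetrized pentagon} collapses the inner sum to $J_\Delta(f_{r-1},g_{r-1},h_{r-1})J_\Delta(f_r,g_r,h_r)$. This gives exactly $J(S)$. The refined factor $\eta^{2\mathbf{k}\cdot\mathbf{W}}$ involves only hard coefficients. In $\widetilde{\CT}$ these are the components of $\mathbf{v}$, and in $\CT$ they are the components of $\iota_\Gamma(\mathbf{v})$, padded with zeros on the lost hard directions. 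The $\eta$-weights therefore agree summand by summand.

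Summing over $\mathbf{k}$ yields $\widetilde{\CJ}_{N;{\rm ref}}(b;\mathbf{v})=\CJ_{N;{\rm ref}}(b;\iota_\Gamma(\mathbf{v}))$. When $\iota_\Gamma$ is an isomorphism no padding occurs, so the index is strictly invariant. Finally, convergence and the exchange of the $k_c$ sum with the $\mathbf{k}$ sum follow termwise as formal $q^{1/2}$ series, as in the unrefined case.
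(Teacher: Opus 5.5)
Your proposal follows essentially the same route as the paper's proof. You take the basis $(\widetilde{\mathbf H},\widetilde{\mathbf E},D_c)$ from Lemma~\ref{lem:easy-23}, absorb the easy-edge modifications into a shift of $k_c$ using $J(S+T_a)=J(S)$, collapse the $k_c$-sum with Lemma~\ref{lem:symmetrized pentagon}, and match the $\eta$-weights through the zero-padding $\iota_\Gamma$. The Euler-characteristic step you flagged is immediate from the definition: $k_cD_c$ adds $2k_c$ to $-\chi$, and in each modified easy edge the $T$-shifts and the $D_c$-shift cancel in $-\chi$. The half-integer parity of $k_c$ is left unaddressed in the paper as well.
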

    \begin{proof}
        Under a \(2\to 3\) move, we choose a basis of internal edges for \(\widetilde{\CT}\) as \(\widetilde{\mathbf{D}}=(\widetilde{\mathbf{H}},\widetilde{\mathbf{E}},E_c)\). By Lemma \ref{lem:easy-23}, the new hard edges \(\widetilde{\mathbf{H}}\) are formed by a subset (or linear combinations) of the original hard edges \(\mathbf{H}\). The new easy edges \(\widetilde{\mathbf{E}}\) consist of the images of the old easy edges (modified according to \eqref{eq: modification of easy edge first}-\eqref{eq: modification of easy edge last} to remain valid) alongside the remaining elements of \(\mathbf{H}\) that relax into easy edges. 
        
        Recall that the single normal surface index \(J(S)\) is independent of tetrahedral solutions \(T_a\), and that \eqref{eq: modification of easy edge first}-\eqref{eq: modification of easy edge last} modify ordinary easy edges merely by adding multiples of \(T_a\) and the central edge \(E_c\). Since we sum over all integers \(k_c\), we can shift \(k_c\) to absorb these modifications:
        \begin{equation}
        \begin{split}
            \widetilde{J}(\mathbf{A}\cdot\widetilde{\mathbf{M}}&+\mathbf{B}\cdot\widetilde{\mathbf{L}}+(\mathbf{k},k_c)\cdot\widetilde{\mathbf{D}}) = (-q^{1/2})^{-\widetilde{\chi}(\mathbf{A}\cdot\widetilde{\mathbf{M}}+\mathbf{B}\cdot\widetilde{\mathbf{L}}+(\mathbf{k},k_c)\cdot\widetilde{\mathbf{D}})}\prod_{a=1}^{r+1}J_{\Delta}(\widetilde{f}_a,\widetilde{g}_a,\widetilde{h}_a)\\
            &=(-q^{1/2})^{2k_c}J_{\Delta}(k_c, f_r+g_{r-1},g_r+h_{r-1})J_{\Delta}(k_c, f_{r-1}+g_r, g_{r-1}+h_r)\\
            &\quad \times J_{\Delta}(k_c, h_{r-1}+h_r, f_{r-1}+f_r)\prod_{a=1}^{r-2}J_{\Delta}(f_a,g_a,h_a). 
        \end{split}
        \end{equation}
        Applying the symmetrized pentagon identity (Lemma \ref{lem:symmetrized pentagon}) evaluates the sum over the central edge variable \(k_c\):
        \begin{equation}
            \sum_{k_c\in\mathbb{Z}}\widetilde{J}(\mathbf{A}\cdot\widetilde{\mathbf{M}}+\mathbf{B}\cdot\widetilde{\mathbf{L}}+(\mathbf{k},k_c)\cdot\widetilde{\mathbf{D}}) = \prod_{a=1}^{r}J_{\Delta }(f_a, g_a, h_a) \;.
        \end{equation}
        Thus, the unrefined geometric weights exactly reproduce those of the original triangulation. Furthermore, because the newly relaxed easy edges and the central edge \(E_c\) do not couple to the new refinement vector \(\mathbf{v}\), the refinement weighting evaluates trivially for these directions. This exactly mimics evaluating the original index at the embedded vector \(\iota_{\Gamma}(\mathbf{v})\), where the components corresponding to the lost hard edges are identically zero. Therefore, 
        \begin{equation}
            \widetilde{\CJ}_{N;\textrm{ref}}(b; \mathbf{v}) = \CJ_{N;\textrm{ref}}(b; \iota_{\Gamma}(\mathbf{v})) \;.
        \end{equation}
    \end{proof}

\section{Examples} \label{sec : examples}
In this section, we present explicit computations of the refined 3D index, defined in \eqref{refined index in boundary 1-cycles} and \eqref{refined index in charge basis}, for a variety of examples and verify the conjecture \eqref{main conjecture}.
\subsection{$M=S^3\backslash\mathbf{5}_2$}
As a representative example, consider the following two Dehn surgery representations of the knot complement $M=S^3\backslash\mathbf{5}_2$:
\begin{align}
\begin{split}
    \CD_0\;&:\;N=S^3\backslash\mathbf{5}_2\;,\quad \{P_I\g_I+Q_I\d_I\}^{d=0}_{I=1}=\emptyset\;,
    \\ \CD\;&:\;N=S^3\backslash\mathbf{5}^2_1\;,\quad \{P_I\g_I+Q_I\d_I\}^{d=1}_{I=1}=\{\mu_2-2\lambda_2\}\;.
\end{split}
\end{align}
Here, $\mathbf{5}^2_1$ denotes the Whitehead link, and $\{(\mu_i,\lambda_i)\}^{2}_{i=1}$ are meridian-longitude pairs associated with the two boundary components. We use the ideal triangulations $\CT_0$ and $\CT$ of $S^3\backslash\mathbf{5}_2$ and $S^3\backslash\mathbf{5}^2_1$, respectively, whose gluing equations are
\begin{align} \label{Gluing equations : 5_2 and 5^2_1}
    \begin{split}
        S^3\backslash\mathbf{5}_2\;:\;&C_1= Z_1'+Z_1''+2Z_2+Z_3'+Z_3''\;, \quad C_2 = Z_1+Z_1'+Z_2'+Z_3+Z_3'\;,\\
        &C_3=Z_1+Z_1''+Z_2'+2Z_2''+Z_3+Z_3''\;,\\
        &hol(\mu) = -Z_1+Z_2\;, \quad hol(\lambda)= 2 (Z_1-Z_1''-Z_2+Z_3'')\;,
        \\ S^3\backslash\mathbf{5}^2_1\;:\;&C_1=Z_1+Z_2+Z_3+Z_4\;,\quad C_2=2Z_1'+Z_1''+2Z_2'+Z_2''+Z_3''+Z_4''\;,\\
        &C_3=Z_1''+Z_2''+2Z_3'+Z_3''+2Z_4'+Z_4''\;,\quad C_4=C_1\;,
        \\&hol(\mu_1)=Z_1'-Z_2+Z_3''\;,\quad hol(\lambda_1)=2(Z_2'+Z_2''-Z_3)\;,\\
        &hol(\mu_2)=Z_1-Z_2'-Z_3''\;,\quad hol(\lambda_2)=2(Z_1-Z_3'-Z_3'')\;.
    \end{split}
\end{align}
For clarity, we rewrite the gluing equations for $S^3\backslash\mathbf{5}_2$ given in \eqref{gluing eqs for 5_2}. We first compute the refined index $\CI_{M;\rm ref}[\mathbf{v};(\mathcal{D}_0,\mathcal{T}_0)]$. 
There are two hard edges, $\{H_1,H_2\}=\{C_1,C_2\}$, in the ideal triangulation $\mathcal{T}_0$. 
Thus, the associated vector space $\mathfrak{F}^{\rm ref}[M;(\mathcal{D}_0,\mathcal{T}_0)]$ in \eqref{F-ref[M]} is two-dimensional and given by
\begin{align}
    \mathfrak{F}^{\rm ref}[M;(\mathcal{D}_0,\mathcal{T}_0)]
    =
    \mathfrak{F}^{\rm hard}[M;(\mathcal{D}_0,\mathcal{T}_0)]
    =
    \mathrm{Span}\{T_{h_1},T_{h_2}\}\;.
\end{align}
We choose 
$\{\alpha_1,\beta_1\}=\{\mu,\lambda\}$, i.e.\ the meridian and longitude. 
Then, using the NZ matrices \eqref{NZ matrices 5_2}, we obtain the refined 3D index, as given in 
\eqref{refined index in boundary 1-cycles}, 
\eqref{refined index in charge basis}, and 
\eqref{refined index for N in charge basis}:
\begin{align}
    \begin{split}
        &\CI^{A\mu+B\lambda}_{M;\rm ref}[\mathbf{v}=W_1T_{h_1}+W_2T_{h_2};(\CD_0,\CT_0)]
        \\&=\sum_{n_1,n_2\in\mathbb{Z}/2}(-q^{1/2})^{-n_2}\CI_\Delta(n_2-m,-e-n_1-m)
        \\&\quad\times\CI_\Delta(n_2,e+2n_1-n_2+m)\CI_\Delta(n_2+m,-n_1)\eta^{2W_1n_1+2W_2n_2}\biggr|_{e=A,m=-2B}\;.
    \end{split} \label{I-ref for 5_2 in D0}
\end{align}
Let us consider the case $(A,B)=(0,0)$:
\begin{align}
\begin{split}
    &\CI^{\mathbf{0}}_{M;\rm ref}[\mathbf{v}=W_1T_{h_1}+W_2T_{h_2};(\CD_0,\CT_0)]
    \\&=1+(-3-\eta^{-2W_2})q+(-3+\eta^{2W_1}-\eta^{-2W_2}+\eta^{2W_2}+\eta^{2W_1+2W_2})q^2+(5+\eta^{-2W_1}
    \\&\qquad+2\eta^{2W_1}+\eta^{-4W_2}+\eta^{-2W_1-4W_2}+2\eta^{-2W_2}+\eta^{-2W_1-2W_2}+2\eta^{2W_2}+\eta^{2W_1+2W_2})q^3+\dots \nonumber
\end{split}
\end{align}
From the index expression, we can confirm that $\mathfrak{F}^{\rm ref}_{\rm dec}[M;(\CD_0,\CT_0)]=\{\mathbf{0}\}$ and thus
\begin{align}
    \mathfrak{F}^{\rm ref}_{\rm IR}[M;(\CD_0,\CT_0)]=\mathfrak{F}^{\rm ref}\;,\quad \dim \mathfrak{F}^{\rm ref}_{\rm IR}=2\;.
\end{align}
We expect $r_{\rm ref}[S^3\backslash\mathbf{5}_2]=2$, so the maximally refined index for $M=S^3\backslash \mathbf{5}_2$ is $\CI_{M;\rm ref}[\mathbf{v}=W_1T_{h_1}+W_2T_{h_2};(\CD_0,\CT_0)]$ .

We now compare this with the refined index 
$\CI_{M;\rm ref}[\mathbf{v};(\mathcal{D},\mathcal{T},\vec{\gamma})]$ 
obtained from a different choice of auxiliary data.  In the ideal triangulation $\mathcal{T}$, there is one easy edge, $E_1=C_1$, 
and one hard edge, $H_1=C_2$. Thus, we have
\begin{align}
    \mathfrak{F}^{\rm hard}[M;(\mathcal{D},\mathcal{T})]
    =
    \mathrm{Span}\{T_{h_1}\}\;.
\end{align}
We choose the bases 
$\{\alpha_1,\beta_1\}=\{\mu_1,\lambda_1\}$ and 
$\{\gamma_1,\delta_1\}=\{\mu_2,\lambda_2\}$, so that 
$\vec{\gamma}=(\mu_2)$, where $\mu_2$ is a non-closable boundary 1-cycle. 
The corresponding refined NZ data are
\begin{align}
    g_{\rm NZ}=\left(\begin{array}{cccccccc}
    -1 & -1 & 0 & 0 & -1 & 0 & 1 & 0 \\
    1 & 1 & 0 & 0 & 0 & 1 & -1 & 0 \\
    -2 & -2 & 0 & 0 & -1 & -1 & 1 & 1 \\
    1 & 1 & 1 & 1 & 0 & 0 & 0 & 0 \\
    0 & -1 & -1 & 0 & 0 & 0 & 0 & 0 \\
    1 & 0 & 1 & 0 & 0 & 0 & 0 & 0 \\
    1 & 1 & 1 & 0 & 0 & 0 & 0 & 0 \\
    0 & 0 & 0 & 0 & 0 & 0 & 0 & 1 \\
    \end{array}\right),\quad 
    \boldsymbol{\nu}_x=\left(\begin{array}{c}
        1\\-1\\2\\-2
    \end{array}\right),\quad 
    \boldsymbol{\nu}_p=\left(\begin{array}{c}
        1\\-1\\0\\0
    \end{array}\right)\;.
\end{align}
Using \eqref{refined index for N in charge basis}, we can compute 
$\CI_{N=S^3\backslash\mathbf{5}_2;\rm ref}[\mathbf{v};(\mathcal{D},\mathcal{T})]$. 
Applying the criterion in \eqref{Dehn filling compatibility}, one finds that any nonzero element in 
$\mathfrak{F}^{\rm hard}$ is not compatible with the Dehn filling:
\begin{align}
    \mathfrak{F}^{\rm hard}_{\rm d}[M;(\mathcal{D},\mathcal{T})]
    =
    \{\mathbf{0}\}\;.
\end{align}
On the other hand, there is one refinement arising from the Dehn filling, since our choice 
$\{\gamma_1,\delta_1\}=\{\mu_2,\lambda_2\}$ leads to $(P_1,Q_1)=(-1,2)\neq(\,\cdot\,,\pm1)$. 
Therefore, the vector space $\mathfrak{F}^{\rm ref}[M;(\mathcal{D},\mathcal{T},\vec{\gamma})]$ is
\begin{align}
    \mathfrak{F}^{\rm ref}[M;(\mathcal{D},\mathcal{T},\vec{\gamma})]
    =
    \mathfrak{F}^{\rm dehn}
    \oplus
    \mathfrak{F}^{\rm hard}_{\rm d}
    =
    \mathrm{Span}\{T_a\}\;.
\end{align}
Using \eqref{refined index in boundary 1-cycles}, 
\eqref{refined index in charge basis}, and 
\eqref{refined index for N in charge basis}, 
the refined 3D index is given by
\begin{align}
    \begin{split}
        &\CI^{A \mu_1 + B \lambda_1 }_{M;{\rm ref}}[\mathbf{v} =V T_{a};\left(\CD, \CT, \vec{\gamma}\right)]
        \\&=\sum_{m_2\in\mathbb{Z},e_2\in\mathbb{Z}/2}\CK_{\rm ref}(-1,2;m_2,e_2;\eta^V)\CI_{N=S^3\backslash\mathbf{5}^2_1}(m_1,m_2,e_1,e_2)\bigr|_{e_1=A,m_1=-2B}\;,
 \\
 &\textrm{where}
 \\
& \CI_{N= S^3\backslash \mathbf{5}^2_1}  (m_1, m_2, e_1, e_2) 
\\
&= \sum_{n_1,n_2 \in \mathbb{Z}/2} (-q^{1/2})^{-e_1-2n_1+e_2+2n_2+m_1-m_2} \CI_\Delta (-n_1, n_2) \CI_\Delta (-e_1-n_1+e_2,n_2+m_1-m_2)
\\
& \quad \times \CI_\Delta(n_1-e_2,-e_1-2n_1+e_2+n_2+m_1)\CI_\Delta(e_1+n_1,-e_1-2n_1+e_2+n_2-m_2)\;. \label{I-ref for 5_2 in D}  \end{split}
\end{align}
Here $\CK_{\rm ref}$ is refined Dehn filling kernel given in \eqref{refined Dehn filling kernel}. 
Or equivalently,  using \eqref{refined Dehn filling Kernel in fugacity basis}, the refined 3D index can be given as (Note that $\frac{P_1}{Q_1} = 0-\frac{1}{2}$)
\begin{align}
\begin{split}
&\CI^{A \mu_1 + B \lambda_1 }_{M;{\rm ref}}[\mathbf{v} =V T_{a};\left(\CD, \CT, \vec{\gamma}\right)]
\\
&= \sum_{m_2,m_3 \in \mathbb{Z}/2} \oint \frac{du_2}{2\pi \mathbf{i} u_2 } \oint \frac{du_3}{2\pi \mathbf{i} u_3 } \Delta(m_2, u_2) \Delta(m_3, u_3)  
\\
& \quad \times \CI^{\rm fug}_{N= S^3\backslash \mathbf{5}^2_1} (m,2 m_2, e, u_2^2) \CI_{T[SU(2)]} (m_2 , u_2, m_3, u_3;\eta^{V})u_3^{4m_3}\big{|}_{e= A, m=-2 B}
\\
&\textrm{where }
\\
&\CI^{\rm fug}_{N= S^3\backslash \mathbf{5}^2_1} (m_1,m_2, e_1, u_2)  = \sum_{e_2\in \mathbb{Z}/2}  \CI_{N= S^3\backslash \mathbf{5}^2_1}  (m_1, m_2, e_1, e_2)u_2^{e_2}\;.
\end{split}
\end{align}
Using the expressions in \eqref{I-ref for 5_2 in D0} and \eqref{I-ref for 5_2 in D}, we can check
\begin{align}
\begin{split}
&  \CI^{A \mu + B \lambda }_{M;{\rm ref}}[\mathbf{v} =-W_2 T_{h_2};(\CD_0, \CT_0)] = \CI^{A \mu_1 - B \lambda_1 }_{M;{\rm ref}}[\mathbf{v} =W_2 T_{a};\left(\CD, \CT, \vec{\gamma}\right)] \;.
\end{split}
\end{align}
The map between boundary 1-cycles reflects a topological correspondence between boundary cycles: 
the cycle $(\mu,\lambda)$ of $S^3\backslash \mathbf{5}_2$ is identified with 
$(\mu_1,-\lambda_1)$ in the Dehn-filled manifold 
$(S^3\backslash \mathbf{5}^2_1)_{[\mu_2-2\lambda_2]}$. 
Using \texttt{SnapPy}, one can explicitly verify, as nontrivial evidence, that
\[
\mathrm{vol}\!\left(
(S^3\backslash \mathbf{5}^2_1)_{[A\mu_1-B\lambda_1],[\mu_2-2\lambda_2]}
\right)
=
\mathrm{vol}\!\left(
(S^3\backslash \mathbf{5}_2)_{[A\mu+B\lambda]}
\right).
\]
We list some comparisons of the refined 3D index:
\begin{align}
\begin{split}
&\CI^{0 \mu + 0 \lambda }_{M;{\rm ref}}[\mathbf{v} = -T_{h_2};(\CD_0, \CT_0)]  =  \CI^{0 \mu_1 + 0 \lambda_1 }_{M;{\rm ref}}[\mathbf{v} = T_{a};\left(\CD, \CT, \vec{\gamma} )\right)] 
\\
& = 1- (3+ \eta^2)q+ (-2+\frac{2}{\eta^2} - \eta^2) q^2 + (8 + \frac{3}{\eta^2}+3 \eta^2+2 \eta^4) q^3+(14-\frac{1}{\eta^2}+10\eta^2+3\eta^4)q^4
\\&\qquad +(15-\frac{12}{\eta^2}+15\eta^2+5\eta^4)q^5+(-15+\frac{3}{\eta^4}-\frac{27}{\eta^2}+7\eta^2-2\eta^6)q^6+\dots
\\
&\CI^{ \mu + 0 \lambda }_{M;{\rm ref}}[\mathbf{v} =- T_{h_2};(\CD_0, \CT_0)]  =  \CI^{ \mu_1 + 0 \lambda_1 }_{M;{\rm ref}}[\mathbf{v} = T_{a};\left(\CD, \CT, \vec{\gamma} \right)] 
\\
& = -(2+ \eta^2) q+ (\frac{1}{\eta^2}-\eta^2)q^2+(8+ \frac{2}{\eta^2}+3 \eta^2+ \eta^4 )q^3+(13-\frac{2}{\eta^2}+9\eta^2+2\eta^4)q^4
\\&\qquad +(10-\frac{11}{\eta^2}+14\eta^2+3\eta^4)q^5+(-18+\frac{2}{\eta^4}-\frac{23}{\eta^2}+6\eta^2-\eta^4-2\eta^6)q^6+\dots
\\
&\CI^{ \mu +  \lambda }_{M;{\rm ref}}[\mathbf{v} =- T_{h_2};(\CD_0, \CT_0)]  =  \CI^{ \mu_1 - \lambda_1 }_{M;{\rm ref}}[\mathbf{v} = T_{a};\left(\CD, \CT, \vec{\gamma}\right)] 
\\
& = (\eta^2+\eta^4) q^2+ (\eta^2+\eta^4) q^3+(-2-4\eta^2-4\eta^4-2\eta^6)q^5+(-4-11\eta^2-9\eta^4-4\eta^6)q^6\quad\quad
\\&\qquad+(-5-19\eta^2-16\eta^4-6\eta^6)q^7+(-1+\frac{1}{\eta^2}-22\eta^2-20\eta^4-6\eta^6)q^8+\dots\;.
    \end{split}
\end{align}
We can also verify from the index expression that
$\mathfrak{F}^{\rm ref}_{\rm dec}[M;(\mathcal{D},\mathcal{T},\vec{\gamma})]=\{\mathbf{0}\}$, and hence
\begin{align}
    \mathfrak{F}^{\rm ref}_{\rm IR}[M;(\mathcal{D},\mathcal{T},\vec{\gamma})]
    =
    \mathfrak{F}^{\rm ref}\;,
    \qquad
    \dim \mathfrak{F}^{\rm ref}_{\rm IR}=1\;.
\end{align}
The injective linear map in \eqref{main conjecture} can be given as:
\begin{align}
    F\;:\;
    \mathfrak{F}^{\rm ref}_{\rm IR}[M;(\mathcal{D},\mathcal{T},\vec{\gamma})]
    \;\hookrightarrow\;
    \mathfrak{F}^{\rm ref}_{\rm IR}[M;(\mathcal{D}_0,\mathcal{T}_0)]\;,
    \qquad
    W_2T_a \;\mapsto\; -W_2T_{h_2}\;. 
\end{align}
Note that
\(
\dim \mathfrak{F}^{\rm ref}_{\rm IR}[M;(\mathcal{D},\mathcal{T},\vec{\gamma})]
<
\dim \mathfrak{F}^{\rm ref}_{\rm IR}[M;(\mathcal{D}_0,\mathcal{T}_0)]
=2
\),
and hence the map is injective but not surjective. Note that, even without comparison with 
$\CI_{M;\rm ref}[\mathbf{v};(\mathcal{D}_0,\mathcal{T}_0)]$, 
we can check that the necessary condition for maximal refinement 
\eqref{maximal refinement necessary} is violated:
\begin{align}
    {\rm coeff}_{q^1}\!\left[
    \CI^{\mathbf{0}}_{M;\rm ref}
    [\mathbf{v};(\mathcal{D},\mathcal{T},\vec{\gamma})]
    \big|_{\rm neutral}
    \right]
    =
    -3
    <
    \underbrace{
    -\dim \mathfrak{F}_{\rm IR}[M;(\mathcal{D},\mathcal{T},\vec{\gamma})]-1
    }_{=-2}\;.
\end{align}
This indicates that additional refinements must be present. Indeed, we can identify at least one more from 
$\CI_{M;\rm ref}[\mathbf{v};(\mathcal{D}_0,\mathcal{T}_0)]$.

The analysis of the remaining examples follows the same procedure as above. 
We therefore present them focusing on the essential data and the verification of the conjecture.
\subsection{$M=m003$} \label{Example 2}
We consider the following three Dehn surgery representations of the $M=m003$:
\begin{align}
\begin{split}
    \CD_0\;&:\;N=m003\;,\quad \{P_I\g_I+Q_I\d_I\}^{d=0}_{I=1}=\emptyset\;,
    \\ \CD\;&:\;N=m125\;,\quad \{P_I\g_I+Q_I\d_I\}^{d=1}_{I=1}=\{\a_{\rm sp2}+2\b_{\rm sp2}\}\;,
    \\ \CD'\;&:\;N=m125\;,\quad \{P_I\g_I+Q_I\d_I\}^{d=1}_{I=1}=\{2\a_{\rm sp2}-\b_{\rm sp2}\}\;.
\end{split}
\end{align}
Here $(\alpha_{\rm sp},\beta_{\rm sp})$ is a basis of $H_1(\partial(m003);\mathbb{Z})$ and $\{(\alpha_{\textrm{sp}i},\beta_{\textrm{sp}i})\}_{i=1}^2$ are basis associated with the two boundary components of $m125$, as chosen by \texttt{SnapPy}. We use the ideal triangulations $\CT_0$ and $\CT$ of $m003$ and $m125$, respectively, as provided by \texttt{SnapPy}, whose gluing equations are
\begin{align}
\label{Gluing equation : m003, m125}
\begin{split}
m003 \;:\; & C_1 = 2Z_1+Z_1''+2Z_2+Z_2''\;, \quad C_2 = 2Z_1'+Z_1''+2Z_2'+Z_2''\;,
\\
& hol(\alpha_{\rm sp}) = -2Z_1'+2Z_2\;, \quad hol(\beta_{\rm sp}) = -Z_1'+2Z_2-Z_2'\;,
\\\text{m125} \;:\; &C_1 = Z_1+Z_1'+Z_2+Z_2''+Z_3+Z_3''+Z_4+Z_4''\;, 
\\&C_2 = Z_1'+Z_2'+Z_2''+Z_3+Z_4'+Z_4''\;,\quad C_3=Z_1''+Z_2'+Z_3'+Z_4'\;,
\\&C_4=Z_1+Z_1''+Z_2+Z_3'+Z_3''+Z_4\;,
\\&hol(\alpha_{\mathrm{sp}1}) = -Z_1+Z_1'-Z_2-Z_4+Z_4'\;, \quad hol(\beta_{\mathrm{sp}1}) = -Z_2+Z_3'-Z_4''\;,
\\&hol(\alpha_{\mathrm{sp}2})=-Z_1'-Z_3''+Z_4'\;,\quad hol(\beta_{\mathrm{sp}2})=Z_1''-Z_2-Z_4''\;.
\end{split}
\end{align}

\paragraph{The refined index $\CI_{M;\rm ref}[\mathbf{v};(\CD_0,\CT_0)]$} From the following choices
\begin{align}
    H_1=C_1\;,\quad \{\a_1,\b_1\}=\{\a_{\rm sp},\b_{\rm sp}\}\;,
\end{align}
we get the refined 3D index with the one dimensional vector space
\begin{align}
    \CI^{A\alpha_{\rm sp}+B\beta_{\rm sp}}_{M;\textrm{ref}}[\mathbf{v}=W_1T_{h_1};(\CD_0,\CT_0)]\;,\quad \mathfrak{F}^{\rm ref}[M;(\CD_0,\CT_0)]=\mathrm{Span}\{T_{h_1}\}\;.
\end{align}
Let's look
\begin{align}
\begin{split}
    &\CI^{\mathbf{0}}_{M;\rm ref}[\mathbf{v}=T_{h_1};(\CD_0,\CT_0)]
    \\&=1-2q-3q^2+(\frac{1}{\eta^2}+\eta^{2})q^3+(4+\frac{2}{\eta^2}+2\eta^{2})q^4+(12+\frac{3}{\eta^2}+3\eta^2)q^5+(14+\frac{2}{\eta^2}
    \\&\qquad+2\eta^2)q^6+(16-\frac{1}{\eta^2}-\eta^2)q^7+(4-\frac{8}{\eta^2}-8\eta^2)q^8+(-16-\frac{18}{\eta^2}-18\eta^2)q^9+\dots
\end{split}
\end{align}
From the index expression, we can confirm that $\mathfrak{F}_{\rm dec}^{\rm ref}[M;(\CD_0,\CT_0)]=\{\mathbf{0}\}$ and thus
\begin{align}
    \mathfrak{F}^{\rm ref}_{\rm IR}[M;(\CD_0,\CT_0)]=\mathfrak{F}^{\rm ref}\;,\quad \dim\mathfrak{F}^{\rm ref}_{\rm IR}=1\;.
\end{align}
We expect $r_{\rm ref}[m003]=1$, so the maximally refined index for $M=m003$ is $\CI_{M;\rm ref}[\mathbf{v}=W_1T_{h_1};(\CD_0,\CT_0)]$.

\paragraph{The refined index $\CI_{M;\rm ref}[\mathbf{v};(\CD,\CT,\vec{\gamma})]$} From the following choices
\begin{align} \label{choices for m125(2,1)}
\begin{split}
    &\{H_1,E_1\}=\{C_4,C_3\}\;,\quad \{\a_1,\b_1\}=\{\a_{\rm sp1},\b_{\rm sp1}\}\;,
    \\&\{\gamma_1,\delta_1\}=\{\a_{\rm sp2},\b_{\rm sp2}\}\rightarrow \vec{\gamma}=(\a_{\rm sp 2})\text{ and }(P_1,Q_1)=(1,2)\;,
    \\&\mathbf{a}_{\rm d}=\mathbf{b}_{\rm d}=\mathbf{0}
\end{split}
\end{align}
where $\a_{\rm sp2}$ is non-closable boundary 1-cycle, we get the refined 3D index
\begin{align}
    \begin{split}
        &\CI^{A\a_{\rm sp1}+B\b_{\rm sp1}}_{M;\rm ref}[\mathbf{v}=W_1T_{h_1}+VT_a;(\CD,\CT,\vec{\gamma})]
        \\&=\sum_{m_2\in\mathbb{Z},e_2\in\mathbb{Z}/2}\CK_{\rm ref}(1,2;m_2,e_2;\eta^V)\CI_{N=m125;\rm ref}(\mathbf{v}=W_1T_{h_1};m_1,m_2,e_1,e_2)\bigr|_{e_1=A,m_1=-2B}
    \end{split}
\end{align}
with the two dimensional vector space
\begin{align}
\begin{split}
    &\mathfrak{F}^{\rm hard}_{\rm d}[M;(\CD,\CT)]=\mathrm{Span}\{T_{h_1}\}\;,\quad \mathfrak{F}^{\rm dehn}[M;(\CD,\vec{\gamma})]=\mathrm{Span}\{T_a\}
    \\&\rightarrow \mathfrak{F}^{\rm ref}[M;(\CD,\CT,\vec{\gamma})]=\mathfrak{F}^{\rm dehn}\oplus \mathfrak{F}^{\rm hard}_{\rm d}=\mathrm{Span}\{T_a,T_{h_1}\}\;.
\end{split}
\end{align}
From the expression of the index with extra background shift\footnote{This background shift corresponds to the choice of $f$ in \eqref{F-ref[M]-IR-2}.} $\eta^{-2AW_1}$, we checked that $\mathfrak{F}^{\rm ref}_{\rm dec}[M;(\CD,\CT,\vec{\gamma})]=\{T_{h_1}+T_a\}$ and thus
\begin{align}
\begin{split}
    &\mathfrak{F}^{\textrm{ref}}_{\textrm{IR}}[M;(\CD,\CT,\vec{\gamma})]=\mathfrak{F}^{\textrm{ref}}/\mathfrak{F}^{\textrm{ref}}_{\textrm{dec}}=\textrm{Span}\{T_{h_1}\}\;,\quad \dim\mathfrak{F}^{\textrm{ref}}_{\textrm{IR}}=1
\end{split}
\end{align}
where we choose the representative $T_{h_1}$ among $T_{h_1}+\alpha(T_{h_1}+T_a)$. Since
\begin{align}
    \dim \mathfrak{F}^{\rm ref}_{\rm IR}[M;(\CD,\CT,\vec{\gamma})]=\dim \mathfrak{F}^{\rm ref}_{\rm IR}[M;(\CD_0,\CT_0)]\;,
\end{align}
the injective linear map in \eqref{main conjecture} will be a bijective for this case. We checked the following identification
\begin{align}
    \eta^{-2(A+B)W_1}\CI_{M;\rm ref}^{B\a_{\rm sp }+A\b_{\rm sp}}[\mathbf{v}=-W_1T_{h_1};(\CD_0,\CT_0)]=\eta^{-2AW_1}\CI_{M;\rm ref}^{A\a_{\rm sp1}+B\b_{\rm sp1}}[\mathbf{v}=W_1T_{h_1};(\CD,\CT,\vec{\gamma})]\;.
\end{align}
It also reflects a topological correspondence between the boundary cycles: the cycle $(\a_{\rm sp},\b_{\rm sp})$ of $m003$ is identified with $(\b_{\rm sp1},\a_{\rm sp1})$ in the Dehn-filled manifold $(m125)_{[\a_{\rm sp2}+2\b_{\rm sp2}]}$. For example,
\begin{align}
\begin{split}
    &\CI^{0\alpha_{\rm sp}+0\beta_{\rm sp}}_{M;\textrm{ref}}[\mathbf{v}=-T_{h_1};(\CD_0,\CT_0)]=\CI^{0\alpha_{\rm sp1}+0\beta_{\rm sp1}}_{M;\textrm{ref}}[\mathbf{v}=T_{h_1};(\CD,\CT,\vec{\gamma})]
    \\&=1-2q-3q^2+(\frac{1}{\eta^2}+\eta^2)q^3+(4+\frac{2}{\eta^2}+2\eta^2)q^4+(12+\frac{3}{\eta^2}+3\eta^2)q^5+(14+\frac{2}{\eta^2}
    \\&\qquad+2\eta^2)q^6+(16-\frac{1}{\eta^2}-\eta^2)q^7+(4-\frac{8}{\eta^2}-8\eta^2)q^8+(-16-\frac{18}{\eta^2}-18\eta^2)q^9+\dots
    \\&\eta^{-2}\CI^{\alpha_{\rm sp}+0\beta_{\rm sp}}_{M;\textrm{ref}}[\mathbf{v}=-T_{h_1};(\CD_0,\CT_0)]=\CI^{0\alpha_{\rm sp1}+\beta_{\rm sp1}}_{M;\textrm{ref}}[\mathbf{v}=T_{h_1};(\CD,\CT,\vec{\gamma})]
    \\&=-q-q^2+(1+\frac{1}{\eta^2}+\eta^2)q^3+(4+\frac{1}{\eta^2}+\eta^2)q^4+(8+\frac{2}{\eta^2}+2\eta^2)q^5+9q^6
    \\&\qquad +(7-\frac{2}{\eta^2}-2\eta^2)q^7+(-3-\frac{8}{\eta^2}-8\eta^2)q^8+(-20-\frac{15}{\eta^2}-15\eta^2)q^9+\dots
    \\&\eta^{-4}\CI^{\alpha_{\rm sp}+\beta_{\rm sp}}_{M;\textrm{ref}}[\mathbf{v}=-T_{h_1};(\CD_0,\CT_0)]=\eta^{-2}\CI^{\alpha_{\rm sp1}+\beta_{\rm sp1}}_{M;\textrm{ref}}[\mathbf{v}=T_{h_1};(\CD,\CT,\vec{\gamma})]
    \\&=\frac{q^2}{\eta^2}+(1+\frac{1}{\eta^2})q^3+(2+\frac{1}{\eta^2})q^4+3q^5+(2-\frac{2}{\eta^2})q^6+(-1-\frac{6}{\eta^2}-\eta^2)q^7+\dots
    \\&\qquad+(-8-\frac{11}{\eta^2}-2\eta^2)q^8+(-18+\frac{1}{\eta^4}-\frac{16}{\eta^2}-4\eta^2)q^9+\dots
\end{split}
\end{align}

\paragraph{The refined index $\CI_{M;\rm ref}[\mathbf{v};(\CD',\CT,\vec{\gamma})]$} Using the choices \eqref{choices for m125(2,1)}, we get the same $\CI_{N=m125;\rm ref}$ but a different vector space
\begin{align}
    \mathfrak{F}^{\rm ref}[M;(\CD',\CT,\vec{\gamma})]=\mathfrak{F}^{\rm hard}_{\rm d}[M;(\CD',\CT)]=\mathrm{Span}\{T_{h_1}\}
\end{align}
since $(P_1,Q_1)=(2,-1)=(\;\cdot\;,\pm 1)$ for this case. The resulting refined 3D index is
\begin{align}
    \begin{split}
        &\CI_{M;\rm ref}^{A\a_{\rm sp1}+B\b_{\rm sp1}}[\mathbf{v}=W_1T_{h_1};(\CD',\CT,\vec{\gamma})]
        \\&=\sum_{m_2\in\mathbb{Z},e_2\in\mathbb{Z}/2}\CK(2,-1;m_2,e_2)\CI_{N=m125;\rm ref}(\mathbf{v}=W_1T_{h_1};m_1,m_2,e_1,e_2)\bigr|_{e_1=A,m_1=-2B}\;.
    \end{split}
\end{align}
From the expression of the index, we checked that $\mathfrak{F}^{\rm ref}_{\rm dec}[M;(\CD',\CT,\vec{\gamma})]=\{\mathbf{0}\}$ and thus
\begin{align}
    \mathfrak{F}_{\rm IR}^{\rm ref}[M;(\CD',\CT,\vec{\gamma})]=\mathfrak{F}^{\rm ref}\;,\quad \dim\mathfrak{F}^{\rm ref}_{\rm IR}=1\;.
\end{align}
Since
\begin{align}
    \dim\mathfrak{F}^{\rm ref}_{\rm IR}[M;(\CD',\CT,\vec{\gamma})]=\dim \mathfrak{F}^{\rm ref}_{\rm IR}[M;(\CD_0,\CT_0)]\;,
\end{align}
the injective linear map in \eqref{main conjecture} will be a bijective for this case. We checked the following identification
\begin{align}
    \eta^{2BW_1}\CI^{A\alpha_{\rm sp}+B\beta_{\rm sp}}_{M;\textrm{ref}}[\mathbf{v}=W_1T_{h_1};(\CD_0,\CT_0)]=\CI^{-A\alpha_{\rm sp1}+B\beta_{\rm sp1}}_{M;\textrm{ref}}[\mathbf{v}=W_1T_{h_1};(\CD',\CT,\vec{\gamma})]\;.
\end{align}
It also reflects a topological correspondence between the boundary cycles: the cycle $(\a_{\rm sp},\b_{\rm sp})$ of $m003$ is identified with $(-\a_{\rm sp1},\b_{\rm sp1})$ in the Dehn-filled manifold $(m125)_{[2\a_{\rm sp2}-\b_{\rm sp2}]}$. For example,
\begin{align}
\begin{split}
    &\CI^{0\alpha_{\rm sp}+0\beta_{\rm sp}}_{M;\textrm{ref}}[\mathbf{v}=T_{h_1};(\CD_0,\CT_0)]=\CI^{0\alpha_{\rm sp1}+0\beta_{\rm sp1}}_{M;\textrm{ref}}[\mathbf{v}=T_{h_1};(\CD',\CT,\vec{\gamma})]
    \\&=1-2q-3q^2+(\frac{1}{\eta^2}+\eta^2)q^3+(4+\frac{2}{\eta^2}+2\eta^2)q^4+(12+\frac{3}{\eta^2}+3\eta^2)q^5+(14+\frac{2}{\eta^2}
    \\&\qquad+2\eta^2)q^6+(16-\frac{1}{\eta^2}-\eta^2)q^7+(4-\frac{8}{\eta^2}-8\eta^2)q^8+(-16-\frac{18}{\eta^2}-18\eta^2)q^9+\dots
    \\&\CI^{\alpha_{\rm sp}+0\beta_{\rm sp}}_{M;\textrm{ref}}[\mathbf{v}=T_{h_1};(\CD_0,\CT_0)]=\CI^{-\alpha_{\rm sp1}+0\beta_{\rm sp1}}_{M;\textrm{ref}}[\mathbf{v}=T_{h_1};(\CD',\CT,\vec{\gamma})]
    \\&=-\frac{q}{\eta^2}-\frac{q^2}{\eta^2}+(1+\frac{1}{\eta^4}+\frac{1}{\eta^2})q^3+(1+\frac{1}{\eta^4}+\frac{4}{\eta^2})q^4+(2+\frac{2}{\eta^4}+\frac{8}{\eta^2})q^5+\frac{9}{\eta^2}q^6
    \\&\qquad +(-2-\frac{2}{\eta^4}+\frac{7}{\eta^2})q^7+(-8-\frac{8}{\eta^4}-\frac{3}{\eta^2})q^8+(-15-\frac{15}{\eta^4}-\frac{20}{\eta^2})q^9+\dots
    \\&\eta^{2}\CI^{\alpha_{\rm sp}+\beta_{\rm sp}}_{M;\textrm{ref}}[\mathbf{v}=T_{h_1};(\CD_0,\CT_0)]=\CI^{-\alpha_{\rm sp1}+\beta_{\rm sp1}}_{M;\textrm{ref}}[\mathbf{v}=T_{h_1};(\CD',\CT,\vec{\gamma})]
    \\&=q^2+(1+\frac{1}{\eta^2})q^3+(1+\frac{2}{\eta^2})q^4+\frac{3}{\eta^2}q^5+(-2+\frac{2}{\eta^2})q^6+(-6-\frac{1}{\eta^4}-\frac{1}{\eta^2})q^7
    \\&\qquad +(-11-\frac{2}{\eta^4}-\frac{8}{\eta^2})q^8+(-16-\frac{4}{\eta^4}-\frac{18}{\eta^2}+\eta^2)q^9+(-20-\frac{6}{\eta^4}-\frac{30}{\eta^2}+2\eta^2)q^{10}+\dots
\end{split}
\end{align}

\subsection{Closed hyperbolic manifold with ${\rm vol}\simeq 0.98137$, $(S^3\backslash \mathbf{5}_2)_{[5 \mu+\lambda]}$} \label{Example 3}
We consider the following three Dehn surgery representations of the closed hyperbolic 3-manifold $M=(S^3\backslash\mathbf{5}_2)_{[5\mu+\lambda]}$:
\begin{align}
\begin{split}
    \CD_0\;&:\;N=S^3\backslash\mathbf{5}_2\;,\quad \{P_I\g_I+Q_I\d_I\}^{d=1}_{I=1}=\{5\mu+\lambda\}\;,
    \\ \CD\;&:\;N=S^3\backslash\mathbf{4}_1\;,\quad \{P_I\g_I+Q_I\d_I\}^{d=1}_{I=1}=\{-5\mu+\lambda\}\;,
    \\ \CD'\;&:\;N=m003\;,\quad \{P_I\g_I+Q_I\d_I\}^{d=1}_{I=1}=\{2\a_{\rm sp}-3\b_{\rm sp}\}\;.
\end{split}
\end{align}
Here $(\mu,\lambda)$ is the meridian-longitude pair associated with boundary components of the two knot complements $S^3\backslash \mathbf{5}_2$ and $S^3\backslash \mathbf{4}_1$. $(\a_{\rm sp},\b_{\rm sp})$ is the basis of $H_1(\partial(m003);\mathbb{Z})$ chosen by \texttt{SnapPy}. See $\eqref{gluing eqs for 5_2}$, \eqref{gluing eqs for 4_1} and $\eqref{Gluing equation : m003, m125}$, for the gluing equations associated with the ideal triangulation $\CT_0$, $\CT$ and $\CT'$, of $\mathbf{5}_2$, $\mathbf{4}_1$ and $m003$, respectively.
\paragraph{The refined index $\CI_{M;\rm ref}[\mathbf{v};(\CD_0,\CT_0,\vec{\gamma}_0)]$} From the following choices
\begin{align}
\begin{split}
    &\{H_1,H_2\}=\{C_1,C_2\}\;,
    \\& \{\gamma_1,\delta_1\}=\{\mu,\lambda\}\rightarrow \vec{\gamma}_0=(\mu)\text{ and }(P_1,Q_1)=(5,1)\;,
    \\&\mathbf{a}_{\rm d}=\mathbf{b}_{\rm d}=\mathbf{0}
\end{split}
\end{align}
where $\mu$ is non-closable boundary 1-cycle, we get the refined 3D index
\begin{align}
    \begin{split}
        &\CI_{M;\rm ref}[\mathbf{v}=W_2T_{h_2};(\CD_0,\CT_0,\vec{\gamma}_0)]
        \\&=\sum_{m\in\mathbb{Z},e\in\mathbb{Z}/2}\CK(5,1;m,e)\CI_{N=S^3\backslash\mathbf{5}_2;\rm ref}(\mathbf{v}=W_2T_{h_2};m,e)
        \\&=1-q-2q^2+(-2+\eta^{-4W_2})q^3+(-2+\eta^{-4W_2})q^4+\eta^{-4W_2}q^5+(1+\eta^{-4W_2})q^6+(5
        \\&\qquad +\eta^{-4W_2}+\eta^{4W_2})q^7+(7+\eta^{4W_2})q^8+(11-\eta^{-4W_2}+2\eta^{4W_2})q^9+\dots
    \end{split}
\end{align}
with the one dimensional vector space
\begin{align}
    \begin{split}
        &\mathfrak{F}^{\rm hard}_{\rm d}[M;(\CD_0,\cT_0)]=\mathrm{Span}\{T_{h_2}\}\;,\quad \mathfrak{F}^{\rm dehn}[M;(\CD_0,\vec{\gamma}_0)]=\{\mathbf{0}\}
        \\&\rightarrow \mathfrak{F}^{\rm ref}[M;(\CD_0,\CT_0,\vec{\gamma}_0)]=\mathfrak{F}^{\rm dehn}\oplus \mathfrak{F}^{\rm hard}_{\rm d}=\mathrm{Span}\{T_{h_2}\}\;.
    \end{split}
\end{align}
From the index expression, we can confirm that $\mathfrak{F}^{\rm ref}_{\rm dec}[M;(\CD_0,\CT_0,\vec{\gamma}_0)]=\{\mathbf{0}\}$ and thus
\begin{align}
    \mathfrak{F}^{\rm ref}_{\rm IR}[M;(\CD_0,\CT_0,\vec{\gamma}_0)]=\mathfrak{F}^{\rm ref}\;,\quad \dim \mathfrak{F}^{\rm ref}_{\rm IR}=1\;.
\end{align}
We expect $r_{\rm ref}[(S^3\backslash\mathbf{5}_2)_{[5\mu+\lambda]}]=1$, so the maximally refined index for $M=(S^3\backslash\mathbf{5}_2)_{[5\mu+\lambda]}$ is $\CI_{M;\rm ref}[\mathbf{v}=W_2T_{h_2};(\CD_0,\CT_0,\vec{\gamma}_0)]$.

\paragraph{The refined index $\CI_{M;\rm ref}[\mathbf{v};(\CD,\CT,\vec{\gamma})]$} From the following choices
\begin{align}
    \begin{split}
        &H_1=C_1\;,
        \\&\{\gamma_1,\delta_1\}=\{\mu,\lambda\}\rightarrow \vec{\gamma}=(\mu)\text{ and }(P_1,Q_1)=(-5,1)
    \end{split}
\end{align}
where $\mu$ is non-closable boundary 1-cycle, the vector space becomes zero dimensional
\begin{align}
    &\mathfrak{F}^{\rm hard}_{\rm d}[M;(\CD,\CT)]=\mathfrak{F}^{\rm dehn}[M;(\CD,\vec{\gamma})]=\{\mathbf{0}\}\rightarrow\mathfrak{F}^{\rm ref}[M;(\CD,\CT,\vec{\gamma})]=\mathfrak{F}^{\rm dehn}\oplus \mathfrak{F}^{\rm hard}_{\rm d}=\{\mathbf{0}\}
\end{align}
so there is no refinement. The resulting (un)refined 3D index is
\begin{align}
\begin{split}
    &\CI_{M;\textrm{ref}}[(\CD,\CT,\vec{\gamma})]=\CI_M
    \\&=\sum_{m\in\mathbb{Z},e\in\mathbb{Z}/2}\CK(-5,1;m,e)\CI_{N=S^3\backslash \mathbf{4}_1}(m,e)
    \\&=1-q-2q^2-q^3-q^4+q^5+2q^6+7q^7+8q^8+12q^9+14q^{10}+16q^{11}+12q^{12}+\dots
\end{split}
\end{align}
$\dim\mathfrak{F}^{\rm ref}_{\rm IR}[M;(\CD,\CT,\vec{\gamma})]<\dim\mathfrak{F}^{\rm ref}_{\rm IR}[M;(\CD_0,\CT_0,\vec{\gamma}_0)]$ and the injective linear map in \eqref{main conjecture} trivially exists as
\begin{align}
F\;:\; \mathfrak{F}^{\rm ref}_{\rm IR}[M;\left(\CD, \CT, \vec{\gamma}\right)]\;\hookrightarrow\;\mathfrak{F}^{\rm ref}_{\rm IR}[M;(\mathcal{D}_0,\mathcal{T}_0,\vec{\gamma}_0)]\;,
\quad
 \mathbf{0}\mapsto \mathbf{0}\;.
\end{align}
Note that the necessary condition for the maximal refinement \eqref{maximal refinement necessary} is violated as
\begin{align}
    {\rm coeff}_{q^1}\left[\mathcal{I}_{M;\rm ref}[(\CD,\CT,\vec{\gamma})]\right]=-1 < \underbrace{-\dim \mathfrak{F}_{\rm IR}[M;(\CD,\CT,\vec{\gamma})]-0}_{=0}\;.
\end{align}
It indicates that there must be extra refinements. We can find at least one more from $\CI_{M;\rm ref}[\mathbf{v};(\CD_0,\CT_0,\vec{\gamma}_0)]$.

\paragraph{The refined index $\CI_{M;\rm ref}[\mathbf{v};(\CD',\CT',\vec{\gamma}')]$} From the following choices
\begin{align}
    \begin{split}
        &H_1=C_1\;,
        \\&\{\gamma_1,\delta_1\}=\{\a_{\rm sp},\beta_{\rm sp}\}\rightarrow \vec{\gamma}'=(\a_{\rm sp})\text{ and }(P_1,Q_1)=(2,-3)\;,
        \\&\mathbf{a}_{\rm d}=1\;,\quad \mathbf{b}_{\rm d}=-\frac{1}{2}
    \end{split}
\end{align}
where $\a_{\rm sp}$ is non-closable boundary 1-cycle, we get the refined 3D index
\begin{align}
\begin{split}
    &\CI_{M;\rm ref}[\mathbf{v}=W_1T_{h_1}+VT_a;(\CD',\CT',\vec{\gamma}')]
    \\&=\sum_{m\in\mathbb{Z},e\in\mathbb{Z}/2}\CK_{\rm ref}(2,-3;m,e;\eta^V)\CI_{N=m003;\rm ref}(\mathbf{v}=W_1T_{h_1};m,e;\mathbf{a}_{\rm d},\mathbf{b}_{\rm d})
    \\&=1-q-2q^2+(-2+\eta^{2W_1+2V})q^3+(-2+\eta^{2W_1+2V})q^4+\eta^{2W_1+2V}q^5+(1+\eta^{2W_1+2V})q^6
    \\&\qquad +(5+\eta^{-2W_1-2V}+\eta^{2W_1+2V})q^7+(7+\eta^{-2W_1-2V})q^8+\dots
\end{split}
\end{align}
with the two dimensional vector space
\begin{align}
\label{m003(2,-3) vector space}
    \begin{split}
        &\mathfrak{F}^{\rm hard}_{\rm d}[M;(\CD',\CT')]=\mathrm{Span}\{T_{h_1}\}\;,\quad \mathfrak{F}^{\rm dehn}[M;(\CD',\vec{\gamma}')]=\mathrm{Span}\{T_a\}
        \\&\rightarrow \mathfrak{F}^{\rm ref}[M;(\CD',\CT',\vec{\gamma}')]=\mathfrak{F}^{\rm dehn}\oplus\mathfrak{F}^{\rm hard}_{\rm d}=\mathrm{Span}\{T_a,T_{h_1}\}\;.
    \end{split}
\end{align}
From the index expression, we can confirm that $\mathfrak{F}^{\rm ref}_{\rm dec}[M;(\CD',\CT',\vec{\gamma}')]=\mathrm{Span}\{T_{h_1}-T_a\}$ and thus
\begin{align}
    \mathfrak{F}^{\rm ref}_{\rm IR}[M;(\CD',\CT',\vec{\gamma}')]=\mathfrak{F}^{\rm ref}/\mathfrak{F}^{\rm ref}_{\rm dec}=\mathrm{Span}\{T_{h_1}\}\;,\quad \dim\mathfrak{F}^{\rm ref}_{\rm IR}=1
\end{align}
where we choose the representative $T_{h_1}$ among $T_{h_1}+\a(T_{h_1}-T_a)$. Since
\begin{align}
    \dim\mathfrak{F}^{\rm ref}_{\rm IR}[M;(\CD',\CT',\vec{\gamma}')]=\dim\mathfrak{F}^{\rm ref}_{\rm IR}[M;(\CD_0,\CT_0,\vec{\gamma}_0)]\;,
\end{align}
the injective linear map in \eqref{main conjecture} will be a bijective for this case. We can check it from the following identification
\begin{align}
    \CI_{M;\textrm{ref}}[\mathbf{v}=W_2T_{h_2};(\CD_0,\CT_0,\vec{\gamma}_0)]=\CI_{M;\textrm{ref}}[\mathbf{v}=-2W_2T_{h_1};(\CD',\CT',\vec{\gamma}')]\;.
\end{align}
Alternatively, we can choose
\begin{align} \label{NC cycles in m003}
\begin{split}
    &\{\gamma_1,\delta_1\}=\{\b_{\rm sp},-\a_{\rm sp}\}\rightarrow \vec{\gamma}''=(\b_{\rm sp})\text{ and }(P_1,Q_1)=(-3,-2)\;,
    \\&\mathbf{a}_{\rm d}=1\;,\quad \mathbf{b}_{\rm d}=\frac{1}{2}
    \\&\text{or}
    \\&\{\gamma_1,\delta_1\}=\{\b_{\rm sp}-\a_{\rm sp},-\a_{\rm sp}\}\rightarrow \vec{\gamma}'''=(\b_{\rm sp}-\a_{\rm sp})\text{ and }(P_1,Q_1)=(-3,1)\;,
    \\&\mathbf{a}_{\rm d}=\mathbf{0}\;,\quad \mathbf{b}_{\rm d}=\frac{1}{2}
\end{split}
\end{align}
from the fact that $\b_{\rm sp}$ and $\b_{\rm sp}-\a_{\rm sp}$ are also non-closable boundary 1-cycles. For $\vec{\gamma}''$, we find the two dimensional vector space $\mathfrak{F}^{\rm ref}$ which is a direct sum of one dimensional $\mathfrak{F}^{\rm dehn}$ and $\mathfrak{F}^{\rm hard}_{\rm d}$, like \eqref{m003(2,-3) vector space}. Then again, we can confirm that the one dimensional $\mathfrak{F}^{\rm ref}_{\rm dec}=\mathrm{Span}\{2T_{h_1}-T_{a}\}$ decoupled, from the index expression
\begin{align}
\begin{split}
    &\CI_{M;\textrm{ref}}[\mathbf{v}=W_1T_{h_1}+VT_a;(\CD',\CT',\vec{\gamma}'')]
    \\&=1-q-2q^2+(-2+\eta^{2W_1+4V})q^3+(-2+\eta^{2W_1+4V})q^4+\eta^{2W_1+4V}q^5+\dots
\end{split}
\end{align}
so
\begin{align}
    \mathfrak{F}^{\rm ref}_{\rm IR}[M;(\CD',\CT',\vec{\gamma}'')]=\mathfrak{F}^{\rm ref}/\mathfrak{F}^{\rm ref}_{\rm dec}=\mathrm{Span}\{T_{h_1}\}\;,\quad \dim\mathfrak{F}^{\rm ref}_{\rm IR}=1
\end{align}
where we choose the representative $T_{h_1}$ among $T_{h_1}+\a(2T_{h_1}-T_a)$. Since
\begin{align}
    \dim\mathfrak{F}^{\rm ref}_{\rm IR}[M;(\CD',\CT',\vec{\gamma}'')]=\dim\mathfrak{F}^{\rm ref}_{\rm IR}[M;(\CD_0,\CT_0,\vec{\gamma}_0)]\;,
\end{align}
the injective linear map in \eqref{main conjecture} will be a bijective for this case. We can check it from the following identification
\begin{align}
    \CI_{M;\textrm{ref}}[\mathbf{v}=W_2T_{h_2};(\CD_0,\CT_0,\vec{\gamma}_0)]=\CI_{M;\textrm{ref}}[\mathbf{v}=-2W_2T_{h_1};(\CD',\CT',\vec{\gamma}'')]\;.
\end{align}
For $\vec{\gamma}'''$, we get the refined 3D index
\begin{align}
    \begin{split}
        &\CI_{M;\textrm{ref}}[\mathbf{v}=W_1T_{h_1};(\CD',\CT',\vec{\gamma}''')]
        \\&=1-q-2q^2+(-2+\eta^{2W_1})q^3+(-2+\eta^{2W_1})q^4+\eta^{2W_1}q^5+(1+\eta^{2W_1})q^6+\dots
    \end{split}
\end{align}
with one dimensional vector space $\mathfrak{F}^{\rm ref}=\mathfrak{F}^{\rm hard}_{\rm d}$ since $(P_1,Q_1)=(-3,1)=(\;\cdot\;,\pm 1)$ so $\mathfrak{F}^{\rm dehn}=\{\mathbf{0}\}$. From the index expression, we can confirm that $\mathfrak{F}^{\rm ref}_{\rm dec}=\{\mathbf{0}\}$ and thus
\begin{align}
    \mathfrak{F}^{\rm ref}_{\rm IR}[M;(\CD',\CT',\vec{\gamma}''')]=\mathfrak{F}^{\rm ref}=\mathrm{Span}\{T_{h_1}\}\;,\quad \dim \mathfrak{F}^{\rm ref}_{\rm IR}=1\;.
\end{align}
Since
\begin{align}
    \dim\mathfrak{F}^{\rm ref}_{\rm IR}[M;(\CD',\CT',\vec{\gamma}''')]=\dim\mathfrak{F}^{\rm ref}_{\rm IR}[M;(\CD_0,\CT_0,\vec{\gamma}_0)]\;,
\end{align}
the injective linear map in \eqref{main conjecture} will be a bijective for this case. We can check it from the following identification
\begin{align}
    \CI_{M;\textrm{ref}}[\mathbf{v}=W_2T_{h_2};(\CD_0,\CT_0,\vec{\gamma}_0)]=\CI_{M;\textrm{ref}}[\mathbf{v}=-2W_2T_{h_1};(\CD',\CT',\vec{\gamma}''')]\;.
\end{align}

\subsection{Closed hyperbolic manifold with ${\rm vol}\simeq 2.06567$, $(S^3\backslash\mathbf{5}_2)_{[3\mu+2\lambda]}$}
We consider the following two Dehn surgery representations of the closed hyperbolic manifold $M=(S^3\backslash\mathbf{5}_2)_{[3\mu+2\lambda]}$:
\begin{align}
\begin{split}
    \CD_0\;&:\;N=S^3\backslash\mathbf{5}_2\;,\quad \{P_I\g_I+Q_I\d_I\}^{d=1}_{I=1}=\{3\mu+2\lambda\}\;,
    \\ \CD\;&:\;N=m007\;,\quad \{P_I\g_I+Q_I\d_I\}^{d=1}_{I=1}=\{-5\a_{\rm sp}+2\b_{\rm sp}\}\;.
\end{split}
\end{align}
Here $(\mu,\lambda)$ is the meridian-longitude pair associated with the boundary components of the knot complement $S^3\backslash\mathbf{5}_2$. $(\a_{\rm sp},\b_{\rm sp})$ is the basis of $H_1(\partial(m007);\mathbb{Z})$ chosen by \texttt{SnapPy}. See \eqref{gluing eqs for 5_2} for the gluing equations associated with the ideal triangulation $\CT_0$ of $S^3\backslash\mathbf{5}_2$. The gluing equations for the ideal triangulation $\CT$ of $m007$ are
\begin{align}
\label{Gluing equation : m007}
\begin{split}
m007 \;:\; & C_1 = Z_1+2Z_2+Z_3\;, \quad C_2 = 2Z_1'+Z_1''+Z_2'+2Z_2''+2Z_3'+Z_3''\;,
\\&C_3=Z_1+Z_1''+Z_2'+Z_3+Z_3''\;,
\\
& hol(\alpha_{\rm sp}) = Z_1'-Z_2+Z_3''\;, \quad hol(\beta_{\rm sp}) = 2Z_1+Z_1''-Z_2'-Z_2''-Z_3-Z_3''\;,
\end{split}
\end{align}
provided by \texttt{SnapPy}.
\paragraph{The refined index $\CI_{M;\rm ref}[\mathbf{v};(\CD_0,\CT_0,\vec{\gamma}_0)]$} From the following choices
\begin{align}
\begin{split}
    &\{H_1,H_2\}=\{C_1,C_2\}\;,
    \\&\{\gamma_1,\delta_1\}=\{\mu,\lambda\}\rightarrow \vec{\gamma}_0=(\mu)\text{ and }(P_1,Q_1)=(3,2)\;,
    \\&\mathbf{a}_{\rm d}=\mathbf{b}_{\rm d}=\mathbf{0}
\end{split}
\end{align}
where $\mu$ is non-closable boundary 1-cycle, we get the refined 3D index
\begin{align}
\begin{split}
    &\CI_{M;\rm ref}[\mathbf{v}=W_2T_{h_2}+VT_a;(\CD_0,\CT_0,\vec{\gamma}_0)]
    \\&=\sum_{m\in\mathbb{Z},e\in\mathbb{Z}/2}\CK_{\rm ref}(3,2;m,e;\eta^V)\CI_{N=S^3\backslash\mathbf{5}_2;\rm ref}(\mathbf{v}=W_2T_{h_2};m,e)
    \\&=1+(-2-\eta^{4V}-\eta^{-2W_2+2V})q+(-2+\eta^{-2V}-\eta^{4V}+\eta^{2W_2}+\eta^{4V+2W_2}-\eta^{-2W_2}
    \\&\qquad -\eta^{2V-2W_2}-\eta^{4V-2W_2}-\eta^{2V-4W_2})q^2+(4+2\eta^{2V}+\eta^{8V}+\eta^{2V+2W_2}+\eta^{4V+2W_2}
    \\&\qquad +\eta^{6V+2W_2}+2\eta^{-2W_2-2V}-\eta^{2V-2W_2}-\eta^{4V-2W_2}-\eta^{2V-4W_2})q^3+\dots
\end{split}
\end{align}
with the two dimensional vector space
\begin{align}
    \begin{split}
        &\mathfrak{F}^{\rm hard}_{\rm d}[M;(\CD_0,\cT_0)]=\mathrm{Span}\{T_{h_2}\}\;,\quad \mathfrak{F}^{\rm dehn}[M;(\CD_0,\vec{\gamma}_0)]=\mathrm{Span}\{T_a\}
        \\&\rightarrow \mathfrak{F}^{\rm ref}[M;(\CD_0,\CT_0,\vec{\gamma}_0)]=\mathfrak{F}^{\rm dehn}\oplus \mathfrak{F}^{\rm hard}_{\rm d}=\mathrm{Span}\{T_a,T_{h_2}\}\;.
    \end{split}
\end{align}
From the index expression, we can confirm that $\mathfrak{F}^{\rm ref}_{\rm dec}[M;(\CD_0,\CT_0,\vec{\gamma}_0)]=\{\mathbf{0}\}$ and thus
\begin{align}
    \mathfrak{F}^{\rm ref}_{\rm IR}[M;(\CD_0,\CT_0,\vec{\gamma}_0)]=\mathfrak{F}^{\rm ref}\;,\quad \dim \mathfrak{F}^{\rm ref}_{\rm IR}=2\;.
\end{align}
We expect $r_{\rm ref}[(S^3\backslash\mathbf{5}_2)_{[3\mu+2\lambda]}]=2$, so the maximally refined index for $M=(S^3\backslash\mathbf{5}_2)_{[3\mu+2\lambda]}$ is $\CI_{M;\rm ref}[\mathbf{v}=W_2T_{h_2}+VT_a;(\CD_0,\CT_0,\vec{\gamma}_0)]$.

\paragraph{The refined index $\CI_{M;\rm ref}[\mathbf{v};(\CD,\CT,\vec{\gamma})]$} From the following choices
\begin{align}
\begin{split}
    &\{H_1,E_1\}=\{C_3,C_1\}\;,
    \\&\{\gamma_1,\delta_1\}=\{\a_{\rm sp},\b_{\rm sp}\}\rightarrow \vec{\gamma}=(\a_{\rm sp})\text{ and }(P_1,Q_1)=(-5,2)\;,
    \\&\mathbf{a}_{\rm d}=\mathbf{b}_{\rm d}=\mathbf{0}
\end{split}
\end{align}
where $\a_{\rm sp}$ is non-closable boundary 1-cycle, we can find the two dimensional vector space
\begin{align}
\begin{split}
    &\mathfrak{F}^{\rm hard}_{\rm d}[M;(\CD,\CT)]=\mathrm{Span}\{T_{h_1}\}\;,\quad \mathfrak{F}^{\rm dehn}[M;(\CD,\vec{\gamma})]=\mathrm{Span}\{T_a\}
    \\&\rightarrow \mathfrak{F}^{\rm ref}[M;(\CD,\CT,\vec{\gamma})]=\mathfrak{F}^{\rm dehn}\oplus\mathfrak{F}^{\rm hard}_{\rm d}=\mathrm{Span}\{T_{a},T_{h_1}\}
\end{split}
\end{align}
and compute the refined 3D index
\begin{align} \label{Refined index : m007}
    \begin{split}
        &\CI_{M;\rm ref}[\mathbf{v}=W_1T_{h_1}+VT_a;(\CD,\CT,\vec{\gamma})]
        \\&=\sum_{m\in\mathbb{Z},e\in\mathbb{Z}/2}\CK_{\rm ref}(-5,2;m,e;\eta^V)\CI_{N=m007;\rm ref}(\mathbf{v}=W_1T_{h_1};m,e)\;.
    \end{split}
\end{align}
We checked $\mathfrak{F}^{\rm ref}_{\rm dec}[M;(\CD,\CT,\vec{\gamma})]=\{\mathbf{0}\}$ from the index expression and thus
\begin{align}
    \mathfrak{F}^{\rm ref}_{\rm IR}[M;(\CD,\CT,\vec{\gamma})]=\mathfrak{F}^{\rm ref}\;,\quad \dim \mathfrak{F}^{\rm ref}_{\rm IR}=2\;.
\end{align}
Since
\begin{align}
    \dim\mathfrak{F}^{\rm ref}_{\rm IR}[M;(\CD,\CT,\vec{\gamma})]=\dim\mathfrak{F}^{\rm ref}_{\rm IR}[M;(\CD_0,\CT_0,\vec{\gamma}_0)]\;,
\end{align}
the injective linear map in \eqref{main conjecture} will be a bijective for this case. We also checked it from the following identification
\begin{align}
    \CI_{M;\rm ref}[\mathbf{v}=W_2T_{h_2}+VT_a;(\CD_0,\CT_0,\vec{\gamma}_0)]=\CI_{M;\rm ref}[\mathbf{v}=-VT_{h_1}-W_2T_a;(\CD,\CT,\vec{\gamma})]\;.
\end{align}

\subsection{Other hyperbolic manifolds} \label{sec : example - other hyperbolic Ms}
We have tested the conjecture \eqref{main conjecture} on additional examples of closed hyperbolic $3$-manifolds using the app \textsc{Refined Index Calculator}, which will be introduced in Appendix~\ref{sec : Refined Index Calculator}. A list of examples is presented below.
\begin{align}
\begin{split}
&(m003)_{[-3 \alpha_{\rm sp} +\beta_{\rm sp}]} = (m006)_{[2\alpha_{\rm sp} +\beta_{\rm sp}]} = (m029)_{[-2 \alpha_{\rm sp} + \beta_{\rm sp}]}, \quad \textrm{vol} \simeq 0.9427073628 , \quad r_{\rm ref} =1,
\\
&(m004)_{[5 \alpha_{\rm sp} +\beta_{\rm sp}]} = (m011)_{[\alpha_{\rm sp} +2\beta_{\rm sp}]} = (m019)_{[2 \alpha_{\rm sp} + \beta_{\rm sp}]}, \quad \textrm{vol} \simeq 0.9813688289 , \quad r_{\rm ref} =1,
\\
&(m007)_{[3\alpha_{\rm sp} +\beta_{\rm sp}]} =  (m010)_{[- \alpha_{\rm sp} +2\beta_{\rm sp}]} , \quad \textrm{vol} \simeq 1.01494160641 , \quad r_{\rm ref} =1,
\\
&(m003)_{[\alpha_{\rm sp} +3\beta_{\rm sp}]} =  (m006)_{[- \alpha_{\rm sp} +2\beta_{\rm sp}]} = (m043)_{[- 2\alpha_{\rm sp} +\beta_{\rm sp}]} , \quad \textrm{vol} \simeq 1.26370923866 , \quad r_{\rm ref} =1,
\\
&(m004)_{[\alpha_{\rm sp} +2\beta_{\rm sp}]} =  (m015)_{[- 3\alpha_{\rm sp} +\beta_{\rm sp}]} = (m032)_{[3\alpha_{\rm sp} +\beta_{\rm sp}]} , \quad \textrm{vol} \simeq 1.3985088842 , \quad r_{\rm ref} =1,
\\
&(m009)_{[4\alpha_{\rm sp} +\beta_{\rm sp}]} =  (m015)_{[ 4\alpha_{\rm sp} +\beta_{\rm sp}]} = (m023)_{[-3\alpha_{\rm sp} +\beta_{\rm sp}]} , \quad \textrm{vol} \simeq 1.41406104417, \quad r_{\rm ref} =1,
\\
&(m003)_{[-\alpha_{\rm sp} +4\beta_{\rm sp}]} =  (m044)_{ [-2\alpha_{\rm sp} +\beta_{\rm sp}]} = (m045)_{[3\alpha_{\rm sp} +\beta_{\rm sp}]} , \quad \textrm{vol} \simeq 1.41406104417, \quad r_{\rm ref} =1,
\\
&(m003)_{[3\alpha_{\rm sp} +\beta_{\rm sp}]} =  (m011)_{[ -2\alpha_{\rm sp} +\beta_{\rm sp}]} = (m026)_{[2\alpha_{\rm sp} +\beta_{\rm sp}]} , \quad \textrm{vol} \simeq 1.42361190029, \quad r_{\rm ref} =1,
\\
&(m004)_{[3\alpha_{\rm sp} +2\beta_{\rm sp}]} =  (m007)_{[ -4\alpha_{\rm sp} +\beta_{\rm sp}]} = (m034)_{[3\alpha_{\rm sp} +\beta_{\rm sp}]} , \quad \textrm{vol} \simeq 1.44069900673, \quad r_{\rm ref} =1,
\\
&(m004)_{[7\alpha_{\rm sp} +\beta_{\rm sp}]} =  (m022)_{ [3\alpha_{\rm sp} +\beta_{\rm sp}]}, \quad \textrm{vol} \simeq 1.4637766449, \quad r_{\rm ref} =0,
\\
&(m003)_{[3\alpha_{\rm sp} +2\beta_{\rm sp}] }=(m019)_{[\alpha_{\rm sp} -2\beta_{\rm sp}]}  = (m037)_{[3\alpha_{\rm sp} +\beta_{\rm sp}]} ,  \quad \textrm{vol} \simeq 1.58316666062,  \quad r_{\rm ref} =2.
\end{split}
\end{align}
For the \(3\)-manifold with \({\rm vol}\simeq 1.4637766449\), one can use the Dehn surgery representation 
\((m022)_{[3\alpha_{\rm sp} + \beta_{\rm sp}]}\) as described above. 
The cusped \(3\)-manifold \(m022\) is a rare example that admits marginal non-closable cycles defined in \eqref{Def : marginal NC}. 
It has two non-closable cycles, \(\alpha_{\rm sp}\) and \(\alpha_{\rm sp}+\beta_{\rm sp}\), 
of which the latter is marginal. 
Choosing \(\{\gamma,\delta\} = \{\alpha_{\rm sp}+\beta_{\rm sp}, \beta_{\rm sp}\}\), 
the Dehn filling becomes non-integral, since
\begin{align}
3\alpha_{\rm sp} + \beta_{\rm sp} = 3\gamma - 2\delta \;.
\end{align}
Despite the non-integral Dehn filling, no additional flavor symmetry arises since \(\gamma\) is marginal, see \eqref{F-ref[M]}, 
and hence \(r_{\rm ref}=0\). This is also consistent with an alternative Dehn surgery representation, $(m004)_{[7 \alpha_{\rm sp} + \beta_{\rm sp}]} 
= (S^3\backslash \mathbf{4}_1)_{[7 \mu + \lambda]}$. 
Treating $p=7$ as sufficiently large, this is compatible with the expectation in \eqref{r-ref for 4-1-p and 5-2-p}.

\subsection{Some non-hyperbolic manifolds}
The 3D index for non-hyperbolic 3-manifolds exhibits qualitatively different behavior from the hyperbolic case \cite{Gang:2018gyt}. In this subsection, we present the fully refined 3D index for several classes of non-hyperbolic 3-manifolds. For closed non-hyperbolic $3$-manifolds obtained via exceptional Dehn fillings on hyperbolic one-cusped $3$-manifolds, we use the database of \cite{dunfield2019censusexceptionaldehnfillings}, available at \cite{DVN/6WNVG0_2018}.

\subsubsection{$\Sigma_{0,3}\times S^1$}
Many non-hyperbolic closed $3$-manifolds can be obtained by Dehn filling on $\Sigma_{0,3}\times S^1$, where $\Sigma_{0,3}$ denotes the $3$-punctured sphere. We choose $\vec{\gamma} = (\g_1 , \g_2, \g_3)$ to be the boundary $1$-cycles enclosing the three punctures, and $\vec{\d} = (\d_1, \d_2, \d_3)$ to be the $1$-cycles along the $S^1$ direction. Note that $\g_i$ are non-closable, since $\Sigma_{0,2}\times S^1$
does not admit any (adjoint-)irreducible $SL(2,\mathbb{C})$ flat
connections. We claim that $r_{\rm ref}[\Sigma_{0,3}\times S^1]=0$ and the refined 3D index for
$N=\Sigma_{0,3}\times S^1$ is given by
\begin{align}
\CI_{N;{\rm ref}}
( \mathbf{v} = \mathbf{0} ; m_1, m_2, m_3, e_1, e_2, e_3)
=
\begin{cases}
\delta_{e_1,0}\delta_{e_2,0}\delta_{e_3,0} & m_1+m_2+m_3 \in 2\mathbb{Z}\;,\\
0 & \text{otherwise}\;.
\end{cases}
\end{align}
We do not have a direct computation, such as one based on an ideal triangulation of the $3$-manifold $N$, to justify this claim. Nevertheless, as we will see, it passes nontrivial consistency checks by comparing the refined 3D index of the Dehn-filled manifolds predicted by this claim with those obtained from alternative Dehn surgery representations.

One can check that 
\begin{align} 
\label{Refined kernel summation property} \sum_{m\in 2\mathbb{Z},2\mathbb{Z}+1}\CK_{\rm ref}(P,Q+\alpha P;m,e=0;\eta)\,,\quad P,Q\in\mathbb{Z}\;,\quad |P|\ge2 
\end{align} 
is independent of the choice of $\alpha \in \mathbb{Z}$. It will be frequently used in the subsequent examples when performing Dehn filling operations on $\Sigma_{0,3}\times S^1$.
Here are some examples of $\sum_{m}\CK_{\rm ref}$ for $Q\neq \pm1 \mod P$:
\begin{align}
\label{Magnetic sum of refined kernel}
    \begin{split}
        (P,Q
        )=(5,2)\;:\;&1+(-1+\eta^2)q+(-2+\frac{1}{\eta^2}+\eta^2)q^2+(-2+\frac{1}{\eta^2}+\eta^2)q^3+(-2+\frac{1}{\eta^2}
        \\&+\eta^4)q^4 +(-\eta^2+\eta^4)q^5+(1+\frac{1}{\eta^4}-\frac{1}{\eta^2}-3\eta^2+2\eta^4)q^6+\dots\;\text{for }m\in 2\mathbb{Z}\;,
        \\&0\quad\text{for }m\in 2\mathbb{Z}+1\;,
        \\(P,Q)=(5,3)\;:\;&\text{Same with }(P,Q)=(5,2)\;,
        \\(P,Q)=(7,2)\;:\;&1+(-1+\eta^2)q+(-2+\frac{1}{\eta^2}+\eta^4)q^2+(-1+\frac{1}{\eta^2}-\eta^2+\eta^4)q^3+(\frac{1}{\eta^2}
        \\&-3\eta^2+2\eta^4)q^4+(4-\frac{1}{\eta^2}-5\eta^2+\eta^4+\eta^6)q^5+\dots\;\text{for }m\in2\mathbb{Z}\;,
        \\&0\quad\text{for }m\in 2\mathbb{Z}+1\;,
        \\(P,Q)=(7,3)\;:\;&1+(-1+\eta^2)q+(-2+\frac{1}{\eta^2}+\eta^2)q^2+(-1+\eta^2)q^3+(\frac{1}{\eta^4}-\frac{1}{\eta^2}-\eta^2
        \\&+\eta^4)q^4+(4+\frac{1}{\eta^4}-\frac{3}{\eta^2}-3\eta^2+\eta^4)q^5+\dots\;\text{for }m\in2\mathbb{Z}\;,
        \\&0\quad\text{for }m\in2\mathbb{Z}+1\;,
        \\(P,Q)=(7,4)\;:\;&\text{Same with }(P,Q)=(7,3)\;,
        \\(P,Q)=(7,5)\;:\;&\text{Same with }(P,Q)=(7,2)\;,
        \\(P,Q)=(8,3)\;:\;&1+(-1+\eta^2)q+(-2+\frac{1}{\eta^2}+\eta^4)q^2+(1-2\eta^2+\eta^4)q^3+(3+\frac{1}{\eta^4}-\frac{2}{\eta^2}
        \\&-5\eta^2+3\eta^4)q^4+(10+\frac{1}{\eta^4}-\frac{5}{\eta^2}-8\eta^2+\eta^4+\eta^6)q^5+\dots\;\text{for }m\in2\mathbb{Z}\;,\quad
        \\&(-1+\eta^2)q^{1/2}+(-2+\frac{1}{\eta^2}+\eta^2)q^{3/2}+(-1+\frac{1}{\eta^2}-\eta^2+\eta^4)q^{5/2}+(2
        \\&-\frac{1}{\eta^2}-3\eta^2+2\eta^4)q^{7/2}+\dots\;\text{for }m\in2\mathbb{Z}+1\;,
        \\(P,Q)=(8,5)\;:\;&\text{Same with }(P,Q)=(8,3)\;.
    \end{split}
\end{align}
When $Q=\pm 1\mod P$, the result is identified with the one evaluated by the unrefined Kernel \eqref{unrefined Dehn filling kernel}
\begin{align}
    \sum_{m\in2\mathbb{Z},2\mathbb{Z}+1}\CK(P,\pm 1;m,e=0)=\begin{cases}
        1&(|P|=2)
        \\\begin{cases}
            1 & m\in 2\mathbb{Z}
            \\0 & m\in 2\mathbb{Z}+1
        \end{cases}&(|P|\ge 3)
    \end{cases}
\end{align}
since there exists a $\alpha_*$ which makes $Q+\alpha_* P=\pm1$. For other choices of $\alpha$, the flavor symmetry generator $T_a$ associated with the Dehn filling decouples in the IR.

\subsubsection{Seifert fibered spaces $M=S^2\bigl((P_1,Q_1),(P_2,Q_2),(P_3,Q_3)\bigr)$}

The Seifert fibered spaces can be represented as
\begin{align}
    \mathcal{D}_0\;:\;
    S^2\!\bigl((P_1,Q_1),(P_2,Q_2),(P_3,Q_3)\bigr)
    \;=\;
    \bigl(\Sigma_{0,3}\times S^1\bigr)_{[P_i\gamma_i+Q_i\delta_i]_{i=1}^3}\;.
\end{align}
The corresponding 3D gauge theory has been studied in \cite{Cho:2020ljj,Choi:2022dju,Gang:2024tlp,Baek2025abc}, whose results suggest that
\begin{align}
\label{Number of refinement, SFS}
\begin{split}
r_{\rm ref}[M]
&=
\sharp(P_1,Q_1)
+\sharp(P_2,Q_2)
+\sharp(P_3,Q_3)\;,
\\
\sharp(P,Q)
&:=
\begin{cases}
0\;, & Q \equiv \pm 1 \pmod{P}\;,\\
1\;, & \text{otherwise}\;.
\end{cases}
\end{split}
\end{align}
In the IR, the gauge theory flows to a product of $r_{\rm ref}$ rank-0 SCFTs, and each factor gives a refinement associated with the $U(1)_A \subset SO(4)_R$ symmetry.

Let the Cartan subalgebra of the refined flavor symmetry be
\begin{align}
\mathfrak{F}^{\rm ref}[M]
=
\mathrm{Span}\!\left\{
T_{a_i}
:\;
i=1,2,3
\ \text{with } \sharp(P_i,Q_i)=1
\right\}\;.
\end{align}
Then, from the Dehn surgery representation $\mathcal{D}_0$, we obtain the fully refined 3D index
\begin{align}
\begin{split}
\label{Refined index : General SFS}
\CI_{M;{\rm ref}}\Bigl[
\mathbf{v}
=
\sum_{i=1}^{3}
\sharp(P_i,Q_i)\, V_i\, T_{a_i}
\;;\;\mathcal{D}_0\Bigr]
=
\sum_{\substack{m_i \in \mathbb{Z} \\ m_1+m_2+m_3 \in 2\mathbb{Z}}}
\prod_{i=1}^{3}
\CK_{\rm ref}(P_i,Q_i; m_i, e_i=0; \eta^{V_i})\;.
\end{split}
\end{align}

\paragraph{$\boldsymbol{M=S^2\bigr((3,1),(3,2),(3,-1)\bigr)}$} We consider the following Dehn surgery representation
\begin{align}
    \CD\;:\;m036\;,\quad \{P_I\gamma_I+Q_I\delta_I\}^{d=1}_{I=1}=\{0\a_{\rm sp}+\b_{\rm sp}\}\;.
\end{align}
Here $(\a_{\rm sp},\b_{\rm sp})$ is a basis of $H_1(\partial(m036);\mathbb{Z})$, as chosen by \texttt{SnapPy}. We use the ideal triangulation $\CT$ of $m036$, as provided by \texttt{SnapPy}, whose gluing equations are
\begin{align}
\label{Gluing equation : m036}
\begin{split}
m036 \;:\;&C_1=2Z_1+Z_1''+Z_2+2Z_2'+Z_3+Z_3'+Z_4+Z_4'\;,\quad C_2=Z_1'+Z_2''+Z_3+Z_4\;,
\\&C_3=Z_1'+Z_1''+Z_3'+Z_3''+Z_4''\;,\quad C_4=Z_2+Z_2''+Z_3''+Z_4'+Z_4''\;,
\\&hol(\alpha_\mathrm{sp})=Z_1''+2Z_2'-Z_3''\;,\quad hol(\beta_\mathrm{sp})=-2Z_1+2Z_2'\;.
\end{split}
\end{align}
From the following choices
\begin{align}
    \begin{split}
        &\{H_1,H_2,E_1\}=\{C_3,C_4,C_2\}\;,
        \\&\{\gamma_1,\delta_1\}=\{\a_{\rm sp},\b_{\rm sp}\}\rightarrow \vec{\gamma}=(\a_{\rm sp})\text{ and }(P_1,Q_1)=(0,1)\;,
        \\&\mathbf{a}_{\rm d}=(0,-1)\;,\quad \mathbf{b}_{\rm d}=(-\frac{1}{2},\frac{1}{2})
    \end{split}
\end{align}
where $\a_{\rm sp}$ is non-closable boundary 1-cycle, we get the refined 3D index
\begin{align} \label{Refined index : m036}
\begin{split}
    &\CI_{M;\rm ref}[\mathbf{v}=W_1T_{h_1}+W_2T_{h_2};(\CD,\CT,\vec{\gamma})]
    \\&=\sum_{m\in\mathbb{Z},e\in\mathbb{Z}/2}\CK(0,1;m,e)\CI_{N=m036;\rm ref}(\mathbf{v}=W_1T_{h_1}+W_2T_{h_2};m,e;\mathbf{a}_{\rm d},\mathbf{b}_{\rm d})
    \\&=1
\end{split}
\end{align}
with the two dimensional vector space
\begin{align}
    \begin{split}
        &\mathfrak{F}^{\rm hard}_{\rm d}[M;(\CD,\CT)]=\mathrm{Span}\{T_{h_1},T_{h_2}\}\;,\quad \mathfrak{F}^{\rm dehn}[M;(\CD,\vec{\gamma})]=\{\mathbf{0}\}
        \\&\rightarrow \mathfrak{F}^{\rm ref}[M;(\CD,\CT,\vec{\gamma})]=\mathfrak{F}^{\rm dehn}\oplus \mathfrak{F}^{\rm hard}_{\rm d}=\mathrm{Span}\{T_{h_1},T_{h_2}\}\;.
    \end{split}
\end{align}
By comparison with $\CI_{M}[\CD_0]=1$, we can confirm that the conjecture \eqref{main conjecture} holds in a non-trivial way, as all elements in the vector space $\mathfrak{F}^{\rm ref}$ are decoupled. We can also find the trivial bijective map : $\mathbf{0}\leftrightarrow \mathbf{0}$.

\paragraph{$\boldsymbol{M=S^2\bigr((2,1),(3,1),(9,-7)\bigr)}$} We consider the following Dehn surgery representation
\begin{align}
    \CD\;:\;m007\;,\quad \{P_I\gamma_I+Q_I\delta_I\}_{I=1}^{d=1}=\{-2\a_{\rm sp}+\b_{\rm sp}\}\;.
\end{align}
Here $(\a_{\rm sp},\b_{\rm sp})$ is a basis of $H_1(\partial(m007);\mathbb{Z})$, as chosen by \texttt{SnapPy}. See \eqref{Gluing equation : m007} for the gluing equations associated with the ideal triangulation $\CT$ of $m007$. From the following choices
\begin{align}
\begin{split}
    &\{H_1,E_1\}=\{C_3,C_1\}\;,
    \\&\{\gamma_1,\delta_1\}=\{\a_{\rm sp},\b_{\rm sp}\}\rightarrow \vec{\gamma}=(\a_{\rm sp})\text{ and }(P_1,Q_1)=(-2,1)\;,
    \\&\mathbf{a}_{\rm d}=\mathbf{b}_{\rm d}=\mathbf{0}
\end{split}
\end{align}
where $\a_{\rm sp}$ is non-closable boundary 1-cycle, we get the refined 3D index
\begin{align}
\begin{split}
    &\CI_{M;\rm ref}[\mathbf{v}=W_1T_{h_1};(\CD,\CT,\vec{\gamma})]
    \\&=\sum_{m\in\mathbb{Z},e\in\mathbb{Z}/2}\CK(-2,1;m,e)\CI_{N=m007;\rm ref}(\mathbf{v}=W_1T_{h_1};m,e)
    \\&=1+(-1+\eta^{-2W_1})q+(-2+\eta^{-4W_1}+\eta^{2W_1})q^2+(-1+\eta^{-6W_1}-\eta^{-2W_1}+\eta^{2W_1})q^3
    \\&\qquad +(\eta^{-6W_1}-2\eta^{-2W_1}+\eta^{2W_1})q^4+(4+2\eta^{-6W_1}-2\eta^{-4W_1}-3\eta^{-2W_1}-\eta^{2W_1})q^5
    \\&\qquad +(5+\eta^{-8W_1}+2\eta^{-6W_1}-4\eta^{-4W_1}-2\eta^{-2W_1}-3\eta^{2W_1}+\eta^{4W_1})q^6+\dots
\end{split}
\end{align}
with the one dimensional vector space
\begin{align}
\begin{split}
    &\mathfrak{F}^{\rm hard}_{\rm d}[M;(\CD,\CT)]=\mathrm{Span}\{T_{h_1}\}\;,\quad \mathfrak{F}^{\rm dehn}[M;(\CD,\vec{\gamma})]=\{\mathbf{0}\}
    \\&\rightarrow \mathfrak{F}^{\rm ref}[M;(\CD,\CT,\vec{\gamma})]=\mathfrak{F}^{\rm dehn}\oplus\mathfrak{F}^{\rm hard}_{\rm d}=\mathrm{Span}\{T_{h_1}\}\;.
\end{split}
\end{align}
From the index expression, we can confirm that $\mathfrak{F}^{\rm ref}_{\rm dec}[M;(\CD,\CT,\vec{\gamma})]=\{\mathbf{0}\}$ and thus
\begin{align}
    \mathfrak{F}^{\rm ref}_{\rm IR}[M;(\CD,\CT,\vec{\gamma})]=\mathfrak{F}^{\rm ref}\;,\quad \dim\mathfrak{F}^{\rm ref}_{\rm IR}=1\;.
\end{align}
Since
\begin{align}
    \mathfrak{F}^{\rm ref}_{\rm IR}[M;(\CD,\CT,\vec{\gamma})]=\mathfrak{F}^{\rm ref}_{\rm IR}[M;\CD_0]\;,
\end{align}
the injective linear map in \eqref{main conjecture} will be a bijective for this case. We checked the following identification
\begin{align}
    \CI_{M;\rm ref}[\mathbf{v}=V_3T_{a_3};\CD_0]=\CI_{M;\rm ref}[\mathbf{v}=-V_3T_{h_1};(\CD,\CT,\vec{\gamma})]\;.
\end{align}

\paragraph{$\boldsymbol{M=S^2\bigr((2,1),(5,2),(5,-3)\bigr)}$} We consider the following Dehn surgery representation
\begin{align}
    \CD\;:\;N=m036\;,\quad \{P_I\gamma_I+Q_I\delta_I\}^{d=1}_{I=1}=\{\a_{\rm sp}+\b_{\rm sp}\}\;.
\end{align}
Here $(\a_{\rm sp},\b_{\rm sp})$ is a basis of $H_1(\partial(m036);\mathbb{Z})$, as chosen by \texttt{SnapPy}. See \eqref{Gluing equation : m036} for the gluing equations associated with the ideal triangulation $\CT$ of $m036$. From the following choices
\begin{align}
    \begin{split}
        &\{H_1,H_2,E_1\}=\{C_3,C_4,C_2\}\;,
        \\&\{\gamma_1,\delta_1\}=\{\a_{\rm sp},\b_{\rm sp}\}\rightarrow \vec{\gamma}=(\a_{\rm sp})\text{ and }(P_1,Q_1)=(1,1)\;,
        \\&\mathbf{a}_{\rm d}=(0,-1)\;,\quad \mathbf{b}_{\rm d}=(-\frac{1}{2},\frac{1}{2})
    \end{split}
\end{align}
where $\a_{\rm sp}$ is non-closable boundary 1-cycle, we can find the two dimensional vector space
\begin{align}
    \begin{split}
        &\mathfrak{F}^{\rm hard}_{\rm d}[M;(\CD,\CT)]=\mathrm{Span}\{T_{h_1},T_{h_2}\}\;,\quad \mathfrak{F}^{\rm dehn}[M;(\CD,\vec{\gamma})]=\{\mathbf{0}\}
        \\&\rightarrow \mathfrak{F}^{\rm ref}[M;(\CD,\CT,\vec{\gamma})]=\mathfrak{F}^{\rm dehn}\oplus\mathfrak{F}^{\rm hard}_{\rm d}=\mathrm{Span}\{T_{h_1},T_{h_2}\}
    \end{split}
\end{align}
and compute the refined 3D index
\begin{align}
    \begin{split}
        &\CI_{M;\rm ref}[\mathbf{v}=W_1T_{h_1}+W_2T_{h_2};(\CD,\CT,\vec{\gamma})]
        \\&=\sum_{m\in\mathbb{Z},e\in\mathbb{Z}/2}\CK(1,1;m,e)\CI_{N=m036;\rm ref}(\mathbf{v}=W_1T_{h_1}+W_2T_{h_2};m,e)\;.
    \end{split}
\end{align}
We checked $\mathfrak{F}^{\rm ref}_{\rm dec}[M;(\CD,\CT,\vec{\gamma})]=\{\mathbf{0}\}$ from the index expression and thus
\begin{align}
    \mathfrak{F}^{\rm ref}_{\rm IR}[M;(\CD,\CT,\vec{\gamma})]=\mathfrak{F}^{\rm ref}\;,\quad \dim \mathfrak{F}^{\rm ref}_{\rm IR}=2\;.
\end{align}
Since
\begin{align}
    \mathfrak{F}^{\rm ref}_{\rm IR}[M;(\CD,\CT,\vec{\gamma})]=\mathfrak{F}^{\rm ref}_{\rm IR}[M;\CD_0]\;,
\end{align}
the injective linear map in \eqref{main conjecture} will be a bijective for this case. We checked the following identification
\begin{align}
\begin{split}
    &\CI_{M;\textrm{ref}}[\mathbf{v}=V_2T_{a_2}+V_3T_{a_3};\CD_0]=\CI_{M;\rm ref}[\mathbf{v}=-V_2T_{h_1}-V_3T_{h_2};(\CD,\CT,\vec{\gamma})]
    \\&=\prod_{i=2}^3\biggr[(1+(-1+\eta^{2V_i})q+(-2+\eta^{2V_i}+\eta^{-2V_i})q^2+(-2+\eta^{2V_i}+\eta^{-2V_i})q^3
    \\&\qquad\qquad+(-2+\eta^{4V_i}+\eta^{-2V_i})q^4+(-\eta^{2V_i}+\eta^{4V_i})q^5+(1+\eta^{-4V_i}-\eta^{-2V_i}-3\eta^{2V_i}
    \\&\qquad\qquad+2\eta^{4V_i})q^6+\dots\biggr]\;. \nonumber
\end{split}
\end{align}

\paragraph{$\boldsymbol{M=S^2\bigr((5,2),(5,2),(5,-3)\bigr)}$} We consider the following Dehn surgery representation
\begin{align}
    \CD\;:\;N=v1683\;,\quad \{P_I\gamma_I+Q_I\delta_I\}_{I=1}^{d=1}=\{0\a_{\rm sp}+\b_{\rm sp}\}\;.
\end{align}
Here $(\a_{\rm sp},\b_{\rm sp})$ is a basis of $H_1(\partial(v1683);\mathbb{Z})$, as chosen by \texttt{SnapPy}. We use the ideal triangulation $\CT$ of $v1683$, as provided by \texttt{SnapPy}, whose gluing equations are
\begin{align}
\label{Gluing equation : v1683}
\begin{split}
v1683 \;:\; & C_1 = 2Z_1+Z_1''+Z_2+Z_3+Z_4+Z_7\;,
\\&C_2 = 2Z_1'+2Z_2''+2Z_3''+Z_4'+2Z_5'+2Z_6'+Z_7'\;,\quad C_3=Z_1''+Z_2'+Z_3'+Z_6''\;,
\\&C_4=Z_2'+Z_4''+Z_5''+Z_7\;,\quad C_5=Z_2+Z_3+Z_5+2Z_6+Z_6''\;,
\\&C_6=Z_3'+Z_4+Z_5''+Z_7''\;,\quad C_7=Z_4'+Z_4''+Z_5+Z_7'+Z_7''\;,
\\
& hol(\alpha_{\rm sp}) = -Z_2'+2Z_3+Z_4'+Z_5-Z_5''+Z_6-Z_6'\;,
\\&hol(\beta_{\rm sp}) = Z_2'-Z_2''+Z_3''-Z_4'-Z_5-Z_5'\;.
\end{split}
\end{align}
From the following choices
\begin{align}
    \begin{split}
        &\{H_1,H_2,E_1,E_2,E_3,E_4\}=\{C_1,C_5,C_2,C_3,C_4,C_6\}\;,
        \\&\{\gamma_1,\delta_1\}=\{\a_{\rm sp},\b_{\rm sp}\}\rightarrow \vec{\gamma}=(\a_{\rm sp})\text{ and }(P_1,Q_1)=(0,1)\;,
        \\&\mathbf{a}_{\rm d}=(0,1)\;,\quad \mathbf{b}_{\rm d}=\mathbf{0}
    \end{split}
\end{align}
where $\a_{\rm sp}$ is non-closable boundary 1-cycle, we can find the two dimensional vector space
\begin{align}
    \begin{split}
        &\mathfrak{F}^{\rm hard}_{\rm d}[M;(\CD,\CT)]=\mathrm{Span}\{T_{h_1},T_{h_2}\}\;,\quad \mathfrak{F}^{\rm dehn}[M;(\CD,\vec{\gamma})]=\{\mathbf{0}\}
        \\&\rightarrow \mathfrak{F}^{\rm ref}[M;(\CD,\CT,\vec{\gamma})]=\mathfrak{F}^{\rm dehn}\oplus \mathfrak{F}^{\rm hard}_{\rm d}=\mathrm{Span}\{T_{h_1},T_{h_2}\}
    \end{split}
\end{align}
and compute the refined 3D index
\begin{align}
    \begin{split}
        &\CI_{M;\rm ref}[\mathbf{v}=W_1T_{h_1}+W_2T_{h_2};(\CD,\CT,\vec{\gamma})]
        \\&=\sum_{m\in\mathbb{Z},e\in\mathbb{Z}/2}\CK(0,1;m,e)\CI_{N=v1683;\rm ref}(\mathbf{v}=W_1T_{h_1}+W_2T_{h_2};m,e;\mathbf{a}_{\rm d},\mathbf{b}_{\rm d})\;.
    \end{split}
\end{align}
We checked $\mathfrak{F}^{\rm ref}_{\rm dec}[M;(\CD,\CT,\vec{\gamma})]=\{\mathbf{0}\}$ from the index expression and thus
\begin{align}
    \mathfrak{F}^{\rm ref}_{\rm IR}[M;(\CD,\CT,\vec{\gamma})]=\mathfrak{F}^{\rm ref}\;,\quad \dim \mathfrak{F}^{\rm ref}_{\rm IR}=2\;.
\end{align}
Note that $\dim\mathfrak{F}^{\rm ref}_{\rm IR}[M;(\CD,\CT,\vec{\gamma})]<\dim \mathfrak{F}^{\rm ref}_{\rm IR}[M;\CD_0]=3$. Consequently, the injective linear map in \eqref{main conjecture} exists as
\begin{align}
\begin{split}
    &F\;:\;\mathfrak{F}^{\rm ref}_{\rm IR}[M;(\CD,\CT,\vec{\gamma})]\;\hookrightarrow\;\mathfrak{F}^{\rm ref}_{\rm IR}[M;\CD_0]\;,
    \\&W_1T_{h_1}+W_2T_{h_2}\mapsto (W_1+W_2)T_{a_1}+W_2T_{a_2}-W_2T_{a_3}
\end{split}
\end{align}
from the following identification
\begin{align}
    \CI_{M;\rm ref}[\mathbf{v}=(W_1+W_2)T_{a_1}+W_2T_{a_2}-W_2T_{a_3};\CD_0]=\CI_{M;\rm ref}[\mathbf{v}=W_1T_{h_1}+W_2T_{h_2};(\CD,\CT,\vec{\gamma})]\;.
\end{align}
We also checked the necessary condition for the maximal refinement \eqref{maximal refinement necessary} is violated as
\begin{align}
    {\rm coeff}_{q^1}\left[\mathcal{I}_{M;\rm ref}[\mathbf{v};(\CD,\CT,\vec{\gamma})]\right]=-3 < \underbrace{-\dim \mathfrak{F}_{\rm IR}[M;(\CD,\CT,\vec{\gamma})]-0}_{=-2}\;.
\end{align}
It indicates that there must be extra refinements. We can find at least one more from $\CI_{M;\rm ref}[\mathbf{v};\CD_0]$.

\paragraph{$\boldsymbol{M=S^2\bigr((5,2),(5,2),(5,-2)\bigr)}$} We consider the following Dehn surgery representation
\begin{align}
    \CD\;:\;N=v1845\;,\quad \{P_I\gamma_I+Q_I\delta_I\}_{I=1}^{d=1}=\{0\a_{\rm sp}+\b_{\rm sp}\}\;.
\end{align}
Here $(\a_{\rm sp},\b_{\rm sp})$ is a basis of $H_1(\partial(v1845);\mathbb{Z})$, as chosen by \texttt{SnapPy}. Using the ideal triangulation $\CT$ of $v1845$, as provided by \texttt{SnapPy}, we checked
\begin{align}
    \dim\mathfrak{F}^{\rm ref}_{\rm IR}[M;(\CD,\CT,\vec{\gamma}=(\a_{\rm sp}))]=\dim{\rm Span}\{T_{h_1},T_{h_2}\}=2<r_{\rm ref}[M]=3
\end{align}
using the refined 3D index obtained from the following choices
\begin{align}
    \begin{split}
        &\{H_1,H_2,E_1,E_2,E_3,E_4\}=\{C_4,C_6,C_1,C_3,C_5,C_7\}\;,
        \\&\{\gamma_1,\delta_1\}=\{\a_{\rm sp},\b_{\rm sp}\}\rightarrow \vec{\gamma}=(\a_{\rm sp})\text{ and }(P_1,Q_1)=(0,1)\;,
        \\&\mathbf{a}_{\rm d}=(1,2)\;,\quad \mathbf{b}_{\rm d}=(-\frac{1}{2},-\frac{1}{2})\;.
    \end{split}
\end{align}
Consequently, we checked the injective linear map in \eqref{main conjecture}
\begin{align}
\begin{split}
    &F\;:\;\mathfrak{F}^{\rm ref}_{\rm IR}[M;(\CD,\CT,\vec{\gamma})]\;\hookrightarrow\;\mathfrak{F}^{\rm ref}_{\rm IR}[M;\CD_0]\;,
    \\&W_1T_{h_1}+W_2T_{h_2}\mapsto W_1T_{a_1}-W_1T_{a_2}-W_2T_{a_3}
\end{split}
\end{align}
from the following identification
\begin{align}
    \CI_{M;\rm ref}[\mathbf{v}=W_1T_{a_1}-W_1T_{a_2}-W_2T_{a_3};\CD_0]=\CI_{M;\rm ref}[\mathbf{v}=W_1T_{h_1}+W_2T_{h_2};(\CD,\CT,\vec{\gamma})]\;.
\end{align}
We also checked the necessary condition for the maximal refinement \eqref{maximal refinement necessary} is violated as
\begin{align}
    {\rm coeff}_{q^1}\left[\mathcal{I}_{M;\rm ref}[\mathbf{v};(\CD,\CT,\vec{\gamma})]\right]=-3 < \underbrace{-\dim \mathfrak{F}_{\rm IR}[M;(\CD,\CT,\vec{\gamma})]-0}_{=-2}\;.
\end{align}
It indicates that there must be extra refinements. We can find at least one more from $\CI_{M;\rm ref}[\mathbf{v};\CD_0]$.

\subsubsection{Torus knot complements $M=S^3\backslash T_{(P,Q)}$}
In \cite{Gang:2025ykf}, the IR phases of $T[S^3\backslash T_{(P,Q)};\mu]$ were studied, and the authors found that
\begin{align}
T[S^3\backslash T_{(P,Q)};\mu] 
=
\begin{cases}
\textrm{unitary TQFT}\;,
& |P-Q|=1\;, \\[0.4em]
\textrm{rank-0 SCFT}\;,
&
\left(Q \equiv \pm 1 \pmod{P}\right)
\oplus
\left(P \equiv \pm 1 \pmod{Q}\right)\;, \\[0.4em]
\textrm{two copies of rank-0 SCFTs}\;,
&
\left(Q \not\equiv \pm 1 \pmod{P}\right)
\land
\left(P \not\equiv \pm 1 \pmod{Q}\right)\;.
\end{cases}
\end{align}
From this, we expect
\begin{align}
\label{Number of refinement, Torus knot complement}
r_{\rm ref}[S^3\backslash T_{(P,Q)}]
=
\begin{cases}
0\;,
& |P-Q|=1\;, \\[0.4em]
1\;,
&
\left(Q \equiv \pm 1 \pmod{P}\right)
\oplus
\left(P \equiv \pm 1 \pmod{Q}\right)\;, \\[0.4em]
2\;,
&
\left(Q \not\equiv \pm 1 \pmod{P}\right)
\land
\left(P \not\equiv \pm 1 \pmod{Q}\right)\;.
\end{cases}
\end{align}
The refinements arise from the $U(1)_A$ symmetries associated with each rank-0 SCFT. 
Here, $U(1)_A$ is a Cartan subgroup of the $SO(4)$ $R$-symmetry that commutes with the $\CN=2$ supercharges. The maximally refined 3D index can be computed using the following Dehn surgery representation:
\begin{align}
    \CD_0\;:\;N=\Sigma_{0,3}\times S^1\;,\quad \{P_I\gamma_I+Q_I\delta_I\}_{I=1}^{d=2}=\{P\gamma_1-R\delta_1,Q\gamma_2+S\delta_2\}
\end{align}
where the integers $(R,S)$ are chosen so that $PS-QR=1$. The boundary one-cycles are related to
$(\gamma_3,\delta_3)$ by
\begin{align}
(\mu, \lambda) = (\gamma_3, \delta_3 + PQ\,\gamma_3)\;.
\end{align}
It translates
\begin{align}
    A\mu+B\lambda=A\gamma_3+B(\delta_3+PQ\gamma_3)=(A+PQB)\gamma_3+B\delta_3
\end{align}
so using \eqref{refined Dehn filling kernel}, we get
\begin{align}
\label{Refined index : Torus knot complement}
\begin{split}
    &\CI_{M=S^3\backslash T_{(P,Q)};\rm ref}^{A\mu+B\lambda}[\mathbf{v}=V_1T_{a_1}+V_2T_{a_2};\CD_0]
    \\&=\sum_{m_1,m_2\in\mathbb{Z}}\sum_{e_1,e_2\in\mathbb{Z}/2}\CK_{\rm ref}(P,-R;m_1,e_1;\eta^{V_1})\CK_{\rm ref}(Q,S;m_2,e_2;\eta^{V_2})
    \\&\qquad\qquad\qquad\qquad\times\CI_{\Sigma_{0,3}\times S^1}(m_1,m_2,m,e_1,e_2,e)\bigr|_{m=-2B,e=A+PQB}
    \\&=\sum_{\substack{m_1,m_2 \in \mathbb{Z} \\ m_1+m_2-2B \in 2\mathbb{Z}}}\CK_{\rm ref}(P,-R;m_1,0;\eta^{V_1})\CK_{\rm ref}(Q,S;m_2,0;\eta^{V_2})\delta_{A,-PQB}\;.
\end{split}
\end{align}
See \eqref{Magnetic sum of refined kernel} for examples of non-trivial $\sum_{m_2}\CK_{\rm ref}$ part. Thanks to the property \eqref{Refined kernel summation property}, the resulting refined 3D index is well-defined under the ambiguity
\begin{align}
    (R,S)\rightarrow (R,S)+(P,Q)\mathbb{Z}
\end{align}
and equipped with the vector space of expected dimension \eqref{Number of refinement, Torus knot complement}. Assuming $Q>P\ge2$ since $T_{(P,Q)}\sim T_{(Q,P)}$ and $\mathrm{gcd}(P,Q)=1$, we can prove the following.

\paragraph{When $\boldsymbol{|P-Q|=1}$} Let $Q=P+ 1$. Then we can choose $(R,S)=(-1,-1)$ that satisfies $PS-QR=1$. Thus, there is no refinement and we get the 3D index
\begin{align}
\begin{split}
    &\CI^{A\mu+B\lambda}_{M;\rm ref}[\CD_0]
    \\&=\sum_{\substack{m_1,m_2 \in \mathbb{Z} \\ m_1+m_2-2B \in 2\mathbb{Z}}}\CK(P,1;m_1,0)\CK(P+ 1,-1;m_2,0)\delta_{A,-P(P+ 1)B}
    \\&=\begin{cases}
        \delta_{A,-6B}&(P=2)\;,
        \\\begin{cases}
            \delta_{A,-PQB}&B\in \mathbb{Z}
            \\0&B\in\mathbb{Z}+1/2
        \end{cases}&(P\ge 3)\;.
    \end{cases}
\end{split}
\end{align}
For example, the 3D index for $M=S^3\backslash T_{(2,3)}=S^3\backslash\mathbf{3}_1$ is evaluated as
\begin{align}
    \CI_{M}(m,e)= \CI^{A \mu+
    B\lambda}_{M;\rm ref}[\CD_0]\bigr|_{A=e,B=-m/2}=\delta_{e,3m}
\end{align}
and the 3D index for $S^3\backslash T_{(3,4)}=S^3\backslash\mathbf{8}_{19}$ is evaluated as
\begin{align}
    \CI_{M}(m,e)= \CI_{M;\rm ref}[\CD_0]\bigr|_{A=e,B=-m/2}=\begin{cases}
        \delta_{e,6m}&m\in 2\mathbb{Z}\;,\\
        0&m\in 2\mathbb{Z}+1\;.
    \end{cases}
\end{align}
We checked results agree with the 3D index obtained from the gluing equations, provided by \texttt{SnapPy}, up to an orientation reversal.

\paragraph{When $\boldsymbol{\left(Q \equiv \pm 1 \pmod{P}\right)
\oplus
\left(P \equiv \pm 1 \pmod{Q}\right)}$} Let $Q=NP\pm 1$ where $N\ge 2$. Then we can choose $(R,S)=\mp (1,N)$ that satisfies $PS-QR=1$ and we get the refined 3D index as
\begin{align}
\begin{split}
    &\CI^{A\mu+B\lambda}_{M;\rm ref}[\mathbf{v}= V_2T_{a_2};\CD_0]
    \\&=\sum_{\substack{m_1,m_2 \in \mathbb{Z} \\ m_1+m_2-2B \in 2\mathbb{Z}}}\CK(P,\pm1;m_1,0)\CK_{\rm ref}(NP\pm 1,\mp N;m_2,0;\eta^{V_2})\delta_{A,-P(NP\pm 1)B}
    \\&=\begin{cases}
        \delta_{A,-PQB}\sum_{m_2\in\mathbb{Z}}\CK_{\rm ref}(Q,\mp N;m_2,0;\eta^{V_2})&(P=2)\;,
        \\\begin{cases}
            \delta_{A,-PQB}\sum_{m_2\in 2\mathbb{Z}}\CK_{\rm ref}(Q,\mp N;m_2,0;\eta^{V_2})&B\in \mathbb{Z}
            \\\delta_{A,-PQB}\sum_{m_2\in2\mathbb{Z}+1}\CK_{\rm ref}(Q,\mp N;m_2,0;\eta^{V_2})&B\in\mathbb{Z}+1/2
        \end{cases}&(P\ge 3)\;.
    \end{cases}
\end{split}
\end{align}
For example, the refined 3D index for $M=S^3\backslash T_{(2,5)}=S^3\backslash\mathbf{5}_1$ is evaluated as
\begin{align}
\begin{split}
    &\CI_{M;{\rm ref}}(\mathbf{v}=V_2T_{a_2};m,e)= \CI^{A \mu+
    B\lambda}_{M;{\rm ref}}[\mathbf{v}=V_2T_{a_2};\CD_0]\bigr|_{A=e,B=-m/2}
    \\&=\delta_{e,5m}\sum_{m_2\in\mathbb{Z}}\CK_{\rm ref}(5,-2;m_2,0;\eta^{V_2})
\end{split}
\end{align}
and the refined 3D index for $M=S^3\backslash T_{(3,5)}=S^3\backslash \mathbf{10}_{124}$ is evaluated as
\begin{align}
\begin{split}
    &\CI_{M;{\rm ref}}(\mathbf{v}=V_2T_{a_2};m,e)= \CI^{A \mu+
    B\lambda}_{M;{\rm ref}}[\mathbf{v}=V_2T_{a_2};\CD_0]\bigr|_{A=e,B=-m/2}
    \\&=\begin{cases}
        \delta_{e,\frac{15}{2}m}\sum_{m_2\in2\mathbb{Z}}\CK_{\rm ref}(5,2;m_2,0;\eta^{V_2}) & m\in 2\mathbb{Z}\;,
        \\\delta_{e,\frac{15}{2}m}\sum_{m_2\in2\mathbb{Z}+1}\CK_{\rm ref}(5,2;m_2,0;\eta^{V_2})=0&m\in2\mathbb{Z}+1\;.
    \end{cases}
\end{split}
\end{align}
We checked results agree with the 3D index obtained from the gluing equations, provided by \texttt{SnapPy}, up to an orientation reversal and background shifts.

\paragraph{When $\boldsymbol{\left(Q \not\equiv \pm 1 \pmod{P}\right)
\land
\left(P \not\equiv \pm 1 \pmod{Q}\right)}$} From conditions for $P$ and $Q$, we can choose $(R,S)$ that satisfies $R,S\ge 2$ and $PS-QR=1$. We can also prove\footnote{It can be proven using proof by contradiction.} that such $R$ and $S$ satisfy
\begin{align}
    -R\neq  \pm 1\mod P\;,\quad S\neq \pm1\mod Q\;.
\end{align}
So we get the refined 3D index with the two dimensional vector space as
\begin{align}
\begin{split}
    &\CI^{A\mu+B\lambda}_{M=S^3\backslash T_{(P,Q)};\rm ref}[\mathbf{v}=V_1 T_{a_1}+V_2 T_{a_2};\CD_0]
    \\&=\sum_{\substack{m_1,m_2 \in \mathbb{Z} \\ m_1+m_2-2B \in 2\mathbb{Z}}}\CK_{\rm ref}(P,-R;m_1,0;\eta^{V_1})\CK_{\rm ref}(Q,S;m_2,0;\eta^{V_2})\delta_{A,-PQB}\;.
\end{split}
\end{align}

\subsubsection{SOL manifolds}

A SOL manifold is a torus bundle $M=\mathbb{T}^2 \times_{\varphi} S^1$ with
$|\mathrm{Tr}(\varphi)|>2$,
\begin{align}
 \mathbb{T}^2 \times_\varphi S^1
 := \mathbb{T}^2 \times I /\sim\; ,
 \qquad (x,0)\sim (\varphi(x),1)\;,
\end{align}
where $\varphi \in SL(2,\mathbb{Z})$ is the monodromy acting on
$\mathbb{T}^2$ as an element of its mapping class group.
In \cite{Choi:2022dju}, the IR phases of the corresponding 3D gauge theories
were analyzed, and it was found that $T[M]$ flows to a unitary TQFT in the IR. Thus we expect
\begin{align}
\label{Number of refinement, Sol}
r_{\rm ref}[M]=0\;,
\qquad
\CI_M =1\;.
\end{align}
We observed that, many representations where a hyperbolic manifold $N$ with a single torus boundary is Dehn filled, there is no refinement in the UV.\footnote{No hard edge and the Dehn filling slope is an integer.} The following is an example we found, which includes a case where a refinement exists in the UV but decouples non-trivially in the IR, so that the conjecture \eqref{main conjecture} holds.

\paragraph{$\boldsymbol\varphi=\begin{pmatrix}
    \mathbf{-2}&\mathbf{-1}\\
    \mathbf{-1}&\mathbf{-1}
\end{pmatrix}$} We consider the following two Dehn surgery representations of the SOL manifold we are interested in:
\begin{align}
\begin{split}
    \CD_0\;&:\;N=m003\;,\quad \{P_I\g_I+Q_I\d_I\}^{d=1}_{I=1}=\{-\a_{\rm sp}+2\b_{\rm sp}\}\;,
    \\ \CD\;&:\;N=m293\;,\quad \{P_I\g_I+Q_I\d_I\}^{d=1}_{I=1}=\{-\a_{\rm sp}+\b_{\rm sp}\}\;.
\end{split}
\end{align}
Here, $(\a_{\rm sp},\b_{\rm sp})$ is a basis of $H_1(\partial(m003);\mathbb{Z})$ and $H_1(\partial(m293);\mathbb{Z})$, as chosen by \texttt{SnapPy}. We use the ideal triangulations $\CT_0$, $\CT$ of $m003$ and $m293$, respectively, as provided by \texttt{SnapPy}. For $\CD_0$, from the following choices
\begin{align}
    \begin{split}
        &H_1=C_1\;,
        \\&\{\gamma_1,\delta_1\}=\{\a_{\rm sp},\beta_{\rm sp}\}\rightarrow \vec{\gamma}_0=(\a_{\rm sp})\text{ and }(P_1,Q_1)=(-1,2)\;,
        \\&\mathbf{a}_{\rm d}=1\;,\quad \mathbf{b}_{\rm d}=-\frac{1}{2}
    \end{split}
\end{align}
where $\a_{\rm sp}$ is non-closable boundary 1-cycle, we get the refined 3D index
\begin{align}
\begin{split}
    &\CI_{M;\rm ref}[\mathbf{v}=W_1 T_{h_1}+VT_a;(\CD_0,\CT_0,\vec{\gamma}_0)]
    \\&=\sum_{m\in\mathbb{Z},e\in\mathbb{Z}/2}\CK_{\rm ref}(-1,2;m,e;\eta^V)\CI_{N=m003,\rm ref}(\mathbf{v}=W_1T_{h_1};m,e;\mathbf{a}_{\rm d},\mathbf{b}_{\rm d})
    \\&=1
\end{split}
\end{align}
with the two dimensional vector space
\begin{align}
\begin{split}
    &\mathfrak{F}^{\rm hard}_{\rm d}[M;(\CD_0,\CT_0)]=\mathrm{Span}\{T_{h_1}\}\;,\quad \mathfrak{F}^{\rm dehn}[M;(\CD_0,\vec{\gamma}_0)]=\mathrm{Span}\{T_a\}
    \\&\rightarrow \mathfrak{F}^{\rm ref}[M;(\CD_0,\CT_0,\vec{\gamma}_0)]=\mathfrak{F}^{\rm dehn}\oplus\mathfrak{F}^{\rm hard}_{\rm d}=\mathrm{Span}\{T_a,T_{h_1}\}\;.
\end{split}
\end{align}
We also computed the refined 3D index with alternative choices of $\{\gamma_1,\delta_1\}$ as shown in \eqref{NC cycles in m003}. In every case, all elements in the vector space $\mathfrak{F}^{\rm ref}$ are decoupled and the resulting index becomes simply $1$. For $\CD$, from the following choices
\begin{align}
    \begin{split}
        &\{E_1,E_2,E_3,E_4\}=\{C_1,C_2,C_3,C_4\}\;,
        \\&\{\gamma_1,\delta_1\}=\{\a_{\rm sp},\b_{\rm sp}\}\rightarrow \vec{\gamma}=(\a_{\rm sp})\text{ and }(P_1,Q_1)=(-1,1)
    \end{split}
\end{align}
where $\a_{\rm sp}$ is non-closable boundary 1-cycle, the vector space becomes zero dimensional
\begin{align}
    &\mathfrak{F}^{\rm hard}_{\rm d}[M;(\CD,\CT)]=\mathfrak{F}^{\rm dehn}[M;(\CD,\vec{\gamma})]=\{\mathbf{0}\}\rightarrow\mathfrak{F}^{\rm ref}[M;(\CD,\CT,\vec{\gamma})]=\mathfrak{F}^{\rm dehn}\oplus \mathfrak{F}^{\rm hard}_{\rm d}=\{\mathbf{0}\}
\end{align}
so there is no refinement. The resulting (un)refined 3D index is
\begin{align}
    \begin{split}
        &\CI_{M;\rm ref}[(\CD,\CT,\vec{\gamma})]=\CI_M
        \\&=\sum_{m\in\mathbb{Z},e\in\mathbb{Z}/2}\CK(-1,1;m,e)\CI_{N=m293}(m,e)=1\;.
    \end{split}
\end{align}
Results agree with the expectation \eqref{Number of refinement, Sol}.

\subsubsection{Graph manifolds}
We follow the definition of graph manifolds in \cite{Choi:2022dju}.
A graph manifold is a prime 3-manifold whose JSJ decomposition \cite{JACO197991,Johannson9999} contains no hyperbolic pieces, and which is neither Seifert fibered nor SOL. In this case, the manifold necessarily contains incompressible tori arising from its nontrivial JSJ decomposition. 
We adopt the notation of \cite{martelli2005dehnfillingmagic3manifold} for graph manifolds.

\paragraph{$\boldsymbol{D^2\bigr((2,1),(2,1)\bigr)\bigcup_{\begin{pmatrix}
    2&1\\
    1&1
\end{pmatrix}}D^2\bigr((2,1),(3,1)\bigr)}$}We consider the following two Dehn surgery representations of the graph manifold we are interested in:
\begin{align}
    \begin{split}
        \CD_0\;&:\;N=m036\;,\quad \{P_I\gamma_I+Q_I\delta_I\}_{I=1}^{d=1}=\{2\a_{\rm sp}+\b_{\rm sp}\}\;,
        \\\CD\;&:\;N=m052\;,\quad \{P_I\gamma_I+Q_I\delta_I\}_{I=1}^{d=1}=\{2\a_{\rm sp}+\b_{\rm sp}\}\;.
    \end{split}
\end{align}
Here $(\alpha_{\rm sp},\beta_{\rm sp})$ is a basis of $H_1(\partial(m036);\mathbb{Z})$ and $H_1(\partial(m052);\mathbb{Z})$, as chosen by \texttt{SnapPy}. We use the ideal triangulations $\CT_0$, $\CT$ of $m036$ and $m052$, respectively, as provided by \texttt{SnapPy}. See \eqref{Gluing equation : m036} for the gluing equations associated with the ideal triangulation $\CT_0$ of $m036$. The gluing equations associated with the ideal triangulation $\CT$ of $m052$ are
\begin{align}
\label{Gluing equation : m052}
\begin{split}
m052 \;:\;&C_1=Z_1+Z_2+2Z_3+Z_4\;,\quad C_2=2Z_1'+Z_1''+2Z_2'+Z_2''+2Z_3''+2Z_4'\;,
\\&C_3=Z_1+Z_1''+Z_2+Z_2''+Z_3'\;,\quad C_4=Z_3'+Z_4+2Z_4''\;,
\\&hol(\alpha_\mathrm{sp})=-Z_1-Z_1''+Z_2+Z_2''\;,\quad hol(\beta_\mathrm{sp})=Z_1'-Z_1''+Z_2'+Z_2''-Z_3'\;.
\end{split}
\end{align}
For $\CD_0$, from the following choices
\begin{align}
    \begin{split}
        &\{H_1,H_2,E_1\}=\{C_3,C_4,C_2\}\;,
        \\&\{\gamma_1,\delta_1\}=\{\a_{\rm sp},\b_{\rm sp}\}\rightarrow \vec{\gamma}_0=(\a_{\rm sp})\text{ and }(P_1,Q_1)=(2,1)\;,
        \\&\mathbf{a}_{\rm d}=(0,-1)\;,\quad \mathbf{b}_{\rm d}=(-\frac{1}{2},\frac{1}{2})
    \end{split}
\end{align}
where $\a_{\rm sp}$ is non-closable boundary 1-cycle, we get the refined 3D index
\begin{align} \label{Refined index : m036}
\begin{split}
    &\CI_{M;\rm ref}[\mathbf{v}=W_1T_{h_1}+W_2T_{h_2};(\CD_0,\CT_0,\vec{\gamma}_0)]
    \\&=\sum_{m\in\mathbb{Z},e\in\mathbb{Z}/2}\CK(2,1;m,e)\CI_{N=m036;\rm ref}(\mathbf{v}=W_1T_{h_1}+W_2T_{h_2};m,e;\mathbf{a}_{\rm d},\mathbf{b}_{\rm d})
    \\&=\sum_{l=0}^\infty \eta^{-(W_1+W_2)l}+(-2-\eta^{W_1-W_2}-\eta^{W_2-W_1})q+(-2-2\eta^{-2W_1}-2\eta^{-2W_2}+\eta^{2W_1}+\eta^{2W_2}
        \\&\qquad-\eta^{W_1-3W_2}-\eta^{W_2-3W_1}-3\eta^{-W_1-W_2}+\eta^{W_1+W_2})q^2+(4-\eta^{-4W_1}
        \\&\qquad -3\eta^{-2W_1}-\eta^{-4W_2}-3\eta^{-2W_2}+4\eta^{W_1-W_2}+4\eta^{W_2-W_1}-2\eta^{-W_1-3W_2}-2\eta^{-3W_1-W_2}
        \\&\qquad -\eta^{W_1-3W_2}-\eta^{W_2-3W_1}+2\eta^{2W_1-2W_2}+2\eta^{2W_2-2W_1}-5\eta^{-W_1-W_2}-3\eta^{-2W_1-2W_2})q^3+\dots
\end{split}
\end{align}
with the two dimensional vector space
\begin{align}
\begin{split}
    &\mathfrak{F}^{\rm hard}_{\rm d}[M;(\CD_0,\CT_0)]=\mathrm{Span}\{T_{h_1},T_{h_2}\}\;,\quad \mathfrak{F}^{\rm dehn}[M;(\CD_0,\vec{\gamma}_0)]=\{\mathbf{0}\}
    \\&\rightarrow \mathfrak{F}^{\rm ref}[M;(\CD_0,\CT_0,\vec{\gamma}_0)]=\mathfrak{F}^{\rm dehn}\oplus \mathfrak{F}^{\rm hard}_{\rm d}=\mathrm{Span}\{T_{h_1},T_{h_2}\}\;.
\end{split}
\end{align}
From the index expression, we can confirm that $\mathfrak{F}^{\rm ref}_{\rm dec}[M;(\CD_0,\CT_0,\vec{\gamma}_0)]=\{\mathbf{0}\}$ and thus
\begin{align}
    \mathfrak{F}^{\rm ref}_{\rm IR}[M;(\CD_0,\CT_0,\vec{\gamma}_0)]=\mathfrak{F}^{\rm ref}\;,\quad \dim \mathfrak{F}^{\rm ref}_{\rm IR}=2\;.
\end{align}
We expect $r_{\rm ref}[M]=2$, so the refined 3D index above is maximally refined.

For $\CD$, from the following choices
\begin{align}
    \begin{split}
        &\{H_1,H_2,E_1\}=\{C_3,C_4,C_1\}\;,
        \\&\{\gamma_1,\delta_1\}=\{\a_{\rm sp},\b_{\rm sp}\}\rightarrow \vec{\gamma}=(\a_{\rm sp})\text{ and }(P_1,Q_1)=(2,1)\;,
        \\&\mathbf{a}_{\rm d}=\mathbf{0}\;,\quad \mathbf{b}_{\rm d}=(\frac{1}{2},0)
    \end{split}
\end{align}
where $\a_{\rm sp}$ is non-closable boundary 1-cycle, we can find the two dimensional vector space
\begin{align}
\begin{split}
    &\mathfrak{F}^{\rm hard}_{\rm d}[M;(\CD,\CT)]=\mathrm{Span}\{T_{h_1},T_{h_2}\}\;,\quad \mathfrak{F}^{\rm dehn}[M;(\CD,\vec{\gamma})]=\{\mathbf{0}\}
    \\&\rightarrow \mathfrak{F}^{\rm ref}[M;(\CD,\CT,\vec{\gamma})]=\mathfrak{F}^{\rm dehn}\oplus\mathfrak{F}^{\rm hard}_{\rm d}=\mathrm{Span}\{T_{h_1},T_{h_2}\}
\end{split}
\end{align}
and compute the refined 3D index
\begin{align}
\begin{split}
    &\CI_{M;\rm ref}[\mathbf{v}=W_1T_{h_1}+W_2T_{h_2};(\CD,\CT,\vec{\gamma})]
    \\&=\sum_{m\in\mathbb{Z},e\in\mathbb{Z}/2}\CK(2,1;m,e)\CI_{N=m052;\rm ref}(\mathbf{v}=W_1T_{h_1}+W_2T_{h_2};m,e;\mathbf{a}_{\rm d},\mathbf{b}_{\rm d})\;.
\end{split}
\end{align}
We checked $\mathfrak{F}^{\rm ref}_{\rm dec}[M;(\CD,\CT,\vec{\gamma})]=\{\mathbf{0}\}$ from the index expression and thus
\begin{align}
    \mathfrak{F}^{\rm ref}_{\rm IR}[M;(\CD,\CT,\vec{\gamma})]=\mathfrak{F}^{\rm ref}\;,\quad \dim \mathfrak{F}^{\rm ref}_{\rm IR}=2\;.
\end{align}
Since
\begin{align}
    \mathfrak{F}^{\rm ref}_{\rm IR}[M;(\CD,\CT,\vec{\gamma})]=\mathfrak{F}^{\rm ref}_{\rm IR}[M;(\CD_0,\cT_0,\vec{\gamma}_0)]\;,
\end{align}
the injective linear map in \eqref{main conjecture} will be a bijective for this case. We checked the following identification
\begin{align}
    \CI_{M;\textrm{ref}}[\mathbf{v}=W_1T_{h_1}+W_2T_{h_2};(\CD_0,\CT_0,\vec{\gamma}_0)]=\CI_{M;\textrm{ref}}[\mathbf{v}=W_1T_{h_1}+W_2T_{h_2};(\CD,\CT,\vec{\gamma})]\;.
\end{align}
Note that the resulting refined 3D index is invariant under the exchange of $W_1$ and $W_2$. For the first representation $\CD_0$, we can check that the refined 3D index for $N=m036$ with background shifts also has such invariance. So we can understand it as an inheritance. However, it is not the case for the second representation $\CD$. The refined 3D index for $N=m052$ with background shifts is NOT invariant under the exchange of $W_1$ and $W_2$. Such invariance is realized non-trivially during the Dehn filling procedure. This gives supporting evidence that the refined 3D index is a good invariant.

Another notable and interesting point is that this provides an example in which the refined index regularizes the divergence of the unrefined 3D index. Upon turning off the refinements $W_1$ and $W_2$, the index begins as $\infty - 4q - 8q^2 + \ldots$.  With refinement, the divergent term $\infty$ is replaced by the sum
\begin{align}
\sum_{l=0}^{\infty} \eta^{-(W_1+W_2)l}\,,
\end{align}
whose coefficients are all finite (equal to $1$), counting Hilbert space dimensions at fixed charges of the additional symmetries. We studied some other graph manifolds and found that the same thing happens. For example, from the following Dehn surgery representation
\begin{align}
    M=D^2\bigr((2,1),(2,1)\bigr){\bigcup}_{\left(\begin{array}{cc}
        -1&1\\
        0&1
    \end{array}\right)}D^2\bigr((3,1),(3,1)\bigr)=(m154)_{[2\alpha_{\rm sp}+\beta_{\rm sp}]}\;,
\end{align}
we computed the refined 3D index $\CI_{M;\rm ref}$ starting from $\CI_{N=m154;\rm ref}$ which has the two dimensional $\mathfrak{F}^{\rm ref}=\mathfrak{F}^{\rm hard}_{\rm d}=\textrm{Span}\{T_{h_1},T_{h_2}\}$. We found
\begin{align}
    \begin{split}
        &\CI_{M;\rm ref}[\mathbf{v}=W_1T_{h_1}+W_2T_{h_2}]
        \\&=\sum_{l=0}^\infty \eta^{2W_2l}-2q+(-3+\eta^{2W_1}+\eta^{-2W_2}+\eta^{-2W_1-2W_2}-2\eta^{2W_2})q^2+(\eta^{2W_1}-\eta^{-2W_1-4W_2}
        \\&\qquad +2\eta^{-2W_2}+\eta^{-2W_1-2W_2}-\eta^{2W_1-2W_2}-4\eta^{2W_2}-2\eta^{4W_2})q^3+(4+\eta^{-2W_1}+\eta^{2W_1}
        \\&\qquad -\eta^{-2W_1-4W_2}+\eta^{-2W_2}+\eta^{-2W_1-2W_2}-\eta^{2W_1-2W_2}-5\eta^{2W_2}+\eta^{2W_1+2W_2}
        \\&\qquad -4\eta^{4W_2}-2\eta^{6W_2})q^4+\dots
    \end{split}
\end{align}
From a physics perspective, we expect that the divergence of 3D index arises  from chiral primaries with a ``bad'' R-charge assignment. In the 3D--3D correspondence, such operators can be understood as contributions from M2-branes wrapping incompressible surfaces, with geometric R-charge \(R_{\rm geo}=2g-2\) \cite{Gadde:2013wq}. For graph manifolds, incompressible tori give rise to chiral primaries with \(R_{\rm geo}=0\), which can lead to divergences. It is then natural to expect that such divergences are regulated by a \(U(1)\) flavor symmetry acting on these operators, as seen above; however, in some cases, no such \(U(1)\) symmetry is available.

\paragraph{$\boldsymbol{D^2\bigr((2,1),(2,1)\bigr)\bigcup_{\begin{pmatrix}
    0&1\\
    1&0
\end{pmatrix}}D^2\bigr((2,1),(3,1)\bigr)}$} We consider various Dehn surgery representations of the graph manifold we are interested in:
\begin{align}
    \begin{split}
        \CD_0\;&:\;N=S^3\backslash\mathbf{4}_1\;,\quad \{P_I\gamma_I+Q_I\delta_I\}_{I=1}^{d=1}=\{4\mu+\lambda\}\;,
        \\\CD\;&:\;N=m039\;,\quad \{P_I\gamma_I+Q_I\delta_I\}_{I=1}^{d=1}=\{3\a_{\rm sp}+\b_{\rm sp}\}\;,
        \\\CD'\;&:\;N=m196\;,\quad \{P_I\gamma_I+Q_I\delta_I\}_{I=1}^{d=1}=\{-\a_{\rm sp}+\b_{\rm sp}\}\;,
        \\\CD''\;&:\;N=m364\;,\quad \{P_I\gamma_I+Q_I\delta_I\}_{I=1}^{d=1}=\{\a_{\rm sp}+\b_{\rm sp}\}\;,
        \\\CD'''\;&:\;N=v2408\;,\quad \{P_I\gamma_I+Q_I\delta_I\}_{I=1}^{d=1}=\{\a_{\rm sp}+\b_{\rm sp}\}\;,
        \\\vdots
    \end{split}
\end{align}
In every case, we get the unrefined 3D index
\begin{align}
    \infty-2q^2-2q^3-4q^4-4q^5-6q^6-4q^7-8q^8-6q^9-8q^{10}-6q^{11}-10q^{12}-6q^{13}-\dots
\end{align}
Even though \(N=m364\) and \(N=v2408\) admit a one-dimensional vector space \(\mathfrak{F}^{\rm hard}_{\rm d}\) compatible with Dehn filling, it decouples during the Dehn filling procedure, resulting in the unrefined 3D index. As mentioned above, it is natural to expect that the divergent part is refined by a certain \(U(1)\) flavor symmetry; however, this symmetry does not appear to be accessible in the UV.

\subsubsection{Connected sum}
Consider a 3-manifold given by a connected sum of two 3-manifolds, neither of which is the 3-sphere. Such a manifold contains incompressible spheres, which correspond to chiral primaries with \(R_{\rm geo} = -2\) and contribute to the 3D index as \(\sum_{n=0}^\infty q^{-n}\). Hence, the unrefined index is expected to diverge severely \cite{Choi:2022dju}. We have checked that, in many representations of such manifolds where a hyperbolic manifold \(N\) with a single torus boundary is Dehn filled, there is no refinement in the UV\footnote{No hard edge and the Dehn filling slope is an integer.}, and the resulting unrefined 3D index exhibits an unregularized divergence of the form $\sum_{n=0}^{\infty }q^{-n}$.

\section{Discussion and Future directions}
Based on our work, there are many directions to pursue in future work.

\paragraph{Relevance and independence of the set $\{\CO_{E_I}\}$ under F-term relations.}
When we speak of superpotential deformations by the chiral primary operators
$\{ \CO_I=\CO_{E_I}\}_{I=1}^{\sharp_E}$ in
\[
\CT_0 := g_{\rm NZ}\cdot (T_\Delta)^{\otimes r},
\]
we are assuming that there exists an ordering
$\sigma \in S_{\sharp_E}$ such that
\[
\CO_{\sigma(1)},\CO_{\sigma(2)},\ldots,\CO_{\sigma(\sharp_E)}
\]
can be turned on successively, generating an RG flow
\[
\CT_0 := g_{\rm NZ}\cdot (T_\Delta)^{\otimes r}
\xrightarrow{\delta W=\CO_{\sigma(1)}} \CT_1
\rightarrow \cdots \rightarrow
\CT_{\sharp_E-1}
\xrightarrow{\delta W=\CO_{\sigma(\sharp_E)}} \CT_{\sharp_E},
\]
where $\CT_{\sharp_E}$ is identified with $T[N;\vec{\alpha},\vec{\gamma}]$ in
\eqref{T[N] theory}. For this sequence of RG flows to be well defined, we require that at each step
\[
\begin{split}
&\text{i)}\quad \CO_{\sigma(i)} \text{ remains a nontrivial chiral primary operator in } \CT_{i-1},\\
&\text{ii)}\quad \CO_{\sigma(i)} \text{ is relevant in } \CT_{i-1}.
\end{split}
\]
Condition i) is nontrivial because, due to the F-term relations generated by
\[
\{\CO_{\sigma(1)},\ldots,\CO_{\sigma(i-1)}\},
\]
the operator $\CO_{\sigma(i)}$ may become trivial in the chiral ring of $\CT_{i-1}$.
Condition ii) requires that
\[
R_{\rm IR}\big(\CO_{\sigma(i)}\big)<2,
\]
where $R_{\rm IR}$ denotes the infrared superconformal $R$-charge in $\CT_{i-1}$,
which can in principle be determined by $F$-maximization \cite{Jafferis:2010un}. Verifying these two conditions in general is highly nontrivial, and throughout this paper we assume that there exists an appropriate ordering of $\{\CO_I\}$ for which both conditions are satisfied.
For some exotic choices of ideal triangulation, the conditions may not be satisfied.\footnote{A potential example where this issue arises is the case 
$N = S^3 \backslash \mathbf{4}_1$ (the figure-eight knot complement) 
with an ideal triangulation consisting of six tetrahedra. 
The internal edges are
\begin{align}
\begin{split}
&C_1 = X+W + 2(R'+S'+Z''), \quad 
C_2 = R+Y+2(Z'+W'+S''), \quad C_3 = S+W+2(R''+X''+Y'), 
\\
&C_4 = R+Z+2(Y''+W''+X'), \quad  C_5 = X+Y, \quad 
C_6 = S+Z . \nonumber
\end{split}
\end{align}
All of them are easy edges. 
However, if we first deform the theory by $\CO_{C_5}$ and $\CO_{C_6}$, 
the remaining operators $\CO_{C_i}$ with $i=1,\ldots,4$ become trivial 
due to the F-term relations.}
In such cases, one should instead choose an appropriate subset of $\{\CO_I\}$ for which the above sequence is well defined and which still breaks the flavor symmetry as much as possible through superpotential deformations. Even when these conditions are not fully satisfied, the refined index in \eqref{refined index in charge basis},
computed under the assumption that they hold, still probes a reduced
flavor vector space $\mathfrak{F}^{\rm ref}_{\rm IR}[M]$, and the main
conjecture in \eqref{main conjecture} is expected to remain valid.

\paragraph{Divergent refined index for some non-hyperbolic manifolds} As we saw in the examples of Section \ref{sec : examples}, for some non-hyperbolic 3-manifolds, graph manifolds and connected sums, the 3D index diverges for every choice of $(\CD,\CT,\vec{\gamma})$. In some graph-manifold examples, the refined index cures this divergence, whereas for other manifolds it remains divergent. The pattern of divergence is not random. For instance, in certain graph-manifold cases, only the $q^0$ term diverges, while all higher-order terms are finite and independent of the choice of $(\CD,\CT,\vec{\gamma})$. We expect that there should exist a systematic way to handle these divergences and extract meaningful invariants even in such cases.
Recently, there has been progress in understanding supersymmetric partition functions of such “bad theories” with $\CN=4$ supersymmetry \cite{Giacomelli:2023zkk,Giacomelli:2024laq,Comi:2025zwu}.

\paragraph{Refinements of perturbative Chern--Simons invariants}
We can extend our analysis to the squashed 3-sphere partition function \cite{Hama:2011ea}. Via the 3D--3D correspondence, the supersymmetric partition function is related to the $SL(2,\mathbb{C})$ Chern--Simons partition function at quantized level $k=1$ \cite{Dimofte:2011ju,Cordova:2013cea,Dimofte:2014zga}. Expanding the partition function in the limit $b^2 \to 0$ (where $b$ denotes the squashing parameter), one recovers the perturbative invariants of complex Chern--Simons theory \cite{Dimofte:2012qj}, such as the Chern--Simons action and the adjoint Reidemeister torsion associated with flat connections.
By turning on additional parameters associated with extra flavor symmetries in the supersymmetric partition function, one can obtain refined perturbative Chern--Simons invariants. At present, refinement is understood only via the 3D field-theoretic construction of $T[M;\vec{\alpha}]$. Although we have provided a prescription for obtaining the refined index from refined normal surface counting, as discussed in Section \ref{sec : index from normal surface counting}, its precise physical or topological meaning remains obscure. The study of refined perturbative invariants may instead reveal a purely topological interpretation.
    %%%%%%%%%%%%%%%%%%%%%%%%%%%%%%%%%%%%%%%%%%%%%%%%%%%
	\section*{Acknowledgements}
	We are grateful to  Sebastiano Garavaglia, Sungjoon Kim, Sara Pasquetti and Kazuya Yonekura for valuable discussions related to this work.  This work was supported in part by the National Research Foundation of Korea (NRF) grant NRF-2022R1C1C1011979. We also acknowledge support from the National Research Foundation of Korea (NRF) grant RS-2024-00405629.

\newpage 

	\appendix
	
		\section{$T[SU(2)]$ theory and refined Dehn filling Kernel} \label{appendix : I-TSU(2)}
        The \(T[SU(2)]\) theory \cite{Gaiotto:2008ak} plays several important roles in the construction of the 3D gauge theory \(T[M;\vec{\alpha}]\), and consequently in the refined 3D index. First, in the 6D construction of \(T[M;\vec{\alpha}]\) described in \eqref{T[M] from 6D}, a regular codimension-two defect is realized, upon reduction on \(S^1\), by coupling to the \(T[SU(2)]\) theory. Second, the Dehn filling operation can be implemented by coupling \(T[SU(2)]\) theories, as explained in \eqref{T[M] theory}. Using this Dehn filling description, we construct the refined Dehn filling kernel \(\CK_{\rm ref}(P,Q;m,e;\eta)\), which plays a crucial role in the refined 3D index \eqref{refined index in charge basis}.

The \(T[SU(2)]\) theory is a 3D \(\CN=4\) SQED with \(N_f=2\), namely a \(U(1)\) gauge theory coupled to two fundamental hypermultiplets. In terms of \(\CN=2\) multiplets, the theory can be described as a \(U(1)\) vector multiplet at zero Chern--Simons level, coupled to five chiral multiplets, \(\Phi_0\) and \(\Phi_{1\leq i \leq 4}\), with superpotential
\[
\CW \propto \Phi_0 (\Phi_1 \Phi_2 + \Phi_3 \Phi_4)\,.
\]
Here \(\Phi_0\) arises from the adjoint chiral multiplet in the \(\CN=4\) vector multiplet, while \((\Phi_1,\Phi_2)\) and \((\Phi_3,\Phi_4)\) form two hypermultiplets.
The theory has a \(U(1)_{\rm top}\times SU(2)_{H}\) flavor symmetry that commutes with the \(\CN=4\) supersymmetry, as well as an \(SO(4)_{R}\) R-symmetry. The \(U(1)_{\rm top}\) arises as the topological symmetry associated with the \(U(1)\) gauge field, while \(SU(2)_H\) rotates the two hypermultiplets. In the IR, the \(U(1)_{\rm top}\) symmetry is enhanced to an \(SU(2)\) symmetry, which we denote by \(SU(2)_{C}\). The \(SO(4)_R\) symmetry contains a Cartan subgroup \(U(1)_R \times U(1)_A\), where \(U(1)_R\) corresponds to the superconformal R-symmetry of the \(\CN=2\) subalgebra, while \(U(1)_A\) is an axial flavor symmetry in the \(\CN=2\) description.
\begin{table}[h]
\centering
\begin{tabular}{|c|c|c|c|c|}
\hline
  & $Q$ & $T^3_{H}$ & $A$ & $R$ \\
  \hline
\(\Phi_0\) & $0$ & $0$ & $-2$ & $2$\\
\hline
  \(\Phi_1\) & $-1$& $-1$ & $+1$ & $0$  \\
\hline
  \(\Phi_2\) & $+1$ & $-1$ & $+1$ & $0$ \\
 \hline
 \(\Phi_3\) & $-1$ & $+1$  & $+1$ & $0$ \\
 \hline
  \(\Phi_4\) & $+1$ & $+1$ & $+1$  & $0$ \\
\hline
\end{tabular}
\caption{Charge assignments of the \(T[SU(2)]\) theory in terms of \(\CN=2\) chiral multiplets. \(Q\) denotes the \(U(1)\) gauge charge, and \(T^3_H\) is the Cartan generator of the \(SU(2)_H\) flavor symmetry. \(A\) denotes the charge under \(U(1)_A\), while \(R\) is a R-charge. The superconformal R-charge is given by \(R+\frac{1}{2}A\). All the chirals are neutral under the $SU(2)_C$. }
\label{table : T[SU(2)]}
\end{table}
\\
Using the charge assignments, the generalized superconformal index for the \(T[SU(2)]\) theory is given by (\(m_i \in \tfrac{1}{2}\mathbb{Z}\)):
        \begin{align}
        \begin{split}
        &\CI_{T[SU(2)]} (m_1, u_1, m_2, u_2;\eta) 
        \\
        &= \sum_{m\in m_1 + \mathbb{Z}} \oint_{|u|=1} \frac{du}{2\pi \mathbf{i} u} u^{2(m+m_2)}u_1^{2m_1} u_2^{2m} \eta^{2(m+m_2)}  \CI^{\rm fug}_\Delta \left(0,\frac{q}{\eta^2}\right) \CI^{\rm fug}_\Delta \left(-m-m_1, \frac{1}{u u_1}\right)
        \\
        &\qquad  \times \CI^{\rm fug}_\Delta \left(m-m_1, \frac{u \eta^2}{u_1}\right) \CI^{\rm fug}_\Delta \left(-m+m_1, \frac{u_1}u\right) \CI^{\rm fug}_\Delta \left(m+m_1, u u_1 \eta^2 \right),
        \end{split}
        \end{align}
where $\CI^{\rm fug}_\Delta$ is the tetrahedron index in fugacity basis \eqref{tetrahedron index -fugacity}.
The index depends on the following variables:
\begin{align}
\begin{split}
& (m_1, u_1) \;:\; \text{monopole flux and fugacity for } SU(2)_H\,, \\
& (m_2, u_2) \;:\; \text{monopole flux and fugacity for } SU(2)_C\,, \\
& \eta \;:\; \text{fugacity for } U(1)_A\,.
\end{split}
\end{align}
In the formula, we assign the \(U(1)_A\) charge as \(A+Q\) rather than \(A\). This choice does not affect the result, since the index counts only gauge-invariant operators satisfying \(Q=0\). We take into account the background Chern--Simons level \(-\tfrac{1}{2}\) for the \(T_\Delta\) theory, and include the factor \(u^{2m}\) to compensate for this offset. We use a non-superconformal R-charge in the expression. The index at the superconformal R-charge can be obtained by shifting \(\eta \to \eta\,(-q^{1/2})^{1/2}\). As consistency checks, one can verify that the index enjoys the following properties:
        \begin{align}
        \begin{split}
        &\mathbb{Z}_2 \textrm{ Weyl symmetries} 
        \\
        &: \CI_{T[SU(2)]} (m_1, u_1, m_2, u_2;\eta)  = \CI_{T[SU(2)]} (\epsilon_1 m_1, u_1^{\epsilon_1}, \epsilon_2 m_2, u_2^{\epsilon_2};\eta)\;, \quad \epsilon_{1,2}  \in \{ \pm 1\}
        \\
        &\textrm{Self-mirror property}
        \\
        &: \CI_{T[SU(2)]} \left(m_1, u_1, m_2, u_2;\eta(-q^{1/2})^{1/2}\right)  = \CI_{T[SU(2)]} \left(m_2, u_2, m_1, u_1 ;  \eta^{-1} (-q^{1/2})^{1/2}\right)\;.
        \end{split}
        \end{align}
Using the generalized index, as a first step toward constructing the refined Dehn filling kernel, we define the \(S\)-transformation kernel in the charge basis, \(\CI_S(m_1, e_1, m_2, e_2;\eta)\), as follows:
\begin{align}
\begin{split}
&\Delta(m_1, u_1)\,\CI_{T[SU(2)]}(m_1, u_1, m_2, u_2;\eta) := \sum_{e_1, e_2 \in \tfrac{1}{2}\mathbb{Z}} 
\CI_S(2m_1, e_1, 2m_2, e_2;\eta)\, u_1^{2e_1} u_2^{2e_2}\,,
\end{split}
\end{align}
where
\begin{align}
\Delta(m,u) := (-1)^{2m}\bigl(q^m + q^{-m} - u^{-2} - u^2\bigr)\,,
\end{align}
is the contribution of the \(SU(2)\) vector multiplet. More explicitly, the kernel is given by (\(m_i \in \mathbb{Z}, \; e_i \in \tfrac{1}{2}\mathbb{Z}\)):
\begin{align}
\begin{split}
\CI_S(m_1, e_1, m_2, e_2;\eta) &:= \frac{1}{2} (-1)^{m_1} \Bigl[ (q^{\frac{m_1}{2}} + q^{-\frac{m_1}{2}})\, \widetilde{\CI}_S(m_1, e_1, m_2, e_2;\eta) 
\\
&\quad - \widetilde{\CI}_S(m_1, e_1 - 1, m_2, e_2;\eta) - \widetilde{\CI}_S(m_1, e_1 + 1, m_2, e_2;\eta) \Bigr]\,,
\end{split}
\end{align}
where
\begin{align}
    \begin{split}
        &\widetilde{\CI}_S(m_1,e_1,m_2,e_2;\eta)
        \\&=(-q^{1/2})^{e_1+e_2+\frac{m_1-m_2}{2}}
        \\&\quad\times\sum_{e_\eta\in\mathbb{Z}}\bigr(\eta(-q^{1/2})^{-1}\bigr)^{e_\eta}\CI_\Delta(-e_1-\frac{m_2}{2},e_1+\frac{m_1-e_\eta}{2})\CI_\Delta(e_1+\frac{m_2}{2},e_2-\frac{m_2+e_\eta}{2})
        \\&\quad\times\sum_{t\in\mathbb{Z}}\eta^{2t}\CI_\Delta(-e_2-\frac{m_1}{2},e_2+\frac{m_1+2t}{2})\CI_\Delta(e_2+\frac{m_1}{2},e_1-\frac{m_2-2t}{2})\;.
    \end{split}
\end{align}
Here $\CI_\Delta$ is the tetrahedron index in charge basis given in \eqref{tetrahedron index}. We used the pentagon identity of tetrahedron index \eqref{Tetra pentagon} to get the simplified expression. One can check that the kernel drastically simplifies in the unrefined limit, \(\eta \to 1\), as
\begin{align}
\CI_S(m_1, e_1, m_2, e_2;\eta)\big|_{\eta \to 1}
= \frac{1}{2} \left( \delta_{m_1 - 2e_2}\, \delta_{m_2 - 2e_1}
+ \delta_{m_1 + 2e_2}\, \delta_{m_2 + 2e_1} \right)\,.
\label{I_S in unrefined limit}
\end{align}
Finally, the refined kernel is given by
\begin{align} 
\begin{split} 
&\CK_{\rm ref} (P,Q;m,e;\eta) \\ &=\sum_{m_1, \ldots, m_{\ell-1} \in \mathbb{Z}}\sum_{e_1, \ldots, e_{\ell-1} \in \mathbb{Z}/2} \CI_S (m, -e-\frac{k_1}2 m, m_1, e_1) \CI_S (m_1, -e_1-\frac{k_2}2 m_1, m_2, e_2) \\ &\qquad \qquad \qquad \qquad \qquad \;\; \ldots \CI_S (m_{\ell-2}, -e_{\ell-2}-\frac{k_{\ell-1}}2 m_{\ell-2}, m_{\ell-1}, e_{\ell-1}) \CK(k_\ell,1 ; m_{\ell-1},e_{\ell-1}) \;.\label{refined Dehn filling kernel} \end{split} 
\end{align} 
Here the integer-valued vector $\vec{k} = (k_1, \ldots, k_\ell)$ is determined from $(P,Q)$ via the relation \eqref{vec-k from P/Q}, and $\CK$ denotes the unrefined Dehn filling kernel defined in \eqref{unrefined Dehn filling kernel}. 
The vector $\vec{k}$ is not uniquely determined; however, different choices ultimately lead to the same kernel. In the fugacity basis, the Dehn filling kernel can be written as
\begin{align}
\begin{split}
&\sum_{m \in \mathbb{Z},e \in \frac{\mathbb{Z}}2}\CK_{\rm ref}(P,Q;m,e;\eta) \CI(m,e) 
\\
& = \left( \sum_{m\in \frac{\mathbb{Z}}2} \oint_{|u|=1}\frac{\Delta (m,u)du}{2\pi \mathbf{i} u} \right) \prod_{i=1}^{\ell-1 }\left( \sum_{m_i \in \frac{\mathbb{Z}}2} \oint_{|u_i|=1}\frac{\Delta (m_i, u_i)du_i}{2\pi \mathbf{i} u_i} \right) \CI^{\rm fug}(2m, u^2) 
\\
& \quad \times u^{2k_1 m}    \CI_{\rm T[SU(2)]} (m, u, m_1, u_1)  \left(  \prod_{i=1}^{\ell-2} u_i^{2k_{i+1} m_{i}}  \CI_{\rm T[SU(2)]} (m_i, u_i,  m_{i+1}, u_{i+1}) \right)  u_{\ell-1}^{2k_\ell m_{\ell-1}}\;, \label{refined Dehn filling Kernel in fugacity basis}
\end{split}
\end{align}
where we define the index in the fugacity basis as
\begin{align}
\CI^{\rm fug}(m,u) \;:=\; \sum_{e\in \frac{\mathbb{Z}}{2}} \CI(m,e)\,u^{e}\, .
\end{align}
Using \eqref{I_S in unrefined limit} and \eqref{refined Dehn filling kernel}, one can show that
\begin{align}
\begin{split}
&\CK_{\rm ref}(P,Q; m,e;\eta)\big|_{\eta\rightarrow 1} = \CK (P,Q;m,e)\;,
\\
&\CK_{\rm ref}(P, Q=\pm 1; m,e;\eta) = \CK (P, Q=\pm 1)\;.
\end{split}
\end{align}
Here $\CK (P, Q;m,e)$ denotes the (unrefined) Dehn filling kernel defined in \eqref{unrefined Dehn filling kernel}. 
The first identity can be proven by induction, while the second is immediate, since one can choose $\ell=1$ and $k_1 = P/Q$.

Except for $(P,Q)=(\pm 1,0)$, the refined kernel converges as a formal $q$-series.  For examples ($\bar{\eta} := \eta^{-1}$),
\begin{align}
\begin{split}
&\CK_{\rm ref}(1,2;0,0;\eta) = 1+(\eta^2-1)\,q
+(\bar\eta^2-2+\eta^4)\,q^2
+(\bar\eta^2-1-\eta^2+\eta^6)\,q^3 \\
&+(\bar\eta^2-2\eta^2+\eta^8)\,q^4
+(-\bar\eta^2+4-3\eta^2-\eta^4+\eta^{10})\,q^5 \\
&+(\bar\eta^4-3\bar\eta^2+5-2\eta^2-2\eta^4+\eta^{12})\,q^6 +(\bar\eta^4-6\bar\eta^2+9-\eta^2-3\eta^4-\eta^6+\eta^{14})\,q^7 \\
&+(3\bar\eta^4-9\bar\eta^2+7+4\eta^2-4\eta^4-2\eta^6+\eta^{16})\,q^8
+O(q^9)\,,
\\
&2\CK_{\rm ref}(1,2;2,0;\eta) =-1+(\bar\eta^2-\eta^2)\,q
+(1-\eta^4)\,q^2
+(1-\eta^6)\,q^3  \\
&+(-2\bar\eta^2+3-\eta^8)\,q^4+(\bar\eta^4-3\bar\eta^2+2+\eta^2-\eta^{10})\,q^5 +(\bar\eta^4-5\bar\eta^2+4+\eta^2-\eta^{12})\,q^6 \\
&+(3\bar\eta^4-6\bar\eta^2+4\eta^2-\eta^{14})\,q^7 +(2\bar\eta^4-4\bar\eta^2-2+4\eta^2+\eta^4-\eta^{16})\,q^8
+O(q^9)\,
\end{split}
\end{align}          
The refined kernel expression does not converge when $(P,Q)=(1,0)$. 
For example, for the choice $\vec{k}=(0,0)$, we find
\begin{align}
\begin{split}
&\CK_{\rm ref}(1,0;2,-2;\eta)
= \eta^{\infty}\Big[\,1
+(\eta^2-\bar\eta^2)\,q
+(-1-\bar\eta^2+\eta^2+\eta^4)\,q^2 \\
&\qquad+(\bar\eta^4-\bar\eta^2-2+\eta^4+\eta^6)\,q^3
+(\bar\eta^4-3-\eta^2+\eta^4+\eta^6+\eta^8)\,q^4
+O(q^5)\Big]\,.
\end{split}
\end{align}
As we enlarge the summation range, the overall prefactor diverges, behaving as $\eta^{N}$ with $N\to\infty$ (schematically denoted by $\eta^{\infty}$ above).
Nevertheless, because we choose $\vec{\gamma}$ to be a non-closable cycle \eqref{def:non-closable}, after performing the $(P,Q)=(1,0)$ surgery the refined index of the Dehn-filled manifold is expected to vanish.  
In other words, the would-be contribution is of the form $\eta^{\infty}\times 0=0$, and the refined index remains well-defined.

\section{Refined Index Calculator} \label{sec : Refined Index Calculator}
The computations presented in this paper can be performed easily using \texttt{Refined Index Calculator}\cite{soochang_lee_2026_19626481}, an open-source desktop application that implements the full refined 3d index pipeline. Triangulation data are loaded directly from the \texttt{SnapPy} census\cite{SnapPy}. The app is freely available at:
\begin{center}
    \url{https://physicsmasterpig.github.io/refined-index/}
\end{center}
The app consists of the following four main stages: Load, Refined Index, Dehn Filling, and Export. The detailed workflow across the whole pipeline is illustrated in Figure~\ref{fig:app workflow}.

\paragraph{Load.} The user selects a 3-manifold \(N\) together with a truncation \(N_{\max}\) on the magnetic charges \(m_i\). The application then constructs the Neumann--Zagier data \(g_{\rm NZ}\) in \eqref{NZ matrices}, searches for easy edges, and assembles the phase-space variables entering the refined index \eqref{refined index for N in charge basis}. A probe step inspects the on-disk cache and reports, for the current \(N\), whether a refined Dehn filling kernel \(\mathcal{K}_{\rm ref}\) of \eqref{refined Dehn filling kernel}, a precomputed refined index \(\mathcal{I}_{N;{\rm ref}}\), or a table of non-closable (NC) cycles \eqref{def:non-closable} is already available; cached artefacts are consumed transparently by the later stages.

\paragraph{Refined index.} The evaluation proceeds in one of three modes: a single-point \emph{Query} at specified external charges \((\vec m,\vec e)\), a \emph{Grid} sweep over a charge range, or a \emph{From cache} replay of a previously stored \(\mathcal{I}_{N;{\rm ref}}\) pack. Each component of the refinement vector \(\mathbf{v}\in\Gamma\bigl(\mathfrak{F}^{\rm hard}[N]\bigr)\) is set independently to \texttt{full}, \texttt{unrefined}, or a user-supplied \texttt{custom} weight \(W_{i}\); selecting \texttt{unrefined} on the \(i\)-th component projects that direction out of the fully refined result. The core routine then evaluates \(\mathcal{I}_{N;{\rm ref}}(\mathbf{v};\vec m,\vec e)\) as defined in \eqref{refined index for N in charge basis}.

\paragraph{Dehn filling.} Clicking \emph{Find NC cycles} launches a scan over every cusp of \(N\): for each cusp the application enumerates primitive pairs \((P,Q)\) and retains those for which \(\gamma=P_i\alpha_i+Q_i\beta_i\) is non-closable in the sense of \eqref{def:non-closable}. Each surviving cycle is immediately tested against the unrefined projection of the \(q^1\) coefficient onto the adjoint of \(SU(2)_{I}\) and flagged as \emph{marginal} whenever the criterion \eqref{Def : marginal NC} fails, which records in advance that the refined kernel \(\mathcal{K}_{\rm ref}\) must later be replaced by the unrefined kernel \(\mathcal{K}\) of \eqref{unrefined Dehn filling kernel} for that cycle. After this test, if the refined index is non-zero only for \(e_{n+I}=0\), then the marginal flag is cleared, consistent with the second clause of \eqref{Def : marginal NC}. The user then picks a cusp, an NC cycle from the flagged list, and a filling slope \((P_{i},Q_{i})\) in the original \((\alpha_i,\beta_i)\) basis; for multi-cusp manifolds each unfilled cusp is supplied either a single \emph{point} charge \((m,e)\) or a \emph{grid} range over which the result is tabulated, and per-cusp NC-cycle/slope pairs can be stacked for simultaneous multi-cusp filling. Upon submission, the application carries out the \((\alpha_{n+I},\beta_{n+I})\!\to\!(\gamma_{n+I},\delta_{n+I})\) basis change via the \(\mathrm{SL}(2,\mathbb{Z})\) transformation, transports the user slope to \((p,q)\) in the new basis, updates \(g_{\rm NZ}\) and the affine shifts \((\mathbf{a},\mathbf{b})\) of \eqref{mixing : a}--\eqref{mixing : b}, and extracts \((\mathbf{a},\mathbf{b})\) from the basis-changed \(\mathcal{I}_{N;{\rm ref}}\) on a probe grid. Each hard-edge direction \(j\) is then audited against the Dehn filling compatibility conditions \eqref{Dehn filling compatibility}: the refinement \(W_{j}\) is kept active only when \(a^{(j)}_{I}\in\mathbb{Z}\) and \(b^{(j)}_{I}\in\mathbb{Z}/2\), and the remaining directions are projected to \(W_{j}=0\) with the matching entries of \((\mathbf{a},\mathbf{b})\) zeroed so that no shift re-introduces them. The refined projection of the \(q^1\) coefficient onto the adjoint of \(SU(2)_{I}\) is then tested against \(-1\); as in the unrefined case, a non-zero index only for \(e_{n+I}=0\) flips the flag to pass. The Hirzebruch--Jung continued fraction \eqref{vec-k from P/Q} then yields \(\vec{k}=(k_{1},\ldots,k_{\ell})\), and the refined kernel \(\mathcal{K}_{\rm ref}(P,Q;m,e;\eta)\) is assembled as the chain of \(\mathcal{I}_{S}\) symplectic kernels in \eqref{refined Dehn filling kernel}.

\paragraph{Export.} Once any previous stage has populated the session, the final card collects the available artefacts---the Neumann--Zagier data \(g_{\rm NZ}\), the refined index \(\mathcal{I}_{N;{\rm ref}}(\mathbf{v};\vec m,\vec e)\) of \eqref{refined index for N in charge basis}, the NC-cycle table, and the Dehn filled refined index \(\mathcal{I}_{M;\textrm{ref}}(\mathbf{v};\vec{m},\vec{e})\) of \eqref{refined index in charge basis}---and writes them to disk in a user-selected subset of formats (LaTeX, Mathematica, JSON). 
\newpage

\begin{figure}[t]
    \centering
    \includegraphics{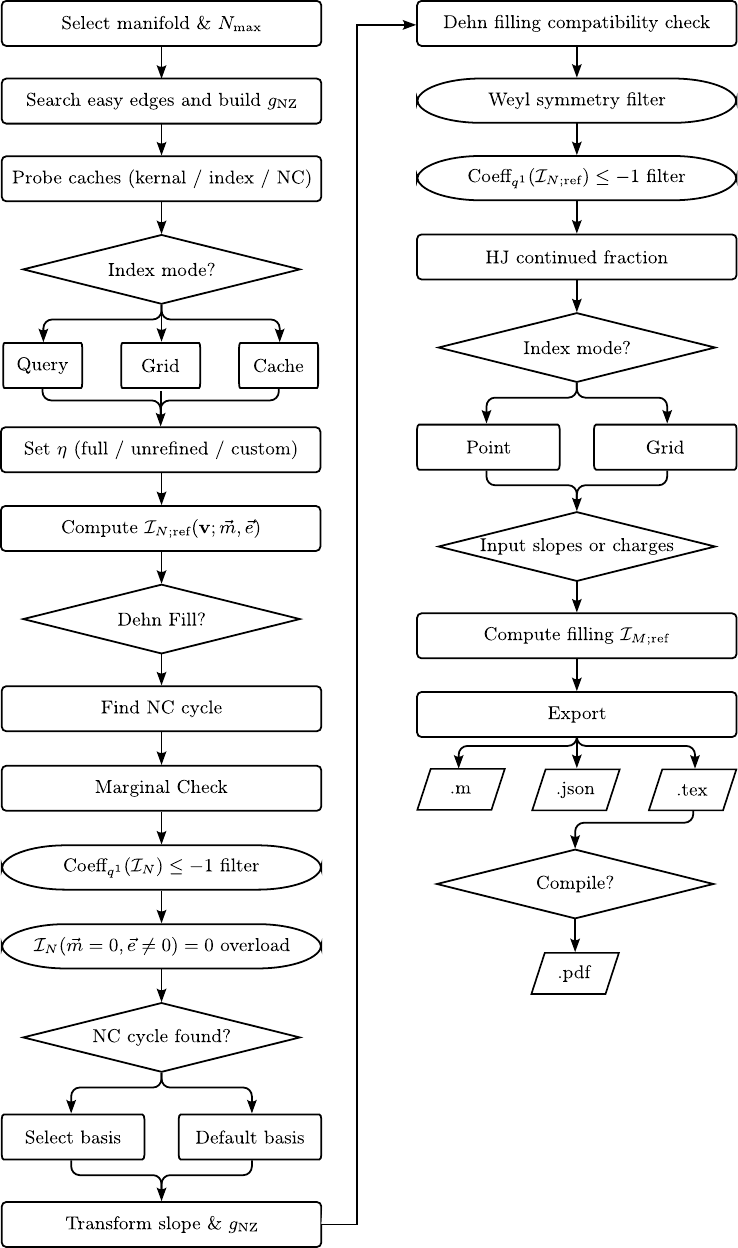}
    \caption{Refined index calculator workflow diagram. }
    \label{fig:app workflow}
\end{figure}

\clearpage

\bibliographystyle{ytphys}
\bibliography{ref}
\end{document}